\newcommand{\card}[1]{\left\lvert#1\right\rvert}
\newcommand{\No}{\textbf{No}}
\newcommand{\reals}{\mathbb{R}}
\newcommand{\lapprox}{\le}
\newcommand{\gapprox}{\ge}
\newcommand{\tuple}[1]{{\langle{#1}\rangle}}
\newcommand{\disjointunion}{\mathop{\dot{\cup}}}
\newcommand{\union}{\mathop{\cup}}
\newcommand{\identity}{\textup{id}}
\newcommand{\map}[3]{{{#1}:{#2}\rightarrow{#3}}}
\newcommand{\p}{\varphi}
\newcommand{\nums}{\mathbb{N}}
\newcommand{\gset}{\textit{g-set}}
\newcommand{\gnum}{g}
\newcommand{\eqdf}{\mathrel{:=}}
\newcommand{\PSPACE}{\textbf{\textup{PSPACE}}}
\newcommand{\AC}{\textbf{\textup{AC}}}
\newcommand{\NC}{\textbf{\textup{NC}}}
\newcommand{\LOGSPACE}{\textbf{\textup{L}}}
\newcommand{\NL}{\textbf{\textup{NL}}}
\newcommand{\PTIME}{\textbf{\textup{P}}}
\newcommand{\NP}{\textbf{\textup{NP}}}
\newcommand{\PP}{\textbf{\textup{PP}}}
\newcommand{\CeqP}{\mathord{\textbf{C}_=\PTIME}}
\newcommand{\FO}{\textbf{\textup{FO}}}
\newcommand{\PG}{\textup{PG}}
\newcommand{\EX}{\textup{Explicit}}
\newcommand{\SU}{\textup{Succinct}}
\newcommand{\PO}{\textup{PO}}
\newcommand{\HD}{\textup{HD}}
\newcommand{\AR}{\textup{AR}}
\newcommand{\co}[1]{\mathord{\textbf{\textup{co}}{#1}}}
\newcommand{\threshold}[2]{{\textup{Threshold}({#1},{#2})}}
\newcommand{\myand}{\mathrel{\wedge}}
\newcommand{\two}{{\{0,1\}}}
\newcommand{\cP}{\mathcal{P}}
\newcommand{\cN}{\mathcal{N}}
\newcommand{\cR}{\mathcal{R}}
\newcommand{\cL}{\mathcal{L}}
\newcommand{\bbG}{\textbf{PG}}
\newcommand{\Hex}{\mbox{\sc Hex}}
\newcommand{\Nim}{\mbox{\sc Nim}}
\newcommand{\Chomp}{\mbox{\sc Chomp}}
\newcommand{\Divisors}{\mbox{\sc Divisors}}
\newcommand{\Hackendot}{\mbox{\sc Hackendot}}
\newcommand{\Geo}{\mbox{\sc Geography}}
\newcommand{\NK}{\mbox{\sc Node Kayles}}
\newcommand{\Col}{\mbox{\sc Col}}
\newcommand{\Nfree}{\mbox{\sc N-Free}}
\newcommand{\GenCol}{\mbox{\sc GenCol}}
\theoremstyle{definition}
\newtheorem{definition}{Definition}[section]
\newtheorem{exercise}[definition]{Exercise}
\theoremstyle{plain}
\newtheorem{theorem}[definition]{Theorem}
\newtheorem{lemma}[definition]{Lemma}
\newtheorem{fact}[definition]{Fact}
\newtheorem{claim}[definition]{Claim}
\newtheorem{corollary}[definition]{Corollary}
\newtheorem{proposition}[definition]{Proposition}
\DeclareMathOperator{\mex}{mex}
\title{Combinatorial Game Complexity: An Introduction with Poset Games}
\author{
  Stephen A. Fenner\thanks{University of South Carolina, Computer Science and Engineering Department.  Technical report number CSE-TR-2015-001.} 
\and 
  John Rogers\thanks{DePaul University, School of Computing}
}
\begin{document}

\maketitle

%

\begin{abstract}
Poset games have been the object of mathematical study for over a century, but little has been written on the computational complexity of determining important properties of these games.  In this introduction we develop the fundamentals of combinatorial game theory and focus for the most part on poset games, of which \Nim\ is perhaps the best-known example.  We present the complexity results known to date, some discovered very recently.
\end{abstract}


\section{Introduction}
\label{sec:intro}

Combinatorial games have long been studied (see \cite{Conway:ONAG,BCG:WW}, for example) but the record of results on the complexity of questions arising from these games is rather spotty.  Our goal in this introduction is to present several results---some old, some new---addressing the complexity of the fundamental problem given an instance of a combinatorial game:
\begin{quote}
Determine which player has a winning strategy.
\end{quote}
A secondary, related problem is
\begin{quote}
Find a winning strategy for one or the other player, or just find a winning first move, if there is one.
\end{quote}
The former is a decision problem and the latter a search problem.  In some cases, the search problem clearly reduces to the decision problem, i.e., having a solution for the decision problem provides a solution to the search problem.  In other cases this is not at all clear, and it may depend on the class of games you are allowed to query.

We give formal definitions below, but to give an idea of the subject matter, we will discuss here the large class of games known as the \emph{poset games}.  One of the best known of these is \Nim, an ancient game, but given its name by Charles Bouton in 1901 \cite{Bouton:Nim}.  There are many others, among them, Hackendot, Divisors, and Chomp \cite{Conway:ONAG}.  Poset games not only provide good examples to illustrate general combinatorial game concepts, but they also are the subject of a flurry of recent results in game complexity, which is the primary focus of this article.

The rest of this section gives some basic techniques for analyzing poset games.  Section~\ref{sec:definitions} lays out the foundations of the general theory of combinatorial games, including numeric and impartial games, using poset games as examples.  The rest of the paper is devoted to computational complexity.  Section~\ref{sec:ub} gives an upper bound on the complexity of so-called ``N-free'' games, showing that they are solvable in polynomial time.  Section~\ref{sec:lb} gives lower bounds on the complexity of some games, showing they are hard for various complexity classes.  The section culminates in two recent $\PSPACE$-completeness results---one for impartial poset games, and the other for ``black-white'' poset games.  Section~\ref{sec:open} discusses some open problems.

\subsection{Poset games}

\begin{definition}\rm
A \emph{partial order} on a set $P$ (hereafter called a \emph{poset}) is a binary relation $\le$ on $P$ that is reflexive, transitive, and antisymmetric (i.e., $x\le y$ and $y\le x$ imply $x=y$).  For any $x\in P$, define $P_x \eqdf \{ y\in P \mid x\not\le y \}$.

We identify a finite poset $P$ with the corresponding \emph{poset game}: Starting with $P$, two players (Alice and Bob, say) alternate moves, Alice moving first, where a move consists of choosing any point $x$ in the remaining poset and removing all $y$ such that $x\le y$, leaving $P_x$ remaining.  Such a move we call \emph{playing $x$}.  The first player unable to move (because the poset is empty) loses.\footnote{Games can be played on some infinite posets as well, provided every possible sequence of moves is finite.  This is true if and only if the poset is a well-quasi-order (see, e.g., Kruskal \cite{Kruskal:wqo}).}
\end{definition}

Poset games are \emph{impartial}, which means that, at any point in the play, the set of legal moves is the same for either player.  There is a rich theory of impartial games, and we cover it in Section~\ref{sec:impartial}.

In an impartial game, the only meaningful distinction between players is who plays first (and we have named her Alice).  Since every play of a poset game has only finitely many moves, one of the two players (but clearly not both!) must have a winning strategy.  We say that a poset $P$ is an \emph{$\exists$-game} (or \emph{winning position}) if the first player has a winning strategy, and $P$ is a \emph{$\forall$-game} (or \emph{losing position}) if the second player has a winning strategy.  In the combinatorial game theory literature, these are often called $\mathcal N$-games (``Next player win'') and $\mathcal P$-games (``Previous player win''), respectively.  We get the following concise inductive definition for any poset $P$:
\begin{verse}
$P$ is an $\exists$-game iff there exists $x\in P$ such that $P_x$ is a $\forall$-game.\\
$P$ is a $\forall$-game iff $P$ is not an $\exists$-game (iff, for all $x\in P$, \ $P_x$ is an $\exists$-game).
\end{verse}
We call the distinction of a game being a $\forall$-game versus an $\exists$-game the \emph{outcome} of the game.

There are at least two natural ways of combining two posets to produce a third.

\begin{definition}\label{def:series-parallel}\rm
For posets $P = \tuple{P,\le_P}$ and $Q=\tuple{Q,\le_Q}$,
\begin{itemize}
\item
define $P+Q$ (the \emph{parallel union of $P$ and $Q$}) to be the disjoint union of $P$ and $Q$, where all points in $P$ are incomparable with all points in $Q$:
\[ P + Q \eqdf \tuple{P \disjointunion Q, \le}\;, \]
where $\le \eqdf \le_P \disjointunion \le_Q$.
\item
Define $P/Q$ (or $\frac{P}{Q}$---the \emph{series union of $P$ over $Q$}) to be the disjoint union of $P$ and $Q$ where all points in $P$ lie above (i.e., are $\ge$ to) all points in $Q$:
\[ \frac{P}{Q} \eqdf \tuple{P \disjointunion Q, \le}\;, \]
where $\mathord{\le} \eqdf \mathord{\le_P} \disjointunion \mathord{\le_Q} \disjointunion (Q\times P)$.
\end{itemize}
\end{definition}

Note that $+$ is commutative and associative, and that $/$ is associative but not commutative.  Using these two operations, let's build some simple posets.  Let $C_1$ be the one-element poset.  For any $n\in\nums$, let
\begin{enumerate}
\item
$C_n \eqdf \underbrace{C_1/C_1/\ldots/C_1}_n$ is the chain of $n$ points (totally ordered).  This is also called a \emph{NIM stack}.
\item
$A_n \eqdf \underbrace{C_1 + C_1 + \cdots + C_1}_n$ is the antichain of $n$ pairwise incomparable points.
\item
$V_n \eqdf A_n/C_1$ is the $n$-antichain with a common lower bound.
\item
$\Lambda_n \eqdf C_1/A_n$ is the $n$-antichain with a common upper bound.
\item
$\Diamond_n \eqdf C_1/A_n/C_1$ is the $n$-antichain with common upper and lower bounds.
\end{enumerate}
Some examples are shown in Figure~\ref{fig:sample-games}.
\begin{figure}
\begin{center}
\input{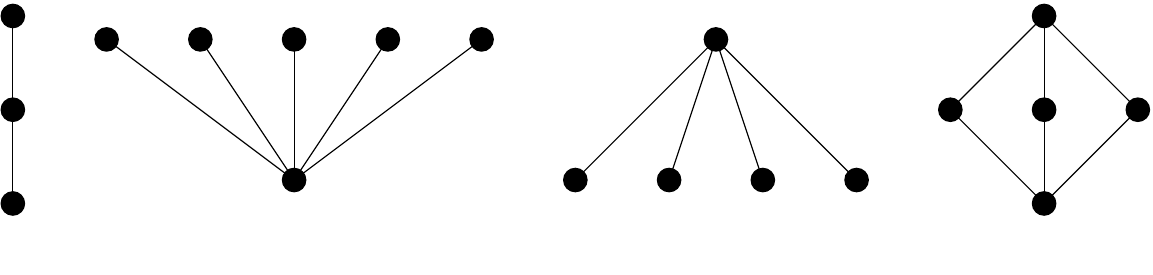_t}
\caption{Some simple posets constructed from individual points via parallel and series union.}\label{fig:sample-games}
\end{center}
\end{figure}

\begin{exercise}
Find a simple way, given $m$ and $n$, to determine whether $A_m/A_n$ is an $\exists$-game or a $\forall$-game.
\end{exercise}

\begin{exercise}\label{ex:OR-gate}
Show that $P/Q$ is an $\exists$-game if and only if either $P$ is an $\exists$-game or $Q$ is an $\exists$-game.
\end{exercise}

\subsubsection{More examples}
\label{sec:poset-game-examples}

The best-known poset game is \Nim, an ancient game first formally described and ``solved'' by C. L. Bouton in 1902 \cite{Bouton:Nim}.  Here, the poset is a union of disjoint chains, i.e., of the form $C_{n_1} + C_{n_2} + \cdots + C_{n_k}$ for some positive integers $n_1,\ldots,n_k$.  A move then consists of choosing a point in one of the chains and remove that point and everything above it.

Other families of poset games include
\begin{description}
\item[\Chomp,] introduced in 1974 by D. Gale \cite{Gale:Chomp}, which, in its finite form, is represented by a rectangular arrangement of squares with the leftmost square in the bottom row removed.  This is a poset with two minimal elements (first square on the second row, second square on bottom row).  Every element in a row is greater than all of the elements to the left and below so playing an element removes it and all elements to the right and above.
\item[\Hackendot,] attributed to von Newmann, where the poset is a forest of upside-down trees (roots at the top).  \Hackendot\ was solved in 1980 by \'Ulehla \cite{Ulehla:Hackendot}.
\item[\Divisors,] introduced by F. Schuh \cite{Schuh:divisors}, the poset is the set of all positive divisors (except $1$) of a fixed integer $n$, partially ordered by divisibility.  \Divisors\ is a multidimensional generalization of \Chomp.  \Chomp\ occurs as the special case where $n = p^mq^n$ for distinct primes $p,q$.
\end{description}

\subsection{Dual symmetry}

Some poset games can be determined (as $\exists$-games or $\forall$-games just by inspection).  For example, suppose a poset $P$ has some kind of dual symmetry, that is, there is an order-preserving map $\map{\p}{P}{P}$ such that $\p\circ\p = \identity$.

\begin{fact}\label{fact:dual-symmetry}
Let $P$ be a poset and let $\map{\p}{P}{P}$ be such that $\p\circ\p = \identity_P$ and $x\le y \iff \p(x) \le \p(y)$ for all $x,y\in P$.
\begin{itemize}
\item
If $\p$ has no fixed points, then $P$ is a $\forall$-game.
\item
If $\p$ has a minimum fixed point (minimum among the set of fixed points), then $P$ is an $\exists$-game.
\end{itemize}
\end{fact}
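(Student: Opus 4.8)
The plan is to win the no-fixed-point case for the second player by a \emph{mirroring strategy} (the classical Tweedledum--Tweedledee argument): whenever the first player plays a point $x$, the second player replies by playing $\varphi(x)$. The key preliminary observation is that, when $\varphi$ has no fixed point, $x$ and $\varphi(x)$ are always incomparable. Indeed, if $x\le\varphi(x)$ then applying $\varphi$ (which preserves $\le$ in both directions) gives $\varphi(x)\le\varphi(\varphi(x))=x$, so $x=\varphi(x)$ by antisymmetry, a contradiction; the case $\varphi(x)\le x$ is symmetric. Hence playing $x$ never removes $\varphi(x)$, so the reply $\varphi(x)$ is always a legal move.

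Next I would prove, by induction on $\card{Q}$, the following slight strengthening of the first bullet: every sub-poset $Q\subseteq P$ with $\varphi(Q)=Q$ is a $\forall$-game. The empty poset is a $\forall$-game. If $Q\neq\emptyset$ and the first player plays $x\in Q$, then by the observation $\varphi(x)$ survives and lies in $Q_x$; after the reply $\varphi(x)$ the remaining poset is $Q$ with the two principal up-sets $\{y\in Q: x\le y\}$ and $\{y\in Q: \varphi(x)\le y\}$ deleted. Since $\varphi$ is an order-automorphism it carries the first of these sets onto the second and vice versa, so their union is $\varphi$-invariant, and therefore so is the poset that remains; moreover that poset is strictly smaller than $Q$, having lost at least the two distinct points $x$ and $\varphi(x)$. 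By the induction hypothesis it is a $\forall$-game, so $Q_x$ is an $\exists$-game. As $x\in Q$ was arbitrary, $Q$ is a $\forall$-game. Taking $Q=P$ gives the first bullet.

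For the second bullet, let $m$ be the minimum fixed point of $\varphi$, and let the first player open by playing $m$, leaving $P_m$. Because $\varphi(m)=m$ and $\varphi$ is an order-automorphism, the removed set $\{y: m\le y\}$ is $\varphi$-invariant, hence $P_m$ is $\varphi$-invariant; and $\varphi$ has \emph{no} fixed point in $P_m$, since any fixed point $z$ of $\varphi$ satisfies $m\le z$ by minimality of $m$ and was therefore removed. Thus $P_m$ falls under the strengthened first bullet and is a $\forall$-game, so the second player (who moves first in $P_m$) loses, i.e.\ $P$ is an $\exists$-game. The only slightly delicate points are the bookkeeping that mirroring preserves $\varphi$-invariance and that principal up-sets restrict correctly to the shrunken poset at each step; I do not anticipate any real obstacle beyond that routine care.
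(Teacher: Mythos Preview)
Your proof is correct and follows essentially the same approach as the paper: the mirroring strategy $x\mapsto\varphi(x)$ for the no-fixed-point case, and playing the minimum fixed point first to reduce to that case. You simply supply more detail than the paper's two-sentence sketch, in particular the verification that $x$ and $\varphi(x)$ are incomparable (so the reply is legal) and the explicit induction establishing that $\varphi$-invariance is preserved after each pair of mirrored moves.
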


\begin{proof}
If $\p$ has no fixed points, then Bob can answer any $x$ played by Alice by playing $\p(x)$.  If $\p$ has a least fixed point $z$, then Alice plays $z$ on her first move, leaving $P_z$, which is symmetric with no fixed points, and thus a $\forall$-game.
\end{proof}

For example, the poset below is symmetric with a unique fixed point $x$, which Alice can win by playing on her first move:
\begin{center}
\begin{picture}(0,0)%
\includegraphics{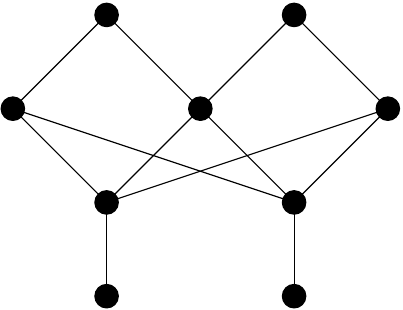}%
\end{picture}%
\setlength{\unitlength}{2960sp}%
\begingroup\makeatletter\ifx\SetFigFont\undefined%
\gdef\SetFigFont#1#2#3#4#5{%
  \reset@font\fontsize{#1}{#2pt}%
  \fontfamily{#3}\fontseries{#4}\fontshape{#5}%
  \selectfont}%
\fi\endgroup%
\begin{picture}(2566,1966)(2318,-3144)
\put(3751,-1936){\makebox(0,0)[lb]{\smash{{\SetFigFont{9}{10.8}{\rmdefault}{\mddefault}{\updefault}$x$}}}}
\end{picture}%

\end{center}
After we introduce game equivalence, we can give a partial generalization of Fact~\ref{fact:dual-symmetry} (Lemma~\ref{lem:dual-symmetry} below) that has been useful in determining the outcomes of several games.

\subsection{Strategy stealing}

Another class of posets that are easy to determine by inspection are those with an \emph{articulation point}, i.e., a point that is comparable with every other point in the poset.  For example, minimum and maximum points of $P$ are articulation points.

\begin{fact}
If a poset $P$ contains an articulation point, then $P$ is an $\exists$-game.
\end{fact}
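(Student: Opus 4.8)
The plan is a \emph{strategy-stealing} argument. Fix an articulation point $a\in P$ (in particular $P\neq\emptyset$), and recall that every point of $P$ is comparable with $a$. Imagine Alice opening the game by playing $a$; this leaves the position $P_a=\{y\in P\mid a\not\le y\}$, which, because each $y\neq a$ is comparable with $a$, is precisely the set of points lying strictly below $a$. I would show that, regardless of whether $P_a$ turns out to be a $\forall$-game or an $\exists$-game, Alice has a winning first move, so that $P$ is an $\exists$-game by the inductive characterization of game outcomes.

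The easy case is when $P_a$ is a $\forall$-game: then Alice plays $a$, handing Bob the losing position $P_a$, and she wins.

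The interesting case is when $P_a$ is an $\exists$-game. Then there is some $x\in P_a$ for which $(P_a)_x$ is a $\forall$-game. Since $x\in P_a$ we have $a\not\le x$ and $x\neq a$; as $x$ is comparable with $a$, this forces $x<a$. Alice then ``steals'' this move: she plays $x$ in the full poset $P$ on her first turn. The crux of the argument is the identity $P_x=(P_a)_x$, i.e.\ that playing $x$ in $P$ produces exactly the same position as playing $x$ inside $P_a$. This follows from the inclusion $P_x\subseteq P_a$ (if $a\le y$ then $x\le y$ by transitivity, so $x\not\le y$ forces $a\not\le y$), which gives $(P_a)_x=\{y\in P_a\mid x\not\le y\}=\{y\in P\mid x\not\le y\}=P_x$. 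Hence after Alice plays $x$, Bob faces the $\forall$-game $P_x$, and Alice wins.

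The only real step is verifying $P_x=(P_a)_x$ for the stolen move $x<a$; once that is in hand, both cases reduce immediately to the inductive definition of $\exists$- and $\forall$-games. I do not anticipate a genuine obstacle, since the comparability of $a$ with all of $P$ makes both the description of $P_a$ and the inclusion $P_x\subseteq P_a$ transparent.
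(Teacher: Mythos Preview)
Your proposal is correct and follows essentially the same strategy-stealing argument as the paper: play the articulation point, and if that is not already winning, steal Bob's winning response, which must lie strictly below the articulation point. The only difference is cosmetic (the paper names the articulation point $x$ and the stolen move $y$, whereas you use $a$ and $x$) and that you spell out the identity $P_x=(P_a)_x$ explicitly, which the paper takes for granted.
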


\begin{proof}
Let $x$ be some articulation point of $P$.  If $x$ is a winning first move for Alice, then we are done.  If $x$ is a losing first move for Alice, then there must be some winning response $y$ for Bob if Alice first plays $x$.  But if Alice plays $x$, then all points $\ge x$ are now gone, and so we have $y<x$.  This means that the game after Bob moves is $P_y$, which is a $\forall$-game by assumption.  But then, Alice could have played $y$ instead on her first move, leaving the $\forall$-game $P_y$ for Bob, and thus winning.
\end{proof}

We call this ``strategy stealing'' because Alice steals Bob's winning strategy.  The interesting thing about this proof is how nonconstructive it is.  It shows that Alice has a winning first move, but gives virtually no information about what that first move could be.  All we know is that the winning first play must be $\le x$.  If $x$ is a maximum point of $P$, then the proof gives no information whatsoever about Alice's winning first move.  Several poset games, including \Chomp, have initial posets with maximum points, so we know that they are $\exists$-games.  But determining a winning first move for Alice in \Chomp\ appears quite difficult, and no fast algorithm is known.  This suggests that, in the case of \Chomp\ at least, the search problem (finding a winning first move) is apparently difficult, whereas the decision problem ($\exists$-game or $\forall$-game?) is trivial.  The search versus decision issue is discussed further in Section~\ref{sec:representations}, below.

\begin{exercise}
Show that the winning first moves in any poset form an antichain.
\end{exercise}

\subsection{Black-white poset games}

Many interesting games are not impartial because the legal moves differ for the players.  In chess, for example, one player can only move white pieces and the other only black pieces.  We will informally call a game ``black-white'' when each player is assigned a color (black or white) and can only make moves corresponding to their color.\footnote{A different, popular color combination is red-blue.  We use black-white so that illustrations are faithfully rendered on a black-and-white printer.}  Many impartial games have natural black-white versions.  Here, then, is a black-white version of a poset game:

\begin{definition}\label{def:bw-game}
A \emph{black-white poset game} consists of finite poset $P$, each of whose points are colored either black or white.  The same rules apply to black-white poset games as to (impartial) poset games, except that one player (Black) can only play black points and the other player (White) can only play white points.  (All points above a played point are still removed, regardless of color.)
\end{definition}

One could generalize this definition by allowing a third color, grey, say, where grey points can be played by either player.  We will not pursue this idea further.  Other ``colored'' games include red-blue Hackenbush and red-green-blue Hackenbush \cite{BCG:WW}.

Combinatorial games that are not impartial are known as \emph{partisan}.  In partisan games, we must make a distinction between the two players beyond who moves first.  Generically, these players are called Left and Right.  There is a surprisingly robust general theory of combinatorial games, both impartial and partisan, developed in $\cite{BCG:WW,Conway:ONAG}$, and we give the basics of this theory in the next section.



%
%


\section{Combinatorial game theory basics}
\label{sec:definitions}

In this section we give some relevant definitions and a few facts from the general theory of combinatorial games.  We give enough of the theory to understand later results.  Thorough treatments of this material, with lots of examples, can be found in \cite{BCG:WW,Conway:ONAG} as well as other sources, e.g., \cite{FHS:combinatorial-games} and the recent book by Siegel \cite{Siegel:combinatorial-games}.  Our terminology and notation vary a little bit from \cite{BCG:WW,Conway:ONAG}, but the concepts are the same.  When we say, ``game,'' we always mean what is commonly referred to as a \emph{combinatorial game}, i.e., a game between two players, say, Left and Right, alternating moves with perfect information, where the first player unable to move loses (and the other wins).  In their fullest generality, these games can be defined abstractly by what options each player has to move, given any position in the game.

\subsection{Notation}

We let $\nums$ denote the set $\{0,1,2,\ldots,\}$ of natural numbers.  We let $\card X$ denote the cardinality of a finite set $X$.  We use the relation ``$\eqdf$'' to mean ``equals by definition.''  We extend the definition of an operator on games to an operator on sets of games in the customary way; for example, if $*$ is a binary operation on games, and $G$ and $H$ are sets of games, then $G*H \eqdf \{g*h \mid g\in G \myand h\in H\}$, and if $g$ is a game, then $g*H \eqdf \{g\}*H$, and so on.

\subsection{Basic definitions}

\begin{definition}
A \emph{game} is an ordered pair $G = (G^L,G^R)$, where $G^L$ and $G^R$ are sets of games.  The elements of $G^L$ (respectively, $G^R$) are the \emph{left options} (respectively, \emph{right options}) of $G$.  An \emph{option} of $G$ is either a left option or a right option of $G$.
\end{definition}

It is customary to write $\{G^L|G^R\}$ or $\{\ell_1,\ell_2,\ldots|r_1,r_2,\ldots\}$ rather than $(G^L,G^R)$, where $G^L = \{\ell_1,\ell_2,\ldots\}$ and $G^R = \{r_1,r_2,\ldots\}$.  We will do the same.

For this and the following inductive definitions to make sense, we tacitly assume that the ``option of'' relation is well-founded, i.e., there is no infinite sequence of games $g_1,g_2,\ldots$ where $g_{i+1}$ is an option of $g_i$ for all $i$.\footnote{This follows from the Foundation Axiom of set theory, provided ordered pairs are implemented in some standard way, e.g., $(x,y) \eqdf \{\{x\},\{x,y\}\}$ for all sets $x$ and $y$.}  A \emph{position} of a game $G$ is any game reachable by making a finite series of moves starting with $G$ (the moves need not alternate left-right).  Formally,

\begin{definition}
A \emph{position} of a game $G$ is either $G$ itself or a position of some option of $G$.  We say that $G$ is \emph{finite} iff $G$ has a finite number of positions.\footnote{Finite games are sometimes called \emph{short games}; see \cite{Siegel:combinatorial-games}.}
\end{definition}

Starting with a game $G$, we imagine two players, Left and Right, alternating moves as follows: the initial position is $G$; given the current position $P$ of $G$ (also a game), the player whose turn it is chooses one of her or his options of $P$ (left options for Left; right options for Right), and this option becomes the new game position.  The first player faced with an empty set of options loses.  The sequence of positions obtained this way is a \emph{play} of the game $G$.  Our well-foundedness assumption implies that every play is finite, and so there must be a winning strategy for one or the other player.  We classify games by who wins (which may depend on who moves first) when the players play optimally.  This is our broadest and most basic classification.  Before giving it, we first introduce the ``mirror image'' of a game $G$: define $-G$ to be the game where all left options and right options are swapped at every position, as if the players switched places.  Formally,

\begin{definition}
For any game $G$, define $-G \eqdf \{{-G^R}|{-G^L}\}$.
\end{definition}

It is a good warm-up exercise to prove---inductively, of course---that $-(-G) = G$ for every game $G$.  For impartial games, e.g., poset games, the ``$-$'' operator has no effect; for black-white poset games, this is tantamount to swapping the color of each point in the poset.

We can consider the following definition to be the most fundamental property of a game:

\begin{definition}\label{def:ge-zero}
Let $G$ be a game.  We say that $G\gapprox 0$ (or $0\lapprox G$) iff there is no right option $g^R$ of $G$ such that $-g^R \gapprox 0$.  We will say $G\lapprox 0$ to mean that $-G\gapprox 0$.
\end{definition}

So $G\gapprox 0$ if and only if no right option $g^R$ of $G$ satisfies $g^R \lapprox 0$.  Symmetrically, $G\lapprox 0$ if and only if no left option $g^L$ of $G$ satisfies $g^L \gapprox 0$.  In terms of strategies, $G\gapprox 0$ means that $G$ is a \emph{first-move loss for Right} or a \emph{second-move win for Left}.  If Right has to move first in $G$, then Left can win.  Symmetrically, $G\lapprox 0$ means that $G$ is a \emph{first-move loss for Left} or a \emph{second-move win for Right}.

The $\lapprox$ notation suggests that a partial order (or at least, a preorder) on games is lurking somewhere.  This is true, and we develop it below.


Definition~\ref{def:ge-zero} allows us to partition all games into four broad categories.

\begin{definition}\label{def:outcome}
Let $G$ be a game.
\begin{itemize}
\item
$G$ is a \emph{zero game} (or a \emph{first-move loss}, or $\cP$-game) iff $G\lapprox 0$ and $G\gapprox 0$.
\item
$G$ is \emph{positive} (or a \emph{win for Left}, or $\cL$-game) iff $G\gapprox 0$ and $G\not\lapprox 0$.
\item
$G$ is \emph{negative} (or a \emph{win for Right}, or $\cR$-game) iff $G\lapprox 0$ and $G\not\gapprox 0$.
\item
$G$ is \emph{fuzzy} (or a \emph{first-move win}, or $\cN$-game) iff $G\not\lapprox 0$ and $G\not\gapprox 0$.
\end{itemize}
These four categories, $\cP$ (for previous player win), $\cL$ (for Left win), $\cR$ (for Right win), and $\cN$ (for next player win), partition the class of all games.  The unique category to which $G$ belongs is called the \emph{outcome} of $G$, written $o(G)$.
\end{definition}

For example, the simplest game is the \emph{endgame} $0 \eqdf \{|\}$ with no options, which is a zero game ($o(0) = \cP$).  The game $1 \eqdf \{0|\}$ is positive ($o(1) = \cL$), and the game $-1 \eqdf \{|0\}$ is negative $o(-1) = \cR$, while the game $* \eqdf \{0|0\}$ is fuzzy ($o(*) = \cN$).

\subsection{Game arithmetic, equivalence, and ordering}

Games can be added, and this is a fundamental construction on games.  The sum $G+H$ of two games $G$ and $H$ is the game where, on each move, a player may decide in which of the two games to play.  Formally:

\begin{definition}
Let $G$ and $H$ be games.  We define
\[ G+H \eqdf \{(G^L+H)\cup (G+H^L)\mid (G^R+H)\cup (G+H^R)\}\;. \]
\end{definition}

In Section~\ref{sec:intro} we used the $+$ operator for the parallel union of posets.  Observe that this corresponds exactly to the $+$ operator on the corresponding games, i.e., the game corresponding to the parallel union of posets $P$ and $Q$ is the game-theoretic $+$ applied to the corresponding poset \emph{games} $P$ and $Q$.

We write $G-H$ as shorthand for $G+(-H)$.  One can easily show by induction that $+$ is commutative and associative when applied to games, and the endgame $0$ is the identity under $+$.  This makes the class of all games into a commutative monoid (albeit a proper class).  One can also show for all games $G$ and $H$ that $-(G+H) = -G-H$.  Furthermore, if $G\ge 0$ and $H\ge 0$, then $G+H\ge 0$.  It is \emph{not} the case, however, that $G-G = 0$ for all $G$, although $G-G$ is always a zero game.  These easy results are important enough that we state and prove them formally.

\begin{lemma}\label{lem:plus-minus}
For any games $G$ and $H$,
\begin{enumerate}
\item $G-G$ is a zero game.
\item Suppose $G\gapprox 0$.  Then $H\gapprox 0$ implies $G+H\gapprox 0$, and $H\not\lapprox 0$ implies $G+H \not\lapprox 0$.
\item Suppose $G\lapprox 0$.  Then $H\lapprox 0$ implies $G+H\lapprox 0$, and $H\not\gapprox 0$ implies $G+H \not\gapprox 0$.
\item
$-(G+H) = -G-H$.
\end{enumerate}
\end{lemma}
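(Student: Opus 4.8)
The plan is to establish the four parts in the order (4), (1), (2), (3); each is proved by induction on the well-founded ``position of'' relation, so in every case I may assume the statement already holds whenever $G$ or $H$ is replaced by one of its options. I would do (4) first since it is needed as a tool for (1) and (3). For (4) the work is just to unwind both sides: from the definition of $+$, the game $-(G+H)$ has left option set $\{-(G^R+H)\} \cup \{-(G+H^R)\}$ and right option set $\{-(G^L+H)\}\cup\{-(G+H^L)\}$ (sets ranging over the appropriate options), and the induction hypothesis rewrites $-(G^R+H)$ as $-G^R-H$, $-(G+H^R)$ as $-G-H^R$, and symmetrically for the left part. On the other side, $-G-H = (-G)+(-H)$, and since the left options of $-G$ are the games $-G^R$ and the left options of $-H$ are the games $-H^R$, expanding the definition of $+$ gives literally the same description. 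So $-(G+H)$ and $-G-H$ are the same game; this is routine bookkeeping.

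For part (1), using part (4) together with commutativity of $+$ gives $-(G-G) = (-G)+G = G-G$, so $G-G\ge 0$ and $G-G\le 0$ are equivalent and it suffices to prove $G-G\ge 0$, i.e.\ that no right option of $G+(-G)$ is $\le 0$. The right options are exactly the games $G^R-G$ and $G-G^L$. The game $G^L-G^L$ is a left option of $G-G^L$ and is a zero game by the induction hypothesis applied to the option $G^L$, hence $\ge 0$, so $G-G^L\not\le 0$; symmetrically $G^R-G^R$ is a left option of $G^R-G$ and is a zero game by induction, so $G^R-G\not\le 0$. Hence $G-G\ge 0$, and therefore $G-G$ is a zero game.

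Part (2) is where the actual work lies, and the point I would flag as the main obstacle is that the two implications in (2) cannot be proved in isolation: checking $G+H\ge 0$ uses, on the right option $G+H^R$, the \emph{second} implication at a smaller instance, while checking the second implication uses the \emph{first} on the left option $G+H^L$. So I would prove, simultaneously by induction on $G$ and $H$, both statements (2a) ``$G\ge 0$ and $H\ge 0$ imply $G+H\ge 0$'' and (2b) ``$G\ge 0$ and $H\not\le 0$ imply $G+H\not\le 0$''. For (2a): the right options of $G+H$ are $G^R+H$ and $G+H^R$; from $G\ge 0$ we get $G^R\not\le 0$ and from $H\ge 0$ we get $H^R\not\le 0$, so applying (2b) at the smaller instances — in the first case after using commutativity to write $G^R+H = H+G^R$ — shows that neither right option is $\le 0$, whence $G+H\ge 0$. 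For (2b): $H\not\le 0$ yields a left option $H^L\ge 0$, and (2a) at the smaller instance $(G,H^L)$ gives $G+H^L\ge 0$, a left option of $G+H$ witnessing $G+H\not\le 0$.

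Finally, part (3) follows from part (2) with no new argument: substituting $-G$ and $-H$ for $G$ and $H$, and using $-(-X)=X$ together with part (4) (so that $X+Y\ge 0$ becomes $-X-Y\le 0$), turns ``$\ge 0$'' into ``$\le 0$'' and ``$\not\le 0$'' into ``$\not\ge 0$'' throughout the two implications of (2), giving exactly the two implications of (3).
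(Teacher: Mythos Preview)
Your proposal is correct and follows essentially the same inductive approach as the paper: part~(4) is the same bookkeeping computation, part~(2) is the same simultaneous induction (the paper phrases it as a strategy argument but the structure is identical), and part~(3) is derived from~(2) via~(4) in both. The only cosmetic differences are that you reorder to prove~(4) first and that your proof of~(1) unwinds the definitions symbolically (exhibiting $G^L-G^L$ and $G^R-G^R$ as the witnessing left options) rather than invoking the ``mirroring strategy'' language the paper uses---but this is the same argument written formally.
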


\begin{proof}
For (1.): Any first move in $G-G$ is either a move in $G$ or in $-G$.  The second player can then simply play the equivalent move in the other game ($-G$ or $G$, respectively).  This is called a \emph{mirroring strategy}, and it guarantees a win for the second player.  For example, if, say, Left moves first and chooses some $g\in G^L$, then the game position is now $g-G = g + (-G)$, and so Right responds with $-g\in (-G)^R$, resulting in the game position $g-g$.  An inductive argument now shows that Right wins using this strategy.

For (2\@.) with $G\gapprox 0$: First, suppose $H\gapprox 0$ and Right moves first in $G+H$.  Then Right is moving either in $G$ or in $H$.  Left then chooses her winning response in whichever game Right moved in.  Left can continue this strategy until she wins.  For example, if Right chooses $h\in H^R$, then the game position is now $G+h$.  Since $H\gapprox 0$, we must have $h\not\lapprox 0$, and so there exists some $h'\in h^L$ such that $h' \gapprox 0$.  Left responds with $h'$, resulting in the position $G+h'$.  An inductive argument again proves that Left can win, and thus $G+H\gapprox 0$.  Now suppose $H\not\lapprox 0$.  Then there is some $h\in H^L$ such that $h\gapprox 0$.  If Left moves first in $G+H$, she chooses this $h$, leaving the position $G+h$ for Right, who moves next.  By the previous argument $G+h\gapprox 0$, and so Left can win it, because Right is moving first.  Thus $G+H$ is a first-move win for Left, i.e., $G+H\not\lapprox 0$.

(3.\@) is the dual of (2.\@) and follows by applying (2.\@) to the games $-G$ and $-H$ (and using (4.)).

For (4.): By induction (with the inductive hypothesis used for the fourth equality),
\begin{align*}
-(G+H) &= -\{(G^L+H)\cup(G+H^L)\mid (G^R+H)\cup(G+H^R)\} \\
&= \{-((G^R+H)\cup(G+H^R))\mid -((G^L+H)\cup(G+H^L))\} \\
&= \{(-(G^R+H))\cup(-(G+H^R))\mid (-(G^L+H))\cup(-(G+H^L))\} \\
&= \{(-G^R-H)\cup(-G-H^R)\mid (-G^L-H)\cup(-G-H^L)\} \\
&= \{((-G)^L-H)\cup(-G+(-H)^L)\mid ((-G)^R-H)\cup(-G+(-H)^R)\} \\
&= -G-H\;.
\end{align*}
\end{proof}

The outcome $o(G)$ of a game $G$ is certainly the first question to be asked about $G$, but it leaves out a lot of other important information about $G$.  It does not determine, for example, the outcome when $G$ is added to a fixed game $X$.  That is, it may be that two games $G$ and $H$ have the same outcome, but $o(G+X) \ne o(H+X)$ for some game $X$.  Indeed, defining $2 \eqdf \{1|\}$, one can check that $o(1) = o(2) = \cL$, but we have $o(2-1) = \cL$ (left wins by choosing $1\in 2^L$ when she gets the chance), whereas we know already from Lemma~\ref{lem:plus-minus} that $o(1-1) = \cP$.

Behavior under addition leads us to a finer classification of games.

\begin{definition}
Let $G$ and $H$ be games.  We say that $G$ and $H$ are \emph{equivalent}, written $G\approx H$, iff $o(G+X) = o(H+X)$ for all games $X$.\footnote{In much of the literature, the overloaded equality symbol $=$ is used for game equivalence.  We avoid that practice here, preferring to reserve $=$ for set theoretic equality.  There are some important game properties that are not $\approx$-invariant.}
\end{definition}

It follows immediately from the definition that $\approx$ is an equivalence relation on games, and we call the equivalence classes \emph{game values}.  We let $\bbG$ denote the Class\footnote{We will start to capitalize words that describe proper classes.} of all game values.\footnote{Since each game value itself is a proper Class, we really cannot consider it as a member of anything.  A standard fix for this in set theory is to represent each game value $v$ by the \emph{set} of elements of $v$ with minimum rank, so $\bbG$ becomes the Class of all such sets.}  Letting $X$ be the endgame $0$ in the definition shows that equivalent games have the same outcome.  Using the associativity of $+$, we also get that $G\approx H$ implies $G+X\approx H+X$ for any game $X$.  Thus $+$ respects equivalence and naturally lifts to a commutative and associative Operation (also denoted $+$) on $\bbG$.

\bigskip

The remaining goal of this subsection is finish showing that $\tuple{\bbG,+,\lapprox}$ is a partially ordered abelian Group.  We have built up enough basic machinery that we can accomplish our goal in a direct, arithmetic way, without referring to players' strategies.

\begin{lemma}\label{lem:zero-id}
A game $G$ is a zero game if and only if $G+H \approx H$ for all games $H$.
\end{lemma}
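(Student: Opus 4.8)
The plan is to prove the two directions of the biconditional separately. The forward direction (if $G$ is a zero game then $G+H\approx H$ for all $H$) carries essentially all the weight; the reverse direction will be a one-line consequence of the already-noted fact that equivalent games have the same outcome.

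For the forward direction I would start from the hypothesis that $G$ is a zero game, i.e., $G\gapprox 0$ and $G\lapprox 0$. The crucial step is to show that adding $G$ to an \emph{arbitrary} game $K$ does not change its outcome: $o(G+K)=o(K)$ for every game $K$. By Definition~\ref{def:outcome} the outcome $o(G+K)$ is fixed by the answers to ``$G+K\gapprox 0$?'' and ``$G+K\lapprox 0$?'', so it is enough to establish the two biconditionals $G+K\gapprox 0 \iff K\gapprox 0$ and $G+K\lapprox 0 \iff K\lapprox 0$. Each of these follows by combining the two halves of Lemma~\ref{lem:plus-minus}: part~(2) applied with our $G\gapprox 0$ gives $K\gapprox 0\Rightarrow G+K\gapprox 0$ and $K\not\lapprox 0\Rightarrow G+K\not\lapprox 0$, while part~(3) applied with our $G\lapprox 0$ gives $K\lapprox 0\Rightarrow G+K\lapprox 0$ and $K\not\gapprox 0\Rightarrow G+K\not\gapprox 0$; reading these four implications (two directly, two contrapositively) yields the two biconditionals, hence $o(G+K)=o(K)$. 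Finally, for arbitrary games $H$ and $X$ I would apply this with $K=H+X$ and use associativity and commutativity of $+$ to write $G+(H+X)=(G+H)+X$, so that $o((G+H)+X)=o(H+X)$ for every $X$; this is exactly the statement $G+H\approx H$.

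For the reverse direction I would assume $G+H\approx H$ for all games $H$ and simply specialize $H$ to the endgame $0$. Since $0$ is the identity for $+$, this gives $G = G+0\approx 0$, and since equivalent games have the same outcome (the $X=0$ instance of the definition of $\approx$), $o(G)=o(0)=\cP$, i.e., $G\gapprox 0$ and $G\lapprox 0$, so $G$ is a zero game.

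I do not anticipate a real obstacle: the only substantive ingredient is the claim $o(G+K)=o(K)$ in the forward direction, and that is just a careful repackaging of Lemma~\ref{lem:plus-minus}(2)--(3). The one point to watch is that all four of the ``implies'' clauses in parts~(2) and (3)---including the two negated ones---are needed, since the negated clauses are precisely what supply the converse implications of the two biconditionals.
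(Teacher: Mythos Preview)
Your proposal is correct and follows essentially the same route as the paper: for the forward direction, reduce $G+H\approx H$ to the single claim $o(G+K)=o(K)$ for all $K$ via associativity (substituting $K=H+X$), and obtain that claim from Lemma~\ref{lem:plus-minus}(2)--(3); for the reverse direction, specialize to $H=0$. Your bookkeeping of exactly which clauses of Lemma~\ref{lem:plus-minus}(2) and (3) supply which biconditional is, if anything, more explicit than the paper's.
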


\begin{proof}
(Only if): It suffices to show that $o(G+H) = o(H)$ for any $H$, for then, given any game $X$, we have $o(G+H+X) = o(H+X)$ by substituting $H+X$ for $H$, hence the lemma.  Now by Lemma~\ref{lem:plus-minus}(2.), we get that $H\gapprox 0$ implies $G+H \gapprox 0$, and conversely, $H\not\gapprox 0$ implies $G+H\not\gapprox 0$.  A symmetric argument using Lemma~\ref{lem:plus-minus}(3.\@) proves that $H\lapprox 0$ if and only if $G+H\lapprox 0$.  Combining these statements implies $o(H) = o(G+H)$ as desired.

(If:) Set $H \eqdf 0$, the endgame.  Then $G = G+0 \approx 0$, and so $o(G) = o(0) = \cP$.
\end{proof}

\begin{corollary}
A game $G$ is a zero game if and only if $G\approx 0$ (where $0$ is the endgame).
\end{corollary}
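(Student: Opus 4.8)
The plan is to read this off directly from Lemma~\ref{lem:zero-id}, specializing the quantifier over $H$ to the single game $H \eqdf 0$, together with the fact (noted just after the definition of $\approx$, by taking $X \eqdf 0$) that equivalent games have the same outcome.

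For the ``only if'' direction I would suppose $G$ is a zero game. Lemma~\ref{lem:zero-id} then gives $G + H \approx H$ for \emph{every} game $H$; instantiating $H \eqdf 0$ and using that the endgame $0$ is the identity for $+$, so that $G + 0 = G$ as games (not merely up to $\approx$), we conclude $G \approx 0$. For the ``if'' direction I would suppose $G \approx 0$; taking $X \eqdf 0$ in the definition of equivalence yields $o(G) = o(G + 0) = o(0 + 0) = o(0) = \cP$, so $G$ is a zero game by Definition~\ref{def:outcome}. (Alternatively, from $G \approx 0$ and the fact that $+$ respects $\approx$ one gets $G + H \approx 0 + H = H$ for all games $H$, and the ``if'' half of Lemma~\ref{lem:zero-id} finishes it.)

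There is essentially no obstacle here: the only points needing any care are that $G + 0 = G$ holds on the nose because $0$ is the monoid identity for $+$, and that ``equivalent games have the same outcome'' has already been established. The corollary is really just Lemma~\ref{lem:zero-id} viewed at the single test game $H = 0$.
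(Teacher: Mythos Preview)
Your proof is correct and essentially matches the paper's. The forward direction is identical (instantiate $H:=0$ in Lemma~\ref{lem:zero-id}); for the reverse direction the paper uses your parenthetical alternative (add $H$ to both sides and invoke the ``if'' half of Lemma~\ref{lem:zero-id}), while your primary argument via $o(G)=o(0)=\cP$ is an equally valid one-liner.
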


\begin{proof}
For the forward direction, set $H \eqdf 0$ in Lemma~\ref{lem:zero-id}.  For the reverse direction, add any $H$ to both sides of the equivalence $G\approx 0$, then use Lemma~\ref{lem:zero-id} again.
\end{proof}

Here is our promised Preorder on games.

\begin{definition}\label{def:ge}
Let $G$ and $H$ be games.  We write $G \lapprox H$ (or $H\gapprox G$) to mean $H - G \gapprox 0$ (equivalently, $G-H\lapprox 0$).  As usual, we write $G<H$ to mean $G\lapprox H$ and $H\not\lapprox G$.\footnote{We now have two ways of interpreting the expression ``$G\gapprox 0$'': one using Definition~\ref{def:ge-zero} directly and the other using Definition~\ref{def:ge} with $0$ being the endgame.  One readily checks that the two interpretations coincide.}
\end{definition}

You can interpret $G < H$ informally as meaning that $H$ is more preferable a position for Left than $G$, or that $G$ is more preferable for Right than $H$.  For example, if Left is ever faced with moving in position $G$, and (let us pretend) she had the option of replacing $G$ with $H$ beforehand, she always wants to do so.

\begin{proposition}
The $\lapprox$ Relation on games is reflexive and transitive.
\end{proposition}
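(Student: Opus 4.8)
The plan is to prove both properties purely arithmetically, leaning on Lemma~\ref{lem:plus-minus} and Lemma~\ref{lem:zero-id} rather than on any strategy argument. Reflexivity should be essentially immediate once the definitions are unwound: by Definition~\ref{def:ge}, the assertion $G\lapprox G$ is literally the assertion $G-G\gapprox 0$; but Lemma~\ref{lem:plus-minus}(1.) tells us $G-G$ is a zero game, and by Definition~\ref{def:outcome} every zero game satisfies $\gapprox 0$. So reflexivity follows with no further work.

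For transitivity I would assume $G\lapprox H$ and $H\lapprox K$ and aim to show $G\lapprox K$, i.e.\ $K-G\gapprox 0$. The hypotheses say $H-G\gapprox 0$ and $K-H\gapprox 0$. Applying Lemma~\ref{lem:plus-minus}(2.) to the two games $H-G$ and $K-H$ gives $(H-G)+(K-H)\gapprox 0$. Now I would rearrange this sum using commutativity and associativity of $+$ on games together with Lemma~\ref{lem:plus-minus}(4.):
\[ (H-G)+(K-H) \;=\; (K-G)+(H-H)\;. \]
By Lemma~\ref{lem:plus-minus}(1.) the game $H-H$ is a zero game, so Lemma~\ref{lem:zero-id} yields $(K-G)+(H-H)\approx K-G$. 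Since equivalent games have the same outcome (take $X=0$ in the definition of $\approx$) and the property ``$\gapprox 0$'' is determined by the outcome (it holds exactly for outcomes $\cP$ and $\cL$), the relation $(K-G)+(H-H)\gapprox 0$ transfers to $K-G\gapprox 0$, which is what we want.

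I do not expect a genuine obstacle here, since all the substantive content is already packaged in the earlier lemmas; the one place to be careful is the bookkeeping in the displayed rearrangement of the sum and the observation that ``$\gapprox 0$'' is an $\approx$-invariant, so that absorbing the zero game $H-H$ is legitimate.
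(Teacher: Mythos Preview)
Your proof is correct and follows essentially the same route as the paper. The only cosmetic difference is the direction of the transitivity chain: the paper starts from $J-G$, inserts the zero game $-H+H$ via Lemma~\ref{lem:zero-id}, and regroups to obtain $(J-H)+(H-G)\gapprox 0$; you start from $(H-G)+(K-H)\gapprox 0$ and absorb $H-H$ to reach $K-G$. Both arguments invoke Lemma~\ref{lem:plus-minus}(1.), (2.) and Lemma~\ref{lem:zero-id} in the same way, and your explicit remark that ``$\gapprox 0$'' is $\approx$-invariant makes overt a step the paper leaves implicit.
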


\begin{proof}
Reflexivity follows immediately from Lemma~\ref{lem:plus-minus}(1.).  For transitivity, suppose $G$, $H$, and $J$ are games such that $G\lapprox H$ and $H\lapprox J$.  Then
\[ J-G \approx J + (-H + H) - G = (J-H) + (H-G) \gapprox 0\;. \]
The first equivalence is by Lemma~\ref{lem:zero-id} and the fact that $-H+H$ is a zero game by Lemma~\ref{lem:plus-minus}(1.).  The final statement is by Lemma~\ref{lem:plus-minus}(2.), noticing that $J-H$ and $H-G$ are both $\ge 0$.  Thus $G\lapprox J$.
\end{proof}

\begin{proposition}\label{prop:partial-order}
For any two games $G$ and $H$, \ $G\approx H$ if and only if $G-H$ is a zero game, if and only if $G\lapprox H$ and $G\gapprox H$.
\end{proposition}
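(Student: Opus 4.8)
The plan is to prove the two ``iff''s separately, leaning on the arithmetic already developed (Lemma~\ref{lem:plus-minus}, Lemma~\ref{lem:zero-id}, and its Corollary) rather than re-running strategy arguments. First I would dispatch the equivalence ``$G-H$ is a zero game'' $\Longleftrightarrow$ ``$G\lapprox H$ and $G\gapprox H$,'' which is essentially definition-chasing. By Definition~\ref{def:ge}, $G\lapprox H$ means $H-G\gapprox 0$, and $G\gapprox H$ means $G-H\gapprox 0$, i.e.\ (Definition~\ref{def:ge-zero}) $H-G\lapprox 0$. So the conjunction ``$G\lapprox H$ and $G\gapprox H$'' says exactly ``$H-G\gapprox 0$ and $H-G\lapprox 0$,'' which is the assertion that $H-G$ is a zero game. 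It then remains to observe that $H-G$ is a zero game iff $G-H$ is: indeed $-(H-G)=G-H$ by Lemma~\ref{lem:plus-minus}(4.) together with $-(-G)=G$, and negation preserves zero-ness, since $-W\gapprox 0 \Leftrightarrow W\lapprox 0$ and $-W\lapprox 0 \Leftrightarrow W\gapprox 0$ straight from Definition~\ref{def:ge-zero}.

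Next I would handle ``$G\approx H$ implies $G-H$ is a zero game'' by simply instantiating the defining condition of $\approx$ at the single game $X\eqdf -H$. This yields $o(G-H)=o(H-H)$; but $H-H$ is a zero game by Lemma~\ref{lem:plus-minus}(1.), so $o(H-H)=\cP$, hence $o(G-H)=\cP$, i.e.\ $G-H$ is a zero game by Definition~\ref{def:outcome}.

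For the converse, suppose $G-H$ is a zero game. By the Corollary to Lemma~\ref{lem:zero-id}, $G-H\approx 0$; adding $H$ to both sides (legitimate because $+$ respects $\approx$) gives $(G-H)+H\approx H$. On the left, reassociating and commuting, $(G-H)+H = G+\bigl((-H)+H\bigr)$, and $(-H)+H$ is a zero game by Lemma~\ref{lem:plus-minus}(1.), so Lemma~\ref{lem:zero-id} gives $G+\bigl((-H)+H\bigr)\approx G$. Transitivity of $\approx$ then yields $G\approx H$. (Alternatively, one can mirror the forward direction: from $G-H$ a zero game, Lemma~\ref{lem:zero-id} applied with $K\eqdf H+X$ gives $o\bigl((G-H)+(H+X)\bigr)=o(H+X)$, and the same zero-game reassociation shows $(G-H)+(H+X)\approx G+X$, so $o(G+X)=o(H+X)$ for every $X$, i.e.\ $G\approx H$.)

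I do not expect a genuine obstacle; the only points needing care are keeping the two readings of ``$\gapprox 0$'' consistent (Definition~\ref{def:ge-zero} versus Definition~\ref{def:ge} with $0$ the endgame — the footnote already asserts they coincide) and explicitly invoking commutativity/associativity of $+$ and the fact that $+$ respects $\approx$ when rearranging sums, all of which are available above.
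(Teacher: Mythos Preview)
Your proposal is correct and follows essentially the same route as the paper: the second ``iff'' is dispatched by unwinding Definition~\ref{def:ge}, the forward direction of the first ``iff'' plugs $X=-H$ into $\approx$ to get $o(G-H)=o(H-H)=\cP$, and the converse is the same zero-game reassociation $G\approx G+(H-H)=(G-H)+H\approx H$ via Lemma~\ref{lem:zero-id}. The only cosmetic difference is that the paper reads $G\lapprox H$ directly as $G-H\lapprox 0$ (rather than passing through $H-G$ and then negating), which saves you the small detour through Lemma~\ref{lem:plus-minus}(4.).
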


\begin{proof}
The second ``if and only if'' follows straight from the definitions.

(First only if:) $G\approx H$ implies $G-H \approx H-H$, since $+$ is $\approx$-invariant.  Then by Lemma~\ref{lem:plus-minus}(1.), $o(G-H) = o(H-H) = \cP$, i.e., $G-H$ is a zero game.

(First if:) By Lemma~\ref{lem:zero-id} and the fact that $H-H$ is also a zero game, we have $G \approx G+(H-H) = (G-H)+H \approx H$.
\end{proof}

The last two propositions show that the binary Relation $\lapprox$ on games is a Preorder that induces a partial Order on $\bbG$.  Proposition~\ref{prop:partial-order} also gives a good working criterion for proving or disproving game equivalence---just check whether $G-H$ is a second player win---without having to quantify over all games.

\begin{proposition}
$\tuple{\bbG,+}$ is an abelian Group, where the identity element is the $\approx$-equivalence class of zero games, and inverses are obtained by the negation Operator on games.
\end{proposition}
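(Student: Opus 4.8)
The plan is to assemble results already in hand. It was noted above that $+$ on games is commutative and associative with the endgame $0$ as identity, and that $+$ respects $\approx$, hence descends to a well-defined commutative associative Operation on $\bbG$. So the only work left is to pin down the identity element of $\tuple{\bbG,+}$ and to exhibit inverses.

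For the identity, I would invoke the Corollary following Lemma~\ref{lem:zero-id}: a game is a zero game exactly when it is $\approx$-equivalent to the endgame $0$. Thus all zero games lie in a single $\approx$-class, which we may call $[0]$. Lemma~\ref{lem:zero-id} itself says that $G+H\approx H$ for every $H$ whenever $G$ is a zero game; passing to equivalence classes, this reads $[0]+[H]=[H]$ for all $H$, so $[0]$ is the (necessarily unique) identity of $\tuple{\bbG,+}$.

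For inverses, the first step is to check that the negation Operator is well defined on $\bbG$, i.e.\ that $G\approx H$ implies $-G\approx -H$. By Proposition~\ref{prop:partial-order} it is enough to show that $-G-(-H)$ is a zero game whenever $G-H$ is. Using Lemma~\ref{lem:plus-minus}(4.\@) together with $-(-H)=H$ gives $-G-(-H) = -G+H = -(G-H)$, and Definition~\ref{def:ge-zero} yields $G-H\gapprox 0 \iff -(G-H)\lapprox 0$ and $G-H\lapprox 0 \iff -(G-H)\gapprox 0$; hence $-(G-H)$ is a zero game precisely when $G-H$ is. So $[-G]$ depends only on $[G]$, and negation descends to an Operation on $\bbG$. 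Now Lemma~\ref{lem:plus-minus}(1.\@) says $G+(-G)=G-G$ is a zero game, that is $[G]+[-G]=[0]$, so $[-G]$ is the inverse of $[G]$.

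Combining these observations shows that $\tuple{\bbG,+}$ is an abelian Group. The only ingredient not already available essentially verbatim from earlier results is the well-definedness of negation on equivalence classes, and even that is a two-line computation; I anticipate no real obstacle, the proposition being a bookkeeping consolidation of Lemmas~\ref{lem:plus-minus} and~\ref{lem:zero-id}.
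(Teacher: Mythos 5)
Your proposal is correct and follows essentially the same route as the paper: cite the earlier lemmas for associativity, commutativity, and the identity class, then verify that negation is $\approx$-invariant via Lemma~\ref{lem:plus-minus}(4.\@) and Proposition~\ref{prop:partial-order}, so that Lemma~\ref{lem:plus-minus}(1.\@) supplies inverses. The only cosmetic difference is that you phrase the invariance check as ``$-(G-H)$ is a zero game iff $G-H$ is,'' while the paper routes the same computation through the relations $-G\lapprox -H$ and $-G\gapprox -H$; the content is identical.
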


\begin{proof}
We already know that $+$ is associative and commutative on $\bbG$ and that the zero games form the identity under $+$ (Lemma~\ref{lem:zero-id}).  All we have left to show is that the negation Operator on games is $\approx$-invariant, for then, Lemma~\ref{lem:plus-minus}(1.\@) implies that it acts as the group theoretic inverse on $\bbG$.  Now suppose $G\approx H$ for any games $G$ and $H$.  Then $G\lapprox H$ and $G\gapprox H$ by Proposition~\ref{prop:partial-order}, i.e., $G-H \lapprox 0$ and $G-H \gapprox 0$.  Since by Lemma~\ref{lem:plus-minus}(4.),
$-G-(-H) = H-G$, we also have $-G \lapprox -H$ and $-G\gapprox -H$, hence $-G\approx -H$ by Proposition~\ref{prop:partial-order}.
\end{proof}

Finally, $\lapprox$ is translation-invariant on $\bbG$, making it a partially ordered abelian Group:

\begin{corollary}
For any games $G$, $H$, and $X$, if $G\lapprox H$ then $G+X \lapprox H+X$.
\end{corollary}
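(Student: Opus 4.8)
The plan is to unwind the definitions and reduce everything to the algebraic facts already established, exactly in the spirit of the preceding two propositions. By Definition~\ref{def:ge}, the hypothesis $G\lapprox H$ says precisely that $H-G\gapprox 0$, and the desired conclusion $G+X\lapprox H+X$ says precisely that $(H+X)-(G+X)\gapprox 0$. So the whole task is to show that $(H+X)-(G+X)$ is $\gapprox 0$ given that $H-G$ is.

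First I would rewrite $(H+X)-(G+X)$. Using Lemma~\ref{lem:plus-minus}(4.) to distribute the negation, $-(G+X) = -G-X$, and then using the commutativity and associativity of $+$ on games, we get
\[
 (H+X)-(G+X) \;=\; H+X+(-G)+(-X) \;=\; (H-G)+(X-X)\;.
\]
Next I would invoke Lemma~\ref{lem:plus-minus}(1.) to note that $X-X$ is a zero game, and then Lemma~\ref{lem:zero-id} (adding $X-X$ to the game $H-G$) to conclude that $(H-G)+(X-X)\approx H-G$. Since equivalent games have the same outcome, and $H-G\gapprox 0$ by hypothesis, it follows that $(H-G)+(X-X)\gapprox 0$, hence $(H+X)-(G+X)\gapprox 0$, which is the statement $G+X\lapprox H+X$.

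There is essentially no hard part here: the corollary is a bookkeeping consequence of Lemmas~\ref{lem:plus-minus} and~\ref{lem:zero-id}. The only place demanding a little care is making sure each rewriting step is justified by an \emph{equivalence} ($\approx$) that preserves the outcome, rather than by genuine equality of games — in particular, $(H-G)+(X-X)$ is generally not equal to $H-G$ as a game, only equivalent to it, so the appeal to Lemma~\ref{lem:zero-id} (and the fact that equivalent games share an outcome) is what does the real work.
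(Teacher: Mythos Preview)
Your proof is correct and follows essentially the same route as the paper's: unwind the definition, rewrite $(H+X)-(G+X)$ as $(H-G)+(X-X)$ via Lemma~\ref{lem:plus-minus}(4.), and then use that $X-X$ is a zero game. The only cosmetic difference is that the paper invokes Lemma~\ref{lem:plus-minus}(1.,2.) directly (both $H-G\gapprox 0$ and $X-X\gapprox 0$, so their sum is $\gapprox 0$), whereas you route through Lemma~\ref{lem:zero-id} and equivalence of outcomes; either justification works.
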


\begin{proof}
We have
\begin{align*}
G\lapprox H &\implies H-G\gapprox 0 \implies H-G+X-X \gapprox 0 \\
&\implies (H+X)-(G+X)\gapprox 0 \implies G+X \lapprox H+X\;.
\end{align*}
The first and last implications are by definition, and the other two are by Lemma~\ref{lem:plus-minus}.
\end{proof}

We next look at two important subclasses of games---the numeric games and the impartial games.

\subsection{Numeric games}
\label{sec:numeric}

A numeric game is one where at each position all the left options are $<$ all the right options.  Formally,

\begin{definition}
A game $G$ is \emph{numeric} iff $\ell < r$ for every $\ell\in G^L$ and $r\in G^R$, and further, every option of $G$ is numeric.
\end{definition}

One can show that $G$ is numeric if and only if $\ell < G$ for every $\ell\in G^L$ and $G<r$ for every $r\in G^R$.  If $H$ is also numeric, then either $G\lapprox H$ or $H\lapprox G$.  The $+$ and $-$ operations also yield numeric games when applied to numeric games.\footnote{The property of being numeric is \emph{not} invariant under $\approx$.  One can easily concoct two equivalent games, one of which is numeric and the other not.}  Numeric games have a peculiar property: making a move only worsens your position (for Left this means having to choose a smaller game; for Right, having to choose a larger game).  Thus neither player wants to make a move---if they were given the option to skip a turn, they would always take it.  For these games, an optimal play is easy to describe: Left always chooses a maximum left option (i.e., one that does the least damage), and Right always chooses a minimum right option, assuming these options exist.\footnote{In general, Left can win by choosing any option $\ell\gapprox 0$, and Right can win by choosing any option $r\lapprox 0$.}  This intuitive idea is formalized in the following theorem, which is referred to in the literature as the ``dominating rule.''  It applies to all games, not just numeric games.

\begin{theorem}
Let $G$ be a game.  If $y \lapprox \ell$ for some $\ell \in G^L$, then $G \approx \{y, G^L|G^R\}$.  Similarly, if $y \gapprox r$ for some $r \in G^R$, then $G \approx  \{G^L|G^R,y\}$.
\end{theorem}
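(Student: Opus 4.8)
The plan is to prove the first statement (the dominated left option case); the second will follow by duality. Set $G' \eqdf \{y,G^L\mid G^R\}$, so that $(G')^L = G^L\cup\{y\}$ and $(G')^R = G^R$. By Proposition~\ref{prop:partial-order} it suffices to prove $G\lapprox G'$ and $G\gapprox G'$, that is, $G'-G\gapprox 0$ and $G-G'\gapprox 0$. For each of these I would argue directly from Definition~\ref{def:ge-zero}: to see that a game $W$ satisfies $W\gapprox 0$ it is enough to check that no right option of $W$ is $\lapprox 0$, and to see that a particular game is $\not\lapprox 0$ it is enough to exhibit one of its left options that is $\gapprox 0$. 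The workhorse for the latter will be Lemma~\ref{lem:plus-minus}(1.): any game of the form $z-z$ is a zero game, hence $\gapprox 0$.

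First I would dispose of $G'-G\gapprox 0$, which does not use the hypothesis at all---it is just the ``more left options cannot hurt Left'' direction. Unwinding the definitions of $+$ and $-$, the right options of $G'-G$ are the games $r-G$ for $r\in G^R$ together with the games $G'-\ell$ for $\ell\in G^L$. For $r-G$, note that $r-r$ is a left option (since $r\in G^R$, so $-r$ is a left option of $-G$), and it is a zero game; for $G'-\ell$ with $\ell\in G^L\subseteq(G')^L$, the game $\ell-\ell$ is a left option, again a zero game. Hence no right option of $G'-G$ is $\lapprox 0$, so $G'-G\gapprox 0$.

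Next comes $G-G'\gapprox 0$, where the hypothesis is needed. Its right options are the games $r-G'$ for $r\in G^R$ and the games $G-\ell'$ for $\ell'\in(G')^L = G^L\cup\{y\}$. The games $r-G'$ and the games $G-\ell'$ with $\ell'\in G^L$ are handled exactly as in the previous paragraph, via the zero-game left options $r-r$ and $\ell'-\ell'$. The single remaining right option is $G-y$, and here the hypothesis $y\lapprox\ell$ enters: by Definition~\ref{def:ge} it says precisely that $\ell-y\gapprox 0$, and $\ell-y$ is a left option of $G-y$ because $\ell\in G^L$. So $G-y\not\lapprox 0$, no right option of $G-G'$ is $\lapprox 0$, and $G-G'\gapprox 0$. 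Together with the previous paragraph and Proposition~\ref{prop:partial-order}, this yields $G\approx G'$.

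For the second statement, suppose $y\gapprox r$ with $r\in G^R$. Then in the game $-G$ the left options are $-G^R$ and the right options are $-G^L$, and $-y\lapprox -r$ with $-r\in(-G)^L$; so the already-proved first part gives $-G\approx\{-y,-G^R\mid -G^L\}$, whose right-hand side is $-\{G^L\mid G^R,y\}$. Since the negation Operator is $\approx$-invariant (shown just above in the proof that $\tuple{\bbG,+}$ is an abelian Group), we conclude $G\approx\{G^L\mid G^R,y\}$, as desired. The only delicate point in the whole argument is the bookkeeping: one must enumerate the right options of $G'-G$ and of $G-G'$ correctly from the recursive definitions of $+$ and $-$, and then observe that every one of them except $G-y$ is killed ``for free'' by a mirroring left option of the form $z-z$, so that the dominance hypothesis gets used at exactly one place. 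No further induction is required beyond what is already packaged into Lemma~\ref{lem:plus-minus}.
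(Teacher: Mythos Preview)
The paper states this theorem (the ``dominating rule'') without proof, so there is nothing in the paper to compare against.

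Your argument is correct and is the standard one. You correctly reduce $G\approx G'$ to the two inequalities $G'-G\gapprox 0$ and $G-G'\gapprox 0$ via Proposition~\ref{prop:partial-order}, enumerate the right options of each difference accurately from the recursive definition of $+$, and dispose of every right option except $G-y$ by exhibiting a mirror left option $z-z$ (a zero game by Lemma~\ref{lem:plus-minus}(1.)). The hypothesis $y\lapprox\ell$ is then used exactly once, to supply the left option $\ell-y\gapprox 0$ of $G-y$. The duality argument for the second statement is also fine; you correctly use $-(-G)=G$ together with the $\approx$-invariance of negation established in the proof that $\tuple{\bbG,+}$ is an abelian Group.
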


If $y \lapprox \ell \in G^R$, then we say that $y$ is \emph{dominated} by $\ell$ in $G$.  Similarly, if $y\gapprox r \in G^R$, then $y$ is \emph{dominated} by $r$ in $G$.  We obtain equivalent games by removing dominated options.  A player never needs to play a dominated option; it is just as well (or better) to choose an option that dominates it.

Numeric games are called such because their values act like real numbers; for one thing, their values are totally ordered by $\lapprox$.  These games are constructed in a way somewhat akin to how the real numbers are constructed from the rationals via Dedekind cuts.  The left options of a game form the left cut, the right options the right cut, and the game itself represents a number strictly between the two.  The differences are that the two cuts might be bounded away from each other (one or the other may even be empty), and the left cut might contain a maximum element.

\subsubsection{Finite numeric games}

The values of \emph{finite} numeric games form a subgroup of $\bbG$ naturally isomorphic (in an order-preserving way) to the dyadic rational numbers under addition, according to the following ``simplicity rule'':

\begin{definition}
Let $G$ be a finite numeric game.  The \emph{(numerical) value} of $G$, denoted $v(G)$, is the unique rational number $a/2^k$ such that
\begin{enumerate}
\item
$k$ is the least nonnegative integer such that there exists an integer $a$ such that $v(\ell) < a/2^k$ for all $\ell\in G^L$ and $a/2^k < v(r)$ for all $r\in G^R$, and
\item
$a$ is the integer with the least absolute value satisfying (1.\@) above.
\end{enumerate}
\end{definition}

So for example, the endgame $0$ has value $v(0) = 0$, the game $1$ has value $v(1) = 1$, and the game $-1$ has value $v(-1) = -1$, as the notation suggests.  Intuitively, $|v(G)|$ indicates the number of ``free moves'' one of the players has before losing (Left if $v(G)>0$, and Right if $v(G)<0$).  In fact, for any two finite numeric games $P$ and $Q$, one can show that $v(P+Q) = v(P) + v(Q)$ and that $v(-P) = -v(P)$.  Also, $P\lapprox Q$ if and only if $v(P) \le v(Q)$.\footnote{One can define a purely game-theoretic multiplication operation on numeric games in such a way that $v(PQ) = v(P)v(Q)$ for all $P$ and $Q$.  See \cite{Conway:ONAG} for details.}  The valuation map $v$ is not one-to-one on games, but induces a one-to-one map on \emph{values} of numeric games.
  
To illustrate the simplicity rule, consider the game $h \eqdf \{0|1\}$.  The rule says that $v(h)$ is the simplest dyadic rational number strictly between $0$ and $1$, namely, $1/2$.  First note that Left can always win $h$ whether or not she plays first, so $h>0$.  If $v$ respects $+$, then we should also have $h+h \approx 1$.  Let us check this.  First consider $1-h$:
\begin{align*}
1-h &= 1+(-h) = \{0|\}+\{-1|0\} = \{0-h,1-1|1+0\} \\
&= \{-h,0|1\} \approx \{0|1\} = h
\end{align*}
(the equivalence is by the dominating rule and $-h<0$).  Thus
\[ h+h \approx h+(1-h) \approx 1\;. \]

Black-white poset games are numeric \cite{FGMST:black-white}.  Here we identify Black with Left and White with Right.  So for example, an antichain of $k$ black points has numeric value $k$, and an antichain of $k$ white nodes has numeric value $-k$.  Figure~\ref{fig:sample-bw-games} shows the numeric value of two simple, two-level black-white poset games.
\begin{figure}
\begin{center}
\input{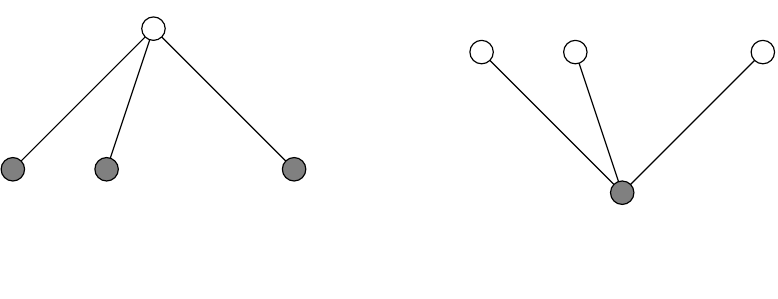_t}
\caption{The numerical values of two simple black-white poset games.  The left has value $k-\frac{1}{2}$ and the right has value $2^{-k}$, for $k\ge 1$.}\label{fig:sample-bw-games}
\end{center}
\end{figure}

\begin{exercise}\label{ex:sample-bw-games}
Use the simplicity rule to prove the values in Figure~\ref{fig:sample-bw-games}.
\end{exercise}

The numerical values of arbitrary numeric games (not necessarily finite) form an ordered, real-closed field $\No$ into which the real numbers embed, but which also contains all the ordinals as well as infinitesimals \cite{Conway:ONAG}.  Donald Knuth dubbed $\No$ the \emph{surreal numbers} \cite{Knuth:surreal}, and they are formed via a transfinite construction.  The dyadic rationals are those constructed at finite stages, but numbers constructed through stage~$\omega$ already form a proper superset of $\reals$.

\subsection{Impartial games and Sprague-Grundy theory}
\label{sec:impartial}

A game is \emph{impartial} if at every position, the two players have the same options.  Formally,

\begin{definition}
A game $G$ is \emph{impartial} iff $G^L = G^R$ and every $g\in G^L$ is impartial.
\end{definition}

Equivalently, $G$ is impartial if and only if $G = -G$.  This means that values of impartial games are those that have order two in the group $\tuple{\bbG,+}$.

Examples of impartial games include $0$ and $*$.  Families of impartial games include \Nim, \Geo, \NK, and poset games.\footnote{Impartiality is not $\approx$-invariant.}  There is a beautiful theory of impartial games, developed by R. P. Sprague and P. M. Grundy \cite{Sprague:games,Grundy:games} that predates the more general theory of combinatorial games described in \cite{BCG:WW,Conway:ONAG}.  We develop the basics of this older theory here.  First note that, since there are no Left/Right biases, all impartial games are either zero ($\cP$) or fuzzy ($\cN$), and we can assume that Left always moves first.  We will call impartial zero games \emph{$\forall$-games} (``for all first moves \ldots'') and impartial fuzzy games \emph{$\exists$-games} (``there exists a first move such that \ldots'').  In this section only, we restrict our attention to impartial games, so when we say ``game,'' we mean impartial game.

Two (impartial) games $G$ and $H$ are equivalent ($G\approx H$) if and only if $G+H$ is a $\forall$-game, because $H = -H$ (Sprague and Grundy defined this notion for impartial games).  Applied to poset games, we get Lemma~\ref{lem:dual-symmetry} below (a partial generalization of Fact~\ref{fact:dual-symmetry}), which has been handy in finding the outcomes of some poset games.  A \emph{down set} in a partial order $P$ is a subset $S\subseteq P$ that is closed downwards under $\le$, i.e., $x\in S$ and $y\le x$ implies $y\in S$.

\begin{lemma}\label{lem:dual-symmetry}
Let $P$ be a poset and let $\map{\p}{P}{P}$ be such that $\p\circ\p = \identity_P$ and $x\le y \iff \p(x) \le \p(y)$ for all $x,y\in P$.  Let $F \eqdf \{x\in P \mid \p(x) = x\}$ be the set of fixed points of $\p$, considered as an induced subposet of $P$.  If $F$ is a down set, then $G\approx F$ as games.
\end{lemma}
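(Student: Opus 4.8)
The plan is to invoke the Sprague--Grundy equivalence criterion stated just above: since $P$ and $F$ are impartial, $P \approx F$ holds if and only if the parallel union $P + F$ is a $\forall$-game, i.e.\ a second-player win. So I will exhibit a winning strategy for the second player (call her Bob, for symmetry with Alice) in the poset game on $P+F$. First I would record one structural observation: every position reachable in a poset game on a poset $Q$ is a down set of $Q$, because playing a point $x$ deletes the up-set $\{y \mid x\le y\}$, and a down set with an up-set removed is again a down set. Hence a position of $P+F$ is a pair $(S,T)$ with $S$ a down set of $P$ and $T$ a down set of $F$, where a move replaces $S$ by $S_x=\{y\in S\mid x\not\le y\}$ for some $x\in S$, or replaces $T$ by $T_y$ for some $y\in T$.

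Bob's strategy will maintain the invariant that, immediately after each of Bob's moves, $\p(S)=S$ and $T=S\cap F$; the initial position $(P,F)$ satisfies it. To see the invariant is restorable, I would do a case analysis on Alice's move from a position $(S,T)$ meeting the invariant: (i) if Alice plays a (necessarily fixed) point $x\in T=S\cap F$ in the $F$-component, Bob replies with the same $x$ in the $P$-component (legal since $x\in S$); (ii) if Alice plays a fixed point $x$ in the $P$-component, Bob replies with that $x$ in the $F$-component (legal since $x\in S\cap F=T$); (iii) if Alice plays a non-fixed point $x$ in the $P$-component, Bob replies with $\p(x)$ in the $P$-component.

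The routine supporting facts I would check are: $\p$ permutes $F$ and permutes $P\setminus F$ (since $\p\circ\p=\identity$); for a non-fixed $x$ one has $x\not\le\p(x)$, for otherwise applying the order-preserving $\p$ gives $\p(x)\le x$ as well, whence antisymmetry forces $x=\p(x)$ --- so Bob's mirror move in (iii) is still available after Alice's move; for a fixed point $x$ and a $\p$-invariant $S$, the set $S_x$ is again $\p$-invariant and satisfies $S_x\cap F=(S\cap F)_x$ (so cases (i) and (ii) restore the invariant); and deleting a mirror pair $\{x,\p(x)\}$ from a $\p$-invariant down set leaves it $\p$-invariant.

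The one step that genuinely uses the hypothesis --- and the only real obstacle --- is case (iii), where I must check that $T=S\cap F$ is preserved even though Alice's move and Bob's reply only touch the $P$-component. After those two moves the $P$-component is $S'=\{y\in S\mid x\not\le y \text{ and } \p(x)\not\le y\}$, so $S'\cap F=\{y\in S\cap F\mid x\not\le y \text{ and } \p(x)\not\le y\}$; but $F$ is a down set and $x,\p(x)\notin F$, so no element of $F$ lies above $x$ or above $\p(x)$, and therefore $S'\cap F=S\cap F=T$, as needed. (Without the down-set hypothesis, playing $x$ could delete fixed points and desynchronize the two components; that case is exactly what the ``minimum fixed point'' half of Fact~\ref{fact:dual-symmetry} addresses instead.) With the invariant maintained, Alice always moves from a position $(S,S\cap F)$ with $\p(S)=S$, and in each of cases (i)--(iii) Bob has a legal reply; by well-foundedness, Alice is the first player unable to move, so $P+F$ is a $\forall$-game and $P\approx F$.
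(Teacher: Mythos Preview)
Your proof is correct and follows essentially the same approach as the paper: show $P+F$ is a $\forall$-game by mirroring fixed-point moves between the $P$-component and the $F$-copy, and mirroring non-fixed moves via $\p$ inside $P$. Your version is more carefully worked out---you state an explicit invariant ($\p(S)=S$ and $T=S\cap F$), verify that Bob's reply is always legal (in particular the check $x\not\le\p(x)$ in case~(iii)), and pinpoint exactly where the down-set hypothesis enters---whereas the paper simply asserts that ``the resulting position \ldots\ is seen to have the same basic form as the original game.''
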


\begin{proof}
Let $F'$ be a copy of $F$, disjoint from $P$, and consider the parallel union $P+F'$ as a poset game.  By Proposition~\ref{prop:partial-order}, we only need to show that $P+F'$ is a $\forall$-game, which we do by giving a winning strategy for the second player.  If the first player plays in $F$ or $F'$, then the second player plays the corresponding point in $F'$ or $F$, respectively.  If the first player plays some point $x\in G\setminus F$, then the second player responds by playing $\p(x)$.  Since $F$ is a down set, this latter pair of moves does not disturb $F$ or $F'$, and the resulting position in either case is seen to have the same basic form as the original game.
\end{proof}

One can associate an ordinal number with each game, which we call the \emph{g-number}\footnote{also called the \emph{Grundy number} or the \emph{NIM number}---not to be confused with the value of a numerical game} of the game, such that two games are equivalent if and only if they have the same g-number.  The g-number of a finite game is a natural number.  We will restrict ourselves to finite games.

\begin{definition}
Let $A$ be any coinfinite subset of $\nums$.  Define $\mex A$ (the \emph{\underline{m}inimum \underline{ex}cluded element} from $A$) to be the least natural number not in $A$, i.e.,
\[ \mex A \eqdf \min(\nums - A)\;. \]
More generally, for $i = 0,1,2,\ldots\,$, inductively define
\[ \mex_i A \eqdf \min\left(\nums - (A\cup\{\mex_0(A),\ldots,\mex_{i-1}A\})\right)\;, \]
the $i$'th least natural number not in $A$.  (So in particular, $\mex_0 A = \mex A$.)
\end{definition}

\begin{definition}
Let $G$ be any (finite) game.  Define the \emph{g-number} of $G$ as
\[ \gnum(G) \eqdf \mex{\gset(G)}\;, \]
where $\gset(G) \eqdf \{ \gnum(x) \mid x\in G^L \}$ is called the \emph{g-set} of $G$.
\end{definition}

That is, $\gnum(G)$ is the least natural number that is not the g-number of any option of $G$, and the set of g-numbers of options of $G$ is $\gset(G)$.  For example, $\gset(0) = \emptyset$, and so $\gnum(0) = 0$.  Also, $\gset(*) = \{\gnum(0)\} = \{0\}$, and so $\gnum(*) = 1$.

\begin{exercise}\label{ex:g-numbers}
Prove the following for any finite poset $P$ and any $n\in\nums$.
\begin{enumerate}
\item
$\gnum(P) \le \card P$.  (Generally, $\gnum(G) \le \card{G^L}$ for any impartial $G$.)
\item
$\gnum(C_n) = n$ for all $n\in\nums$.
\item\label{item:antichain}
$\gnum(A_n) = n\bmod 2$.
\item
$\gnum(V_n) = (n\bmod 2)+1$.
\end{enumerate}
What is $\gnum(\Lambda_n)$?  What is $\gnum(\Diamond_n)$?
\end{exercise}

\begin{exercise}
Describe $g(A_m/A_n)$ simply in terms of $m$ and $n$.
\end{exercise}

Here is the connection between the g-number and the outcome of a game.

\begin{proposition}\label{prop:g-eq-0}
A game $G$ is a $\forall$-game if and only if $\gnum(G) = 0$.
\end{proposition}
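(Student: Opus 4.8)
The plan is to prove both directions at once by well-founded induction on the ``option of'' relation (legitimate by the standing well-foundedness assumption), using two facts already available in the excerpt: first, every impartial game is either a $\forall$-game or an $\exists$-game (since impartial games are never positive or negative); and second, the inductive characterization that $G$ is a $\forall$-game iff every option of $G$ is an $\exists$-game, while $G$ is an $\exists$-game iff some option of $G$ is a $\forall$-game. This characterization is the exact analogue for general impartial games of the one stated for poset games in Section~\ref{sec:intro}, and follows immediately from the preceding dichotomy.

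For the inductive step, I would assume the claimed equivalence holds for every option $x$ of $G$, i.e., $x$ is a $\forall$-game iff $\gnum(x) = 0$. The whole argument then hinges on the single observation, immediate from the definition of $\mex$, that $\gnum(G) = \mex\,\gset(G) = 0$ if and only if $0 \notin \gset(G)$, i.e., if and only if $\gnum(x) \ne 0$ for every $x \in G^L$.

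From here the two directions are short. For the ``if'' direction: if $\gnum(G) = 0$, then no option $x$ has $\gnum(x) = 0$, so by the inductive hypothesis no option of $G$ is a $\forall$-game; since each option is impartial, each option is an $\exists$-game, and hence $G$ is a $\forall$-game. For the ``only if'' direction, I argue the contrapositive: if $\gnum(G) \ne 0$, then $0 \in \gset(G)$, so some option $x \in G^L$ has $\gnum(x) = 0$; by the inductive hypothesis that $x$ is a $\forall$-game, which witnesses that $G$ is an $\exists$-game and therefore not a $\forall$-game. This closes the induction.

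I do not expect a genuine obstacle here: the proof is a routine structural induction. The only points requiring care are invoking the inductive $\forall$/$\exists$ characterization in its form for \emph{all} impartial games (not merely poset games), and making explicit the trivial-but-crucial equivalence $\gnum(G) = 0 \iff 0 \notin \gset(G)$ on which everything rests.
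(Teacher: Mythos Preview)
Your proposal is correct and follows essentially the same approach as the paper's own proof, which is in fact only a two-line ``proof idea'' conveying exactly your key observation (that $\gnum(G)=0$ iff no option has g-number $0$, and $\gnum(G)\ne 0$ iff some option has g-number $0$) together with the implicit induction you have made explicit. Your write-up is strictly more detailed than what the paper provides.
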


\begin{proof}[Proof idea]
If $\gnum(G)\ne 0$, then there is some option $x$ of $G$ that Left can play such that $\gnum(x) = 0$, but if $\gnum(G) = 0$, then no move Left makes can keep the g-number at $0$.
\end{proof}

The central theorem of Sprague-Grundy theory---an amazing theorem with a completely nonintuitive proof---concerns the g-number of the sum of two games.

\begin{definition}
For any $m,n\in\nums$, define $m\oplus n$ to be the natural number $k$ whose binary representation is the bitwise exclusive OR of the binary representations of $m$ and $n$.  We may also call $k$ the \emph{bitwise XOR} of $m$ and $n$.
\end{definition}

For example, $23 \oplus 13 = 10111 \oplus 01101 = 11010 = 26$.

\begin{theorem}[Sprague, Grundy \cite{Sprague:games,Grundy:games}]\label{thm:sg}
For any finite games $G$ and $H$,
\[ \gnum(G+H) = \gnum(G)\oplus\gnum(H)\;. \]
\end{theorem}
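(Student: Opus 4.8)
The plan is to prove the identity directly from the definition of the $g$-number as a $\mex$, by well-founded induction on the combined game $G+H$ (equivalently, along the ``option of'' relation starting from $G+H$). Write $a = \gnum(G)$, $b = \gnum(H)$, and $c = a \oplus b$; the goal is to show $\gnum(G+H) = c$, which by definition of $\mex$ amounts to two claims: (i) no option of $G+H$ has $g$-number $c$, and (ii) for every natural number $d < c$ there is an option of $G+H$ with $g$-number $d$.

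For (i): every option of $G+H$ has the form $G'+H$ with $G' \in G^L$, or $G+H'$ with $H' \in H^L$. In the first case the induction hypothesis gives $\gnum(G'+H) = \gnum(G') \oplus b$; if this equalled $c = a \oplus b$ we could XOR by $b$ (which is its own inverse) to get $\gnum(G') = a$, contradicting $a = \mex\gset(G)$, which forces $a \notin \gset(G)$. The case $G+H'$ is symmetric. Hence no option of $G+H$ attains $g$-number $c$.

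For (ii): fix $d < c$ and let $k$ be the highest bit position at which $d$ and $c$ differ; since $d < c$, bit $k$ of $c$ is $1$ and bit $k$ of $d$ is $0$, so the number $e \eqdf c \oplus d$ has its highest set bit exactly at position $k$. Because $c = a \oplus b$ has bit $k$ equal to $1$, exactly one of $a$, $b$ has bit $k$ set; assume without loss of generality it is $a$. Then $a' \eqdf a \oplus e$ satisfies $a' < a$: XORing with $e$ clears bit $k$ of $a$ and changes nothing at higher positions, so $a'$ agrees with $a$ above bit $k$ but is smaller there. Since $a' < a = \mex\gset(G)$, every natural number below $a$ lies in $\gset(G)$, so there is an option $G' \in G^L$ with $\gnum(G') = a'$. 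By the induction hypothesis, $\gnum(G'+H) = a' \oplus b = (a \oplus e) \oplus b = (a \oplus b) \oplus e = c \oplus e = c \oplus c \oplus d = d$, which exhibits the required option. (If instead $b$ has bit $k$ set, we play in $H$ by the mirror-image argument.)

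Combining (i) and (ii), $c$ is precisely the least natural number excluded from the set of $g$-numbers of options of $G+H$, i.e.\ $\gnum(G+H) = c = \gnum(G)\oplus\gnum(H)$. The only genuinely delicate point is the bit manipulation in (ii)---in particular verifying that $a' = a \oplus e$ is strictly less than $a$, which is exactly what licenses invoking the $\mex$ property of $a$ to extract the option $G'$; everything else is routine bookkeeping with the inductive hypothesis and the self-inverse property of $\oplus$. (Note that Proposition~\ref{prop:g-eq-0} is not needed for this argument, though it provides a sanity check in the special case $c = 0$.)
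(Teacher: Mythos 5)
Your proof is correct and follows essentially the same route as the paper's: induction on the options, the $\mex$-based argument that $a\oplus b$ is not attained by any option, and then for each $d<a\oplus b$ locating the highest bit where $d$ and $a\oplus b$ differ to find an option $G'$ with $\gnum(G')=a\oplus(c\oplus d)=b\oplus d$, which is exactly the paper's $k\oplus g_H<g_G$ computation in different notation. The bit-manipulation verification that $a'=a\oplus e<a$ is the same key step the paper carries out explicitly.
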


\begin{proof}
As with most of these proofs, we use induction.  Let $G$ and $H$ be games.  If Left plays some $x\in G^L$, say, then $\gnum(x) \ne \gnum(G)$, and so
\begin{align*}
\gnum(x + H) &= \gnum(x) \oplus \gnum(H) & \mbox{(inductive hypothesis)} \\
&\ne \gnum(G) \oplus \gnum(H) & \mbox{(because $\gnum(G)\ne \gnum(x)$.)}
\end{align*}
Similarly, $\gnum(G + y) \ne \gnum(G) \oplus \gnum(H)$ for any $y\in H^L$.  This means that $\gnum(G)\oplus\gnum(H)$ is not the g-number of any option of $G+H$.  We'll be done if we can show that every natural number \emph{less than} $\gnum(G)\oplus\gnum(H)$ \emph{is} the g-number of some option of $G+H$.

Set $g_G \eqdf \gnum(G)$ and $g_H \eqdf \gnum(H)$, and let $m = g_G \oplus g_H$.  Fix any $k<m$.  We find an option of $G+H$ with g-number $k$.  Let's assign numbers to bit positions, $0$ being the least significant, $1$ being the second least, and so forth.  For any number $\ell\in\nums$, let $(\ell)_i$ be the $i$th least significant bit of $\ell$ (starting with $\ell_0$).  Since $k<m$, there exists a unique $i$ such that $(k)_i = 0$, $(m)_i = 1$, and $(k)_j = (m)_j$ for all $j>i$.  Fix this $i$.  We have $(g_G)_i \oplus (g_H)_i = (m)_i = 1$, and so one of $g_G$ and $g_H$ has a $1$ in the $i$th position and the other a $0$.  Suppose first that $g_G$ has a $1$ in that position.  Then Left can play in $G$ to ``clear'' that bit: First, notice that $k \oplus g_H < g_G$.  Why?  Because
\[ (k\oplus g_H)_i = (k)_i \oplus (g_H)_i = 0\oplus 0 = 0 < 1 = (g_G)_i\;, \]
and for all $j>i$,
\[ (k\oplus g_H)_j = (k)_j \oplus (g_H)_j = (m)_j \oplus (g_H)_j = (g_G)_j \oplus (g_H)_j \oplus (g_H)_j = (g_G)_j\;. \]
So there must exist an $x\in G^L$ such that $\gnum(x) = k \oplus g_H$, and then by the inductive hypothesis,
\[ \gnum(x + H) = \gnum(x) \oplus g_H = k \oplus g_H \oplus g_H = k\;. \]
Similarly, if $(g_H)_i = 1$ and $(g_G)_i = 0$, then there exists $y\in H^L$ such that $\gnum(P + y) = k$.
\end{proof}

\begin{corollary}\label{cor:g-number-equiv}
Two impartial games $G$ and $H$ are equivalent if and only if $\gnum(G) = \gnum(H)$.
\end{corollary}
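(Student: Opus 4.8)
The plan is to reduce the equivalence $G\approx H$ to a statement about the single number $\gnum(G+H)$ and then apply the Sprague--Grundy theorem (Theorem~\ref{thm:sg}). The first step is to recall that an impartial game satisfies $-H = H$, so that $G - H = G + (-H) = G + H$. By Proposition~\ref{prop:partial-order}, $G\approx H$ holds if and only if $G - H$ is a zero game, hence if and only if $G + H$ is a zero game --- that is, a $\forall$-game.

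Next I would invoke Proposition~\ref{prop:g-eq-0}, which tells us that $G + H$ is a $\forall$-game if and only if $\gnum(G+H) = 0$. Applying Theorem~\ref{thm:sg} rewrites this as $\gnum(G)\oplus\gnum(H) = 0$. To finish, I would use the elementary fact that for natural numbers $m$ and $n$ one has $m\oplus n = 0$ if and only if $m = n$: the bitwise XOR of a number with itself is $0$, and if $m\neq n$ then they differ in some bit position, which is consequently set in $m\oplus n$. Chaining these equivalences gives $G\approx H \iff \gnum(G) = \gnum(H)$, as desired.

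There is no substantial obstacle here; all the real work was done in Theorem~\ref{thm:sg} and Proposition~\ref{prop:g-eq-0}, and this corollary is essentially bookkeeping. The only point that deserves a moment's attention is the passage from $G - H$ to $G + H$, which genuinely uses impartiality via $-H = H$ (and, in any case, $\gnum(-H) = \gnum(H)$ for impartial $H$, so the conclusion is unaffected).
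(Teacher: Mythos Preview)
Your argument is correct and essentially identical to the paper's: both pass from $G\approx H$ to $G+H$ being a $\forall$-game via $-H=H$, then invoke Proposition~\ref{prop:g-eq-0} and Theorem~\ref{thm:sg} to reach $\gnum(G)\oplus\gnum(H)=0$, and finish with the observation that $m\oplus n=0$ iff $m=n$. The only cosmetic difference is that you cite Proposition~\ref{prop:partial-order} explicitly for the first step, whereas the paper simply quotes the remark (made at the start of Section~\ref{sec:impartial}) that equivalence of impartial games amounts to $G+H$ being a $\forall$-game.
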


\begin{proof}
$G$ and $H$ are equivalent iff $G+H$ is a $\forall$-game, iff $\gnum(G+H) = 0$ (Proposition~\ref{prop:g-eq-0}), iff $\gnum(G)\oplus \gnum(H) = 0$ (Theorem~\ref{thm:sg}), iff $\gnum(G) = \gnum(H)$.
\end{proof}

Since every natural number $n$ is the g-number of the poset game $C_n$, this means that every game is equivalent to a single NIM stack.

We can use Theorem~\ref{thm:sg} to solve \Nim.  Given a \Nim\ game $P = C_{n_1} + \cdots + C_{n_k}$, we get $\gnum(P) = n_1\oplus \cdots \oplus n_k$.  If this number is nonzero, then let $i$ be largest such that $(\gnum(P))_i = 1$.  Alice can win by choosing a $j$ such that $(n_j)_i = 1$ and playing in $C_{n_j}$ to reduce its length (and hence its g-number) from $n_j$ to $n_j\oplus (\gnum(P))_i$.  This makes the g-number of the whole \Nim\ game zero.

We can use Corollary~\ref{cor:g-number-equiv} and Lemma~\ref{lem:dual-symmetry} to find the g-numbers of some natural, interesting posets.  We give Proposition~\ref{prop:level-sets} below as an example.  For positive integer $n$, let $[n] \eqdf \{1,2,\ldots,n\}$, and let $2^{[n]}$ be the powerset of $[n]$, partially ordered by $\subseteq$.  For $0\le k\le n$, we let $\binom{[n]}{k}\subseteq 2^{[n]}$ be the set of all $k$-element subsets of $[n]$.  Then we have the following:

\begin{proposition}\label{prop:level-sets}
Let $n>0$ be even and let $1\le k < k' \le n$ be such that $k'$ is odd.  Let $n = n_{j-1}\cdots n_1n_0$ and $k = k_{j-1}\cdots k_1 k_0$ be binary representations of $n$ and $k$, respectively, where $n_i,k_i\in\{0,1\}$ for $0\le i<j$.  Letting $P \eqdf \binom{[n]}{k} \cup \binom{[n]}{k'}$,
we have
\[ g(P) = \left\{\begin{array}{ll}
1 & \mbox{if $k_i > n_i$ for some $0\le i<j$,} \\
0 & \mbox{otherwise.}
\end{array}\right. \]
In particular, if $k$ is even, then $g(P) = \binom{n/2}{k/2} \bmod 2$.
\end{proposition}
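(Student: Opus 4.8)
The plan is to reduce the claim to Lemma~\ref{lem:dual-symmetry} by exhibiting a dual symmetry of the poset $P$. Since $n$ is even, partition $[n]$ into the $n/2$ consecutive blocks $\{1,2\},\{3,4\},\ldots,\{n-1,n\}$, and let $\map{\pi}{[n]}{[n]}$ be the fixed-point-free involution that swaps the two elements of each block. Then $\pi$ induces a map $\map{\p}{2^{[n]}}{2^{[n]}}$ by $\p(S)\eqdf\{\pi(s)\mid s\in S\}$. Since $\p$ preserves cardinality it restricts to a self-map of $P$; it satisfies $\p\circ\p=\identity_P$; and, being induced by a bijection of the ground set, it satisfies $x\le y\iff\p(x)\le\p(y)$ for all $x,y\in P$. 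So the hypotheses of Lemma~\ref{lem:dual-symmetry} hold for $\p$ on $P$, provided we check that its fixed-point set is a down set.

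Next I would identify $F\eqdf\{x\in P\mid\p(x)=x\}$. A subset of $[n]$ is fixed by $\p$ exactly when it is a union of blocks, and hence has even cardinality; since $k'$ is odd, no $k'$-element set is fixed, so $F$ consists precisely of the $k$-element sets that are unions of blocks. Thus $F=\emptyset$ when $k$ is odd, while for $k$ even the members of $F$ correspond to the choices of which $k/2$ of the $n/2$ blocks to include, so $\card F=\binom{n/2}{k/2}$. The crucial observation is that every element of $F$ has cardinality $k$, which is the \emph{smallest} cardinality occurring in $P$; hence each element of $F$ is a minimal element of $P$, so $F$ is trivially downward closed, and as an induced subposet it is the antichain $A_m$ with $m=\card F$. (The $k'$-element points of $P$ are not inert in the game, but $\p$ pairs them all off, since none are fixed.)

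With this in hand, Lemma~\ref{lem:dual-symmetry} gives $P\approx A_m$ as games, so by Corollary~\ref{cor:g-number-equiv} together with Exercise~\ref{ex:g-numbers}(\ref{item:antichain}) we get $g(P)=g(A_m)=m\bmod 2$. Hence $g(P)=0$ when $k$ is odd, and $g(P)=\binom{n/2}{k/2}\bmod 2$ when $k$ is even, which is the ``in particular'' clause. To recover the binary-digit formula I would invoke the classical fact (Lucas/Kummer) that $\binom{n/2}{k/2}$ is odd iff every binary digit of $k/2$ is at most the corresponding digit of $n/2$; since halving a number shifts its binary expansion down by one place and a number is even iff its lowest digit vanishes, this is equivalent to $k_i\le n_i$ for all $i$ (the comparison at $i=0$ being automatic when $k$ is even, and failing when $k$ is odd, consistently with $F=\emptyset$ there).

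The only genuinely delicate point in the whole argument is the down-set verification, and this is exactly where the hypothesis ``$k'$ odd'' is used: it forces $F\subseteq\binom{[n]}{k}$, so that $F$ lies entirely at the minimal level of $P$. If a $k'$-element set could be fixed, it would lie above $k$-element sets outside $F$, so $F$ would not be downward closed and Lemma~\ref{lem:dual-symmetry} would not apply. Everything else — constructing $\p$, counting $F$, and passing to binary digits — is routine.
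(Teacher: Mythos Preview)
Your proof is correct and follows essentially the same route as the paper: the same block-swapping involution on $[n]$, the same identification of the fixed-point set $F$ as the block-unions of size $k$ (empty when $k$ is odd, of size $\binom{n/2}{k/2}$ when $k$ is even), the same observation that $k'$ odd forces $F\subseteq\binom{[n]}{k}$ so that $F$ is a down set, and the same appeal to Lemma~\ref{lem:dual-symmetry}, Exercise~\ref{ex:g-numbers}(\ref{item:antichain}), and Lucas's theorem. Your closing remark on why the oddness of $k'$ is essential is a nice touch that the paper leaves implicit.
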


\begin{proof}
For sets $A$ and $B$, we say that \emph{$A$ respects $B$} if either $B\subseteq A$ or $A\cap B = \emptyset$.  Define the map $\map{\p}{[n]}{[n]}$ so that $\p(2i) = 2i-1$ and $\p(2i-1) = 2i$, for all $1\le i \le n/2$.  Then $\p$ swaps the elements of each two-element set $s_i \eqdf \{2i-1,2i\}$.  We lift the involution $\p$ to an involution $\map{\p'}{2^{[n]}}{2^{[n]}}$ in the usual way: $\p'(S) \eqdf \{\p(x)\mid x\in S\}$ for all $S\subseteq [n]$.  Notice that $\p'$ preserves set cardinality, and so $\p'$ maps $P$ onto $P$.  Also notice that $\p'(S) = S$ if and only if $S$ respects all the $s_i$.

Let $F$ be the set of all fixed points of $\p'$.  Since $k'$ is odd, no $S\in\binom{[n]}{k'}$ can respect all the $s_i$, and thus $\p'(S) \ne S$ for all $S\in\binom{[n]}{k'}$.  It follows immediately that $F\subseteq\binom{[n]}{k}$ is a down set, and so we have $g(P) = g(F)$ by Lemma~\ref{lem:dual-symmetry} and Corollary~\ref{cor:g-number-equiv}.  Since $F$ is also an antichain, we have $g(F) = \card{F} \bmod 2$ (Exercise~\ref{ex:g-numbers}(\ref{item:antichain})).  Now $F$ consists of those $k$-sets that respect all the $s_i$.  If $k$ is odd, then $F=\emptyset$, whence $0 = g(F) = g(P)$, and we also have $1=k_0>n_0=0$ so the proposition holds.  If $k$ is even, then by a simple combinatorial argument we have $\card F = \binom{n/2}{k/2}$---by selecting exactly $k/2$ of the $s_i$ to be included in each element of $F$.  Therefore, we have $g(P) = g(F) = \card F \bmod 2 = \binom{n/2}{k/2} \bmod 2$, and the proposition follows by Lucas's theorem.
\end{proof}

Proposition~\ref{prop:level-sets} clearly still holds if we include in $P$ any number of odd levels of $2^{[n]}$ above the $k$th level (including zero).

\bigskip

Theorem~\ref{thm:sg} shows how the $g$-number behaves under parallel unions of posets (Definition~\ref{def:series-parallel}).  How does the g-number behave under series unions?  Unfortunately, $\gnum(P/Q)$ might not depend solely on $\gnum(P)$ and $\gnum(Q)$.  For example, $g(V_2) = g(C_1) = 1$, but $g(C_1/V_2) = g(\Diamond_2) = 3$ whereas $g(C_1/C_1) = g(C_2) = 2$.
However, $\gset(P/Q)$ \emph{does} depend solely on $\gset(P)$ and $\gset(Q)$ for any posets $P$ and $Q$, and this fact forms the basis of the Deuber \& Thomass\'e algorithm of the next section.

There is one important case where $\gnum(P/Q)$ does only depend on $\gnum(P)$ and $\gnum(Q)$:

\begin{fact}\label{fact:series-gnum}
For any finite poset $P$ and any $k\ge 0$,
\[ \gnum\left(\frac{P}{C_k}\right) = \gnum(P) + k\;. \]
\end{fact}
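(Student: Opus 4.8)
The plan is to compute $\gnum(P/C_k)$ directly from its definition by analyzing every option of the poset game $P/C_k$ and then taking the $\mex$, with the induction running on $\card P$ (the number of points of $P$).

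First I would enumerate the options of $P/C_k$. By the definition of series union, every point of $P$ lies above every point of the chain $C_k$, while no point of $P$ lies above another point of $P$. Hence playing a point $x\in P$ removes exactly the points of $P$ that are $\ge x$ and leaves the chain entirely untouched, so the resulting position is $(P_x)/C_k$. On the other hand, playing the $i$-th point from the bottom of $C_k$ (for $1\le i\le k$) removes that point, all chain points above it, and all of $P$, leaving $C_{i-1}$. Therefore the options of $P/C_k$ are precisely the games $(P_x)/C_k$ for $x\in P$ together with $C_0,C_1,\ldots,C_{k-1}$.

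Next I would compute the $g$-set. By Exercise~\ref{ex:g-numbers}(2), $\gnum(C_j)=j$, so the chain options contribute $\{0,1,\ldots,k-1\}$ to $\gset(P/C_k)$. For an option $(P_x)/C_k$, note that $x\notin P_x$ (since $x\le x$), so $\card{P_x}<\card P$ and the inductive hypothesis yields $\gnum((P_x)/C_k)=\gnum(P_x)+k$; as $x$ ranges over $P$, $\gnum(P_x)$ ranges over $\gset(P)$. Altogether,
\[ \gset(P/C_k) = \{0,1,\ldots,k-1\} \cup \{\,g + k \mid g \in \gset(P)\,\}. \]
Every value below $k$ lies in this set, and a value $m\ge k$ lies in it iff $m-k\in\gset(P)$; hence its $\mex$ equals $k+\mex(\gset(P)) = k+\gnum(P)$, as claimed. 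The base case is $P$ empty, where $P/C_k=C_k$ and $\gnum(C_k)=k=0+k$ (and the instance $k=0$, where $P/C_0=P$, is likewise immediate).

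I expect the only point requiring care to be the description of the two families of options: one must check that playing a point of $P$ leaves $C_k$ fully intact (so the option really has the form $(P_x)/C_k$) and, dually, that playing in the chain annihilates all of $P$. Both follow straight from the definition of $P/Q$, but it is exactly this clean separation that makes the two option families assemble into $\{0,\ldots,k-1\}\cup(k+\gset(P))$, after which the $\mex$ computation is forced. Everything else is routine induction.
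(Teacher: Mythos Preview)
Your proof is correct. The paper's own argument is only the one-line hint: first prove the special case $\gnum(P/C_1)=\gnum(P)+1$, then induct on $k$ via associativity of series union, $P/C_k=(P/C_{k-1})/C_1$. Your route is slightly different: you fix $k$ and induct on $\card P$ directly, computing the full $g$-set of $P/C_k$ in one pass as $\{0,\ldots,k-1\}\cup(k+\gset(P))$. This gives a single clean induction rather than the paper's two-step scheme (which still hides an induction on $\card P$ inside the $k=1$ case). Your $g$-set formula is in fact the specialization $Q=C_k$ of the Deuber--Thomass\'e identity $\gset(P/Q)=\gset(P)\odot\gset(Q)$ (Lemma~\ref{lem:series-g-sets}): since $\gset(C_k)=\{0,\ldots,k-1\}$ has no gaps, $\mex_a\gset(C_k)=k+a$, recovering exactly your set.
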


This can shown by first showing that $\gnum(P/C_1) = \gnum(P)+1$, then using induction on $k$. By Fact~\ref{fact:series-gnum}, we get that $g(\Diamond_n) = 1+g(\Lambda_n)$ for example.

\section{Upper bounds}
\label{sec:ub}

When asking about the computational difficulty of determining the outcome of a game, we really mean a family of similar games, represented in some way as finite inputs.  In discussing game complexity, we will abuse terminology and refer to a family of games simply as a game.  (The same abuse occurs in other areas of complexity, notably circuit complexity.)  We will also use the same small-caps notation to refer both to a family of games and to the corresponding decision problem about the outcomes.

Perhaps the most common upper bound in the literature on the complexity of a game is membership in $\PSPACE$.  Without pursuing it further, we will just mention that, if a game $G$ of size $n$ satisfies: (i) every position of $G$ has size polynomial in $n$; (ii) the length of any play of $G$ is polynomial in $n$; and (iii) there are polynomial-time (or even just polynomial-space) algorithms computing the ``left option of'' and ``right option of'' relations on the positions of $G$, then $o(G)$ can be computed in polynomial space.  These properties are shared by many, many games.

In this section we will give some better upper bounds on some classes of finite poset games, the best one being that N-free poset games are in $\PTIME$ \cite{DT:N-free}.  We will assume that a poset is represented by its Hasse diagram, a directed acyclic graph (DAG) in which each element is represented as a node and an arc is placed from a node for element $x$ to the node for $y$ when $x<y$ and there is no element $z$ such that $x<z<y$.  The poset is the reflexive, transitive closure of the edge relation of the DAG.

\subsection{N-free games}

With the Hasse diagram representation, we can apply results from graph theory to devise efficient ways to calculate Grundy numbers for certain classes of games.  A good example is the class of N-free poset games.  An ``N'' in a poset is a set of four elements $\{a, b, c, d\}$ such that $a<b$, $c<d$, $c<b$, and the three other pairs are incomparable.  When drawn as a Hasse diagram the arcs indicating comparability form the letter ``N''.  A poset is \emph{N-free} if it contains no N as an induced subposet.  We let \Nfree\ denote the class of N-free poset games.

Valdes, Tarjan, and Lawler \cite{VTL:series-parallel} show that an N-free DAG can be constructed in linear time from a set of single nodes.  New components are created either by applying parallel union ($G+H$) or by applying series union ($G/H$).  As with posets, the parallel union is the disjoint union of $G$ and $H$.  The series union is a single DAG formed by giving to every element in $H$ with out-degree 0 (the sinks in $H$) an arc to every element in $G$ with in-degree 0 (the sources in $G$).  This gives the Hasse diagram of the series union of the corresponding posets.  Their algorithm provides a sequence of $+$ and $/$ operations that will construct a given N-free DAG from single points.

Deuber \& Thomass\'e \cite{DT:N-free} show that $\Nfree \in \PTIME$ by applying this construction to demonstrate how to calculate the g-number of an N-free poset game based on the sequence of construction steps obtained by the VTL algorithm above.  Their algorithm, which we now describe, works by keeping track of the g-sets of the posets obtained in the intermediate steps of the construction, rather than the g-numbers.  There is no need to store the g-numbers, because the g-number of any poset can always be easily computed from its g-set by taking the $\mex$.

The g-number of a single node is $1$.  This is the base case.

\begin{fact}\label{fact:parallel-g-sets}
Given posets $P$ and $Q$, the g-set of the parallel union $P+Q$ is
\begin{align*}
\gset(P+Q) &= \{ \gnum(P+Q_q): q\in Q\} \cup \{ \gnum(P_p+Q): p\in P\} \\
&= \{\gnum(P) \oplus \gnum(Q_q): q \in Q\} \cup \{\gnum(P_p) \oplus \gnum(Q): p \in P\}\;.
\end{align*}
\end{fact}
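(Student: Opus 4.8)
The plan is to unwind the definition of the g-set directly; the statement is essentially a bookkeeping consequence of the definitions together with Theorem~\ref{thm:sg}. Recall that for an impartial game $G$ we have $\gset(G) = \{\gnum(x) \mid x \in G^L\}$ (and $G^L = G^R$ here, so this is unambiguous), so the first displayed equality is really just a statement about what the options of the game $P+Q$ are. A move in $P+Q$ consists of playing some point $x$ of the underlying poset $P \disjointunion Q$ and deleting every point $\ge x$. If $x = p \in P$, then since in $P+Q$ every point of $Q$ is incomparable with $p$, the deleted set is exactly $\{y \in P \mid p \le y\}$, so the resulting position is $P_p \disjointunion Q$, which is the poset $P_p + Q$; symmetrically, playing $q \in Q$ leaves $P + Q_q$. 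Hence the options of $P+Q$ are precisely the games in $\{P_p + Q \mid p \in P\} \cup \{P + Q_q \mid q \in Q\}$, and taking g-numbers yields the first equality.

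For the second equality I would invoke the Sprague--Grundy theorem (Theorem~\ref{thm:sg}), together with the observation made in Section~\ref{sec:definitions} that the game associated with the parallel union of two posets is the game-theoretic sum of the associated poset games. Thus $\gnum(P_p + Q) = \gnum(P_p) \oplus \gnum(Q)$ and $\gnum(P + Q_q) = \gnum(P) \oplus \gnum(Q_q)$ for every $p \in P$ and every $q \in Q$; substituting these identities into the first equality gives the second.

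The only points requiring any care—and they are entirely routine—are (i) verifying that $(P+Q)_x$, computed in the combined poset, really equals $P_x + Q$ when $x \in P$ (resp. $P + Q_x$ when $x \in Q$), which is immediate from the incomparability built into the definition of parallel union in Definition~\ref{def:series-parallel}, and (ii) the legitimacy of identifying the poset operation $+$ with the game operation $+$, which was already noted when game addition was introduced. I do not expect any genuine obstacle here: once the options of $P+Q$ are correctly enumerated, the result falls out of Theorem~\ref{thm:sg}.
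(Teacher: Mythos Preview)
Your proposal is correct and matches the paper's own justification essentially point for point: the paper observes that the options of $P+Q$ are exactly the positions obtained by playing in $P$ or in $Q$, giving the first equality, and then invokes the Sprague--Grundy theorem for the second. Your write-up is in fact more careful than the paper's, which treats the fact as self-evident and offers only a one-paragraph informal explanation.
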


The second equality follows from the Sprague-Grundy theorem.  This is easy to see if you consider the root of the game tree for $P+Q$.  Each of its children results from playing either an element in $P$ or one in $Q$.  The left-hand set in the union contains the g-numbers of the games resulting from playing an element in $Q$; the right-hand set from playing an element in $P$.  Their union is the g-set of $P+Q$, so its g-number is the $\mex$ of that set.  

To calculate the g-set of a series union, we will need the definition of the \emph{Grundy product} of two finite sets of natural numbers:
\[ A\odot B \eqdf B \cup \{\mex_a B \mid a \in A\}\;. \]
$A\odot B$ is again a finite set of natural numbers that is easy to compute given $A$ and $B$.  Basically, $A\odot B$ unions $B$ with the version of $A$ we get after re-indexing the natural numbers to go ``around'' $B$.  Notice that $\mex(A\odot B) = \mex_{\mex A}B$.  We will use this fact below.

\begin{lemma}[Deuber \& Thomass\'e~\cite{DT:N-free}]\label{lem:series-g-sets}
For any finite posets $P$ and $Q$, \ $\gset(P/Q) = \gset(P) \odot \gset(Q) = \gset(Q) \union \{\mex_i(\gset(Q)): i \in \gset(P)\}$.
\end{lemma}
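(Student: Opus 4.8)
The plan is to prove this by induction on the structure of $P$ (say on $\card P$), analyzing the options available to the first player in the game $P/Q$.

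First I would observe that the moves in $P/Q$ split into two kinds: playing a point $q \in Q$, which (since every point of $P$ lies above every point of $Q$) removes all of $P$ and leaves the position $Q_q$; and playing a point $p \in P$, which leaves the position $(P_p)/Q$, because the points of $Q$ all lie below $p$ and so survive, while the surviving points of $P$ are exactly $P_p$, still sitting above all of $Q$. Hence
\[ \gset(P/Q) = \{\gnum(Q_q) : q \in Q\} \cup \{\gnum((P_p)/Q) : p \in P\}\;. \]
The first set is exactly $\gset(Q)$ by definition of the g-set of $Q$. For the second set, I apply the inductive hypothesis to each $P_p$ (which is a strictly smaller poset) to get $\gset((P_p)/Q) = \gset(P_p) \odot \gset(Q)$, and then take the $\mex$: recalling the remark just before the lemma that $\mex(A \odot B) = \mex_{\mex A} B$, we get $\gnum((P_p)/Q) = \mex_{\mex \gset(P_p)} \gset(Q) = \mex_{\gnum(P_p)} \gset(Q)$. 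As $p$ ranges over $P$, the values $\gnum(P_p)$ range over exactly $\gset(P)$, so the second set is $\{\mex_i \gset(Q) : i \in \gset(P)\}$. Combining the two pieces gives $\gset(P/Q) = \gset(Q) \cup \{\mex_i \gset(Q) : i \in \gset(P)\}$, which is precisely $\gset(P) \odot \gset(Q)$ by the definition of the Grundy product.

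The main obstacle, and the only place that needs real care, is the base case and making sure the induction is set up coherently — in particular handling $P = \emptyset$ (where $P/Q = Q$, $\gset(P) = \emptyset$, and indeed $\emptyset \odot \gset(Q) = \gset(Q)$) so that the identity degenerates correctly, and likewise checking the degenerate case $Q = \emptyset$. One should also double-check the index-bookkeeping claim $\gnum((P_p)/Q) = \mex_{\gnum(P_p)}\gset(Q)$: it rests on the stated fact $\mex(A \odot B) = \mex_{\mex A} B$, which itself is immediate from the definition of $\odot$ since $A \odot B = B \cup \{\mex_a B : a \in A\}$ and the elements $\mex_0 B, \dots, \mex_{(\mex A) - 1} B$ are precisely the members of $\{\mex_a B : a \in A, a < \mex A\}$ together with… — actually the cleanest route is to note directly that the least natural number absent from $A \odot B$ is the $(\mex A)$-th least natural number absent from $B$, since $A \odot B$ fills in exactly the "slots" of $B$-absent numbers indexed by elements of $A$. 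Everything else is routine unwinding of the definitions of $\gset$, $\mex_i$, and $\odot$, together with one appeal to Theorem~\ref{thm:sg}-free reasoning (we only need the combinatorial description of moves in a series union, not Sprague–Grundy itself).
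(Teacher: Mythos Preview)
Your proposal is correct and follows essentially the same route as the paper's own proof: split the options of $P/Q$ into plays in $Q$ (yielding $\gset(Q)$) versus plays in $P$ (yielding positions $P_p/Q$), apply the inductive hypothesis to the latter, and then use the identity $\mex(A \odot B) = \mex_{\mex A} B$ to collapse $\gnum(P_p/Q)$ to $\mex_{\gnum(P_p)}\gset(Q)$. The paper presents this as a single chain of equalities without dwelling on the base case or the $\mex$ identity, but the underlying argument is identical to yours.
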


The left-hand set of the union results from playing an element in $Q$, which removes all of the elements in $P$.  Using induction, we can see what happens when an element in $P$ is played.

\begin{proof}[Proof of Lemma~\ref{lem:series-g-sets}]
The fifth equality uses the inductive hypothesis.
\begin{align*}
\gset(P/Q) & = \{\gnum((P/Q)_r): r \in P/Q\} \\
           & = \{\gnum((P/Q)_p): p \in P\} \cup \{\gnum((P/Q)_q): q \in Q\} \\
           & = \{\gnum((P_p/Q)): p \in P\} \cup  \{\gnum(Q_q): q \in Q\} \\
           & = \{\mex \gset(P_p/Q): p\in P\} \cup \gset(Q) \\
           & = \{\mex(\gset(P_p) \odot \gset(Q)): p \in P\} \cup \gset(Q) \\
           & = \{\mex_{\mex\gset(P_p)}(\gset(Q)): p \in P\} \cup \gset(Q) \\
           & = \{\mex_{\gnum(P_p)}(\gset(Q)): p \in P\} \cup \gset(Q) \\
           & = \{\mex_i(\gset(Q)): i \in \gset(P)\} \cup \gset(Q) \\
           & = \gset(P) \odot \gset(Q) 
\end{align*}
\end{proof}

In particular, the g-number of $P/Q$ is greater than or equal to the sum of the g-numbers of $P$ and $Q$.  Notably, it's an equality if $Q$ is $C_n$ for some $n$ (Fact~\ref{fact:series-gnum}) and the reason is that the g-set of $C_n$ has no gaps, that is, it contains all of the values from 0 to $n-1$.  It's easy to see that it's true when $P$ and $Q$ are both singletons.  Their g-numbers are both 1 and forming their series-union creates a NIM stack of size 2 and that has g-number 2.  

Another way to understand Lemma~\ref{lem:series-g-sets} is to consider the game tree of $P/Q$, and we'll look at the simple case where $P$ is an arbitrary game with g-number $k$ and $Q$ is a singleton.  Consider the root node $r$ of the game tree of $P/Q$.  One of its children represents playing the single element in $Q$ and that child has g-number $0$.  The rest of $r$'s children represent game configurations reached by playing an element in $P$.  By the induction hypothesis the g-number of each of these nodes will be one more than in $P$'s game tree where they had g-numbers $0$ to $k-1$, and perhaps g-numbers $k+1$ and larger.  So in $P/Q$'s tree they have g-numbers $1$ to $k$, with perhaps g-numbers $k+2$ or larger.  Because the child reached by playing $Q$'s single element has g-number $0$, the first missing value in the g-set formed from these g-numbers is $k+1$.  

Now using Fact~\ref{fact:parallel-g-sets} and Lemma~\ref{lem:series-g-sets}, the decomposition described in \cite{VTL:series-parallel} can generate a binary tree where each internal node is labeled with a poset $P$ and an operation (parallel union or series union), and its children are the two posets combined to form $P$.  Starting with each leaf, where the poset is a singleton and the g-set is $\{0\}$, and moving up the tree, one can apply Fact~\ref{fact:parallel-g-sets} and Lemma~\ref{lem:series-g-sets} to compute the g-set of the root (and none of the g-numbers involved exceed the size of the final poset).  This can all be done in time $O(n^4)$.

\subsection{Results on some classes of games with N's}

General results for classes of games containing an ``N'' have been few.  In 2003, Steven Byrnes \cite{Byrnes:chomp} proved a poset game periodicity theorem, which applies to, among others, \Chomp-like games, which contain many ``N''-configurations.  

Here's the theorem, essentially as stated in the paper:
\begin{theorem}
In an infinite poset game $X$, suppose we have two infinite chains $C$ ($c_1 < c_2 < \cdots$) and $D$ ($d_1 < d_2 < \cdots$), and a finite subset $A$, all pairwise disjoint, and assume that no element of $C$ is less than an element of $D$. Let $A_{m,n} = A \cup C \cup D - \{x \in X | x \ge c_{m+1}\}-\{x \in X | x \ge d_{n+1}\}$
(that is, $A_{m,n}$ is the position that results from starting with the poset $A \cup C \cup D$, then making the two moves $c_{m+1}$ and $d_{n+1}$). Let $k$ be a nonnegative integer. Then either:
\begin{enumerate}
\item there are only finitely many different $A_{m,n}$ with g-number $k$; or
\item we can find a positive integer $p$ such that, for large enough $n$, $g(A_{m,n}) = k$ if and only if $g(A_{m+p,n+p}) = k$.
\end{enumerate}
Thus, as the poset $A$ expands along the chains $C$ and $D$, positions with any fixed g-number have a regular structure.
\end{theorem}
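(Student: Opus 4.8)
Fix $k$. The plan is to recast the question as an eventual-periodicity statement over a finite alphabet and then apply a pigeonhole argument along the two chains. The first step replaces Grundy numbers by \emph{capped} ones: for a finite poset game $P$ set $\bar g(P)\eqdf g(P)$ if $g(P)\le k$ and $\bar g(P)\eqdf\star$ otherwise, where $\star$ is a new symbol. Since $g(P)=\mex\{g(P_x)\mid x\in P\}$, since $\mex S\le k$ exactly when $\{0,\dots,k\}\not\subseteq S$, and since in that case $\mex S$ depends only on $S\cap\{0,\dots,k\}$, the value $\bar g(P)$ is a function of the \emph{set} $\{\bar g(P_x)\mid x\in P\}$, a subset of the finite alphabet $\Sigma\eqdf\{0,1,\dots,k,\star\}$. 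As ``$g(P)=k$'' is visible in $\bar g(P)$, it suffices to understand the two-dimensional array $\bar g(A_{m,n})$.

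The second step analyzes the options of $A_{m,n}$ for large $m,n$. Each element of $A$ lies above either all of $C$ (and is then absent from every $A_{m,n}$) or only an initial segment of $C$ (and then survives once $m$ is large), and likewise for $D$; so the surviving elements of $A$ form a fixed set once $m,n$ exceed some $N_0$. Playing a chain element $c_i$, $i\le m$, removes exactly the tail $c_i,c_{i+1},\dots$ of $C$---here the hypothesis that no element of $C$ lies below an element of $D$ is what keeps the $D$-staircase intact---so $(A_{m,n})_{c_i}=A_{i-1,n}$. Playing $d_j$, $j\le n$, removes the tail of $D$ from $d_j$ and possibly a top chunk of $C$, and one checks that the resulting position is either $A_{m,j-1}$ or a fixed finite poset. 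Playing an element of $A$ yields either a fixed finite poset or again a position of the same general shape but with a strictly smaller finite part (or fewer infinite chains), which one handles by induction on $\card{A}$. Assembling these cases---and using that a monotonically growing family of subsets of the finite set $\Sigma$ must stabilize---one obtains, for all $m,n\ge N_1$, a recursion of the form
\[ \bar g(A_{m,n}) \;=\; \Phi\left(\{\bar g(A_{i,n})\mid i<m\},\ \{\bar g(A_{m,j})\mid j<n\},\ S\right) \]
for a fixed function $\Phi$ and a fixed finite set $S\subseteq\Sigma$.

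The third step is the pigeonhole argument. Attach to each index $m$ its \emph{profile}: the function $n\mapsto\big(\bar g(A_{m,n}),\ \{\bar g(A_{i,n})\mid i\le m\}\big)$, valued in the finite set $\Sigma\times 2^\Sigma$. Using the recursion one shows that each profile is eventually periodic in $n$ with preperiod and period bounded uniformly in $m$, so profiles take only finitely many values; moreover the profile of $m+1$ is determined by that of $m$, via a rule that commutes with shifting $m$ and $n$ together. By pigeonhole the sequence of profiles is eventually periodic in $m$ with some period $p\ge1$, and shift-invariance then gives $\bar g(A_{m+p,n+p})=\bar g(A_{m,n})$ for all $m$ beyond the preperiod and all sufficiently large $n$. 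If no such pair ever has $g=k$---equivalently, only finitely many $(m,n)$ satisfy $g(A_{m,n})=k$---we are in case~(1); otherwise we are in case~(2).

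The step I expect to be the real obstacle is showing that the profiles form a \emph{finite} set, i.e.\ that $\bar g(A_{m,\cdot})$ is eventually periodic in $n$ with bounds that do not depend on $m$. This is where the bookkeeping of the second step has to be carried out with care: one must control how deep moves in $D$ can shave portions off $C$, how both chains interact with the finite set $A$, and---above all---that every ``eventually'' threshold can be chosen uniformly across all rows. That uniformity is the technical heart of Byrnes's argument.
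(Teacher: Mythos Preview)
The paper does not prove this theorem. It is quoted as a result of Byrnes~\cite{Byrnes:chomp} and stated ``essentially as in the paper,'' but no proof or proof sketch is given here; the surrounding text only illustrates the statement with two-stack \Nim\ and generalized \Chomp\ examples. So there is nothing in this paper to compare your proposal against.

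That said, a brief remark on your sketch itself. The capping idea in your first step is exactly right and is the standard opening move for Grundy periodicity results. Your option analysis in the second step, however, is looser than it reads. You write $(A_{m,n})_{c_i}=A_{i-1,n}$, which is correct; but for $d_j$ you assert the result is ``either $A_{m,j-1}$ or a fixed finite poset,'' and this is not quite the shape of what happens: playing $d_j$ removes a tail of $D$, possibly a tail of $C$ (so the $C$-index drops from $m$ to some $f(j)$ depending only on $j$), and some subset of $A$ that need not coincide with what the move $c_{f(j)+1}$ would remove. So the option is in general an $A'_{m',j-1}$ for a \emph{different} finite part $A'$, and you must feed these auxiliary arrays into the same machine. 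You do gesture at this with ``a position of the same general shape but with a strictly smaller finite part \ldots which one handles by induction on $\card{A}$,'' but the induction is on more than just $\card{A}$: you need a well-founded ordering on the possible finite configurations that can arise. Your own diagnosis of the hard part---getting the preperiod and period bounds uniform in $m$---is accurate, and it is precisely this that forces the careful bookkeeping in Byrnes's original argument.
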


A simple example of a class of games covered by the theorem is the family of two-stack \Nim\ games, where $A$ is empty and $A_{m,n}$ consists of an $m$-chain and an $n$-chain.  The g-number $0$ occurs for every $A_{n,n}$ so the periodicity is $1$.  The g-number $1$ occurs for every $A_{2n, 2n+1}$ and so has periodicity $2$.  In fact, one can find a periodic repetition for every g-number.  The surprising thing is that this is still true when you allow elements in one chain to be less than elements in the other.

Another family contains \Chomp, described in Section~\ref{sec:poset-game-examples}.  We can generalize \Chomp\ to games where the rows do not have to contain the same number of elements.  Byrnes showed that for such games there is a periodicity in the g-numbers when we fix the size of all but the top two rows.

As Byrnes claims, this yields a polynomial-time decision algorithm for each family generated from a fixed $A$ but not a uniformly polynomial-time algorithm across the families, as the time is parameterized by $A$.

\subsubsection{Bounded-width poset games}
\label{subsubsec:bwpg}

If a poset $P$ has width $k$, that is, if $k$ is the maximum size of any antichain in $P$, then there are only $|P|^k$ many positions at most in the game: if $x_0,x_1,\ldots,x_{n-1} \in P$ are the elements chosen by the players in the first $n$ moves of the game, then the resulting position is completely determined by the minimal elements of the set $\{x_0,\ldots,x_{n-1}\}$, i.e., an antichain of size $\le k$.

This means that, for constant $k$, one can compute the g-number of $P$ in polynomial time using dynamic programming.  The exponent on the running time depends on $k$, however.  For certain families of bounded-width posets, one can beat the time of the dynamic programming algorithm; for example, one can compute the g-number of width-2 games in linear time.

\subsubsection{Parity-uniform poset games}

Daniel Grier recently showed that computing arbitrary poset game outcomes is $\PSPACE$-complete (Theorem~\ref{thm:poset-pspace-complete} and its proof, below).  He reduces from True Quantified Boolean Formulas (see Section~\ref{sec:pspace-hard}).  His reduction constructs posets with only three levels, i.e., posets that can be partitioned into three antichains (equivalently, the maximum size of a chain is $3$).  An obvious follow-up question is whether two-level poset games remain $\PSPACE$-complete.  This question is still open, but in \cite{FGKT:two-level} it is shown that a certain subclass of two-level posets is easy, namely, the ``parity-uniform'' posets.  This result builds on and extends earlier results of Fraenkel \& Scheinerman~\cite{FS:hypergraph-game}.

\begin{definition}[\cite{FGKT:two-level}]\label{def:parity-uniform}
Let $P$ be a two-level poset, partitioned into two sets $T$ (top points) and $B$ (bottom points) so that for any $x,y\in P$, if $x<y$ then $x\in B$ and $y\in T$.  We can then view $P$ as a bipartite graph, where the points of $P$ are the vertices and with an edge drawn between each $x$ and $y$ iff $x<y$.

We say that $P$ (viewed as a bipartite graph) is \emph{parity-uniform} iff: (i) all top points have the same degree parity (i.e., degrees of top points are either all even or all odd); and (ii) there is a bipartition of the bottom points such that every top point has a odd number of neighbors in at least one of the partitions (one of the partitions could also be empty).
\end{definition}

A parity-uniform poset has a simple expression for its g-number.

\begin{theorem}[F et al.~\cite{FGKT:two-level}]\label{thm:parity-uniform}
Let $P$ be a two-level poset, viewed as a bipartite graph with bipartition $T,B$ as in Definition~\ref{def:parity-uniform}, and suppose that $P$ is parity-uniform.  Let $p\in\{0,1\}$ be the common degree parity of the points in $T$.  Let $b \eqdf  \card B \bmod 2$ and let $t \eqdf \card T \bmod 2$.  Then
\[ \gnum(P) = b \oplus t(p\oplus 2)\;. \]
\end{theorem}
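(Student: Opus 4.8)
The plan is to prove the identity by induction on $\card P$, computing $\gnum(P)=\mex\gset(P)$ directly from the g-numbers of the options of $P$. The first step is to observe that parity-uniformity is inherited by every option. If a player plays a top point $y$, the remaining two-level poset $P-y$ has the same bottom set $B$, the top degrees are unchanged (so they still all have parity $p$), and the same bottom-bipartition still witnesses~(ii); the parameters change only in that $\card T$ drops by one. If a player plays a bottom point $x$, then $x$ and exactly the top points above $x$ are deleted, so every surviving top point keeps its degree (hence parity $p$) and the restriction of the bipartition to $B\setminus\{x\}$ still works; here $\card B$ drops by one and $\card T$ drops by $\deg(x)$. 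Note also that~(ii) forbids isolated top points, so the only degenerate situation is $T=\emptyset$, where $P$ is an antichain and the formula correctly reads $\gnum(P)=\card B\bmod 2$ (the term $t(p\oplus2)$ vanishes); this is the base case of the induction.

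Applying the inductive hypothesis to the options of $P$, their g-numbers lie among the three numbers
\[
  u_T \eqdf b\oplus(1\oplus t)(p\oplus2),\quad
  u_{B,0}\eqdf(1\oplus b)\oplus t(p\oplus2),\quad
  u_{B,1}\eqdf(1\oplus b)\oplus(1\oplus t)(p\oplus2),
\]
where $u_T$ is realized iff $T\neq\emptyset$, $u_{B,0}$ iff some bottom point has even degree, and $u_{B,1}$ iff some bottom point has odd degree. A quick check, writing $m\eqdf b\oplus t(p\oplus2)$ for the claimed value, gives $u_T=m\oplus(p\oplus2)$, $u_{B,0}=m\oplus1$, and $u_{B,1}=m\oplus1\oplus(p\oplus2)$; since $p\oplus2\in\{2,3\}$ and $m\in\{0,1,2,3\}$, the four numbers $m,u_T,u_{B,0},u_{B,1}$ are exactly $\{0,1,2,3\}$ in some order. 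In particular $m\notin\gset(P)$ automatically, so it only remains to prove $\{0,1,\dots,m-1\}\subseteq\gset(P)$.

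Since $u_T$ is always realized in the inductive step, what is left is: whenever $u_{B,0}<m$ produce a bottom point of even degree, and whenever $u_{B,1}<m$ produce a bottom point of odd degree. This is where hypotheses~(i) and~(ii) do their work, and I expect it to be the crux of the argument. The key counts come from double-counting edges, $\sum_{z\in T}\deg(z)=\sum_{w\in B}\deg(w)$: if $p=1$ the number of odd-degree bottom points is $\equiv\card T\pmod2$, and if $p=0$ it is even; moreover, when $p=0$, condition~(ii) forces every top point to have an odd number of neighbors in \emph{each} block $B_i$ of the bipartition (the two block-counts sum to the even number $\deg(z)$, and (ii) makes one of them odd), so $\sum_{w\in B_1}\deg(w)$, which counts the edges meeting $B_1$, is $\equiv\card T\pmod2$, and this yields an odd-degree bottom point in $B_1$ whenever $\card T$ is odd. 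Also $\#\{\text{even-degree bottom points}\}=\card B-\#\{\text{odd-degree bottom points}\}$.

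The proof then finishes with a small case split on $t=\card T\bmod2$. If $t=0$ then $m=b\le1$, and the only obligation is $0\in\gset(P)$ when $b=1$: there $u_{B,0}=0$, $\card B$ is odd, and the number of odd-degree bottom points is even (both when $p=0$ and, since $\card T$ is then even, when $p=1$), so an even-degree bottom point exists. If $t=1$ then $m\in\{2,3\}$; here $u_T=b$ realizes one of $\{0,1\}$, and the other value $1\oplus b$ equals $u_{B,1}$, so one needs an odd-degree bottom point---available because $\card T=1$ is odd (via the count for $p=1$, or via the condition-(ii) argument for $p=0$). Finally, when $m=3$ one additionally needs $2=u_{B,0}\in\gset(P)$, i.e.\ an even-degree bottom point: comparing the parity of $\card B$ with that of the number of odd-degree bottom points (odd vs.\ even when $p=0,b=1$; even vs.\ odd when $p=1,b=0$) shows there is always one. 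Hence $\mex\gset(P)=m$ in every case, completing the induction.
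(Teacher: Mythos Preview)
Your proposal is correct and follows precisely the approach the paper indicates: induction on $\card P$ together with a case analysis (the paper itself gives no further detail, deferring to the cited source). Your handling of the inheritance of parity-uniformity under both kinds of moves, the identification of the three possible option g-numbers $u_T,u_{B,0},u_{B,1}$ and their relation to $m$, and the parity-counting arguments (including the use of condition~(ii) to produce an odd-degree bottom point when $p=0$ and $t=1$) are all sound.
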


Theorem~\ref{thm:parity-uniform} is proved by induction on $\card P$ together with a case analysis.

\section{Lower bounds}
\label{sec:lb}

In this section we give some lower bounds on game complexity.  There is a vast literature on combinatorial game complexity, and we make no attempt to be thorough, but rather concentrate on poset game complexity.

\subsection{A note about representations of games}
\label{sec:representations}

The complexity of a game depends quite a bit on its representation.  The choice of representation is usually straightforward, but not always.  For example, how should we represent an N-free poset?  Just via its Hasse diagram, or via an expression for the poset in terms of single points and parallel union and series union operators?  The results of Valdes, et al\@.~\cite{VTL:series-parallel} show that one representation can be converted into the other in polynomial time, so the choice of representation is not an issue unless we want to consider complexity classes within $\PTIME$ or more succinct representations of posets, as we will do below.  There, fortunately, our hardness results apply to either representation.

Even if the representation of a game is clear, the results may be counterintuitive.  For example, how should we represent members of the class of \emph{all} finite games?  In Section~\ref{sec:definitions}, we defined a game as an ordered pair of its left and right options.  We must then represent the options, and the options of options, and so on.  In effect, to represent an arbitrary finite game explicitly, we must give its entire game tree (actually, game DAG, since different sequences of moves may end up in the same position).  Under this representation, there is a straightforward algorithm to compute the outcome of any game: use dynamic programming to find the outcome of every position in the game.  Since every position is encoded in the string representing the game, this algorithm runs in polynomial time.

What makes a game hard, then, is that we have a succinct representation for it that does not apply to all games.  For example, the obvious representation of a poset game is the poset itself, and the number of positions is typically exponential in the size of the poset.  Subfamilies of poset games may have even more succinct representations.  For example, a \Nim\ game can be represented as a finite list of natural numbers in binary, giving the sizes of the stacks, and a game of \Chomp\ can be represented with just two natural numbers $m$ and $n$ in binary, giving the dimensions of the grid.  Notice that this \Chomp\ representation is significantly shorter than what is needed to represent an arbitrary \emph{position} in a \Chomp\ game; the latter is polynomial in $m+n$.

In what sense does finding a winning strategy in \Chomp\ reduce to determining the outcome of \Chomp\ games?  We already know that every \Chomp\ game is an $\exists$-game because it has a maximal point.  We could find a winning strategy if we were able to determine the outcome of every \Chomp\ position, but even writing down a query to an ``outcome oracle'' takes time linear in $m+n$, which is exponential in the input size.  The more modest goal of finding a winning first move may be more feasible, because the position after one move is simple enough to describe by a polynomial-length query string.  To our knowledge, no efficient algorithm is known to determine the outcome of an arbitrary \Chomp\ position after a single move, even allowing time $(m+n)^{O(1)}$.

We will more to say about representations below when we discuss lower bounds for poset games within the complexity class $\PTIME$.

\subsection{Some PSPACE-hard games}
\label{sec:pspace-hard}

Many games have been shown $\PSPACE$-hard over the years.  Early on, Even and Tarjan showed that $\Hex$ generalized to arbitrary graphs is $\PSPACE$-complete \cite{ET:Hex}.  A typical proof of $\PSPACE$-hardness reduces the $\PSPACE$-complete True Quantified Boolean Formulas (TQBF \cite{Stockmeyer:PH}) problem to the outcome of a game.  We can consider a quantified Boolean formula $\p = (\exists x_1)(\forall x_2) \cdots \psi$ (where $\psi$ is a Boolean formula in conjunctive normal form (cnf)) itself as a game, where players alternate choosing truth values for $x_1,x_2,\ldots$, the first player (Right, say) winning if the resulting instantiation of $\psi$ is true, and Left winning otherwise.\footnote{This is technically not a combinatorial game by our definition, because the end condition is different.  One can modify the game slightly to make it fit our definition, however.}

TQBF seems ideal for encoding into other games.  Thomas Schaefer showed a number of interesting games to be $\PSPACE$-hard this way \cite{Schaefer:PSPACE-complete}.  One interesting variant of TQBF that Schaefer proved $\PSPACE$-complete is the game where a positive Boolean formula $\psi$ is in cnf with no negations, and players alternate choosing truth values for the Boolean variables.  Schaefer called this game $G_\textup{pos}(\mbox{POS CNF})$.  Unlike TQBF, however, the variables need not be chosen in order; players may choose to assign a truth value to any unassigned variable on any move.  Left (who moves first) wins if $\psi$ is true after all variables have been chosen, and Right wins otherwise.  Since $\psi$ is positive, Left always wants to set variables to $1$ and Right to $0$.

As another example, consider $\Geo$.  The input is a directed graph~$G$ and a designated vertex~$s$ of~$G$ on which a token initially rests.  The two players alternate moving the token on~$G$ from one node to a neighboring node, trying to force the opponent to move to a node that has already been visited.  $\Geo$ is a well-known $\PSPACE$-complete game~\cite{Schaefer:PSPACE-complete,Sipser:theory2}.  In \cite{LicSip80}, Lichtenstein \& Sipser show that $\Geo$ is $\PSPACE$-complete even for bipartite graphs.

An obvious way to turn $\Geo$ into a black-white game is to color the nodes of graph~$G$ black and white.  Each player is then only allowed to move the token to a node of their own color.  Since moves are allowed only to neighboring nodes, the black-white version is equivalent to the uncolored version on bipartite graphs.  The standard method of showing that $\Geo$ is $\PSPACE$-complete is via a reduction from True Quantified Boolean Formulas (TQBF) to $\Geo$ (see for example~\cite{Sipser:theory2}).  Observe that the graph constructed in this reduction is not bipartite.  That is, there are nodes that potentially may be played by both players.  Hence, we cannot directly conclude that the black-white version is $\PSPACE$-complete.  However, in \cite{LicSip80} Lichtenstein \& Sipser show that $\Geo$ is indeed $\PSPACE$-complete for bipartite graphs.

We now consider the game $\NK$.  This game is defined on an undirected graph~$G$.  The players alternately play an arbitrary node from~$G$.  In one move, playing node~$v$ removes~$v$ and all the direct neighbors of~$v$ from~$G$.  In the black-white version of the game, we color the nodes black and white.  Schaefer~\cite{Schaefer:PSPACE-complete} showed that determining the winner of an arbitrary $\NK$ instance is $\PSPACE$-complete.  He also extended the reduction to bipartite graphs, which automatically yields a reduction to the black-white version of the game (see~\cite{GJ:NPcomplete}).  Therefore, black-white $\NK$ is also $\PSPACE$-complete.

The game of \Col\ \cite{BCG:WW} is a two-player combinatorial strategy game played on a simple planar graph, some of whose vertices may be colored black or white.  During the game, the players alternate coloring the uncolored vertices of the graph.  One player colors vertices white and the other player colors vertices black.  A player is not allowed to color a vertex neighboring another vertex of the same color.  The first player unable to color a vertex loses.  A well-known theorem about \Col\ is that the value of any game is either $x$ or $x + *$ where $x$ is a number.  Removing the restriction that \Col\ games be played on planar graphs and considering only those games in which no vertex is already colored, we get a new game, \GenCol\ (generalized \Col).  It is shown in \cite{FGMST:black-white} that \GenCol\ is $\PSPACE$-complete; furthermore, \GenCol\ games only assume the two very simply game values $0$ and $*$.

In \cite{SC:provably-difficult}, Stockmeyer \& Chandra give examples of games that are complete for exponential time and thus provably infeasible.

\subsection{Lower bounds for poset games}

Until recently, virtually no hardness results were known relating to poset games, and the question of the complexity of determining the outcome of a game was wide open, save the easy observation that it is in $\PSPACE$.

For the moment, let $\PG$ informally denote the decision problem of determining the outcome of a arbitrary given (impartial) poset game, that is, whether or not the first player (Alice) can win the game with perfect play.  The first lower bound on the complexity of $\PG$ we are aware of, and it is a modest one, was proved by Fabian Wagner \cite{Wagner:nim} in 2009.  He showed that $\PG$ is $\LOGSPACE$-hard\footnote{$\LOGSPACE$ is short for LOGSPACE.} under $\FO$-reductions (First-Order reductions).  This is enough to show, for example, that $\PG\notin\AC^0$.  Soon after, Thomas Thierauf \cite{Thierauf:nim} showed that $\PG$ is hard for $\NL$ under $\AC^0$ reductions.\footnote{$\NL$ is nondeterministic LOGSPACE.}  A breakthrough came in 2010, when Adam Kalinich, then a high school student near Chicago, Illinois, showed that $\PG$ is hard for $\NC^1$ under $\AC^0$ reductions \cite{Kalinich:flip}.  For the proof, he invents a clever way to obliviously ``flip'' the outcome of a game, i.e., to toggle the outcome between $\exists$ and $\forall$.  This allows for the simulation of a NOT-gate in an $\NC^1$ circuit.  (An OR-gate can be simulated by the series union construction of Definition~\ref{def:series-parallel}.  See below.)

The astute reader will notice that Kalinich's result appears to be weaker than the other two earlier results.  In fact, the three results are actually incomparable with each other, because they make different assumptions about how poset games are represented as inputs.  We say more about this below, but first we mention that Wagner's and Thierauf's results both hold even when restricted to \Nim\ games with two stacks, and Kalinich's result holds restricted to $N$-free games.  Modest as they are, these are currently the best lower bound we know of for N-free poset games.

Very recently, the complexity of $\PG$ was settled completely by Daniel Grier, an undergraduate at the University of South Carolina \cite{Grier:poset-games}.  He showed that $\PG$ is $\PSPACE$-complete via a polynomial reduction (henceforth, p-reduction) from $\NK$.  Here, it is not important how a game is represented as an input, so long as the encoding is reasonable.  His proof shows that $\PSPACE$-completeness is still true when restricted to three-level games, i.e., posets where every chain has size at most three (equivalently, posets that are partitionable into at most three antichains).  The games used in the reduction are of course not N-free.


\subsection{Representing posets as input}

As we discussed above, for any of the various well-studied families of poset games (\Chomp, \Divisors, \Nim, etc.), there is usually an obvious and natural way to represent a game as input.  For example, an instance of \Chomp\ can be given with just two positive integers, one positive integer for Divisors, and a finite list of positive integers for \Nim, giving the heights of the stacks.  When considering arbitrary finite posets, however, there is no single natural way to represent a poset as input, but rather a handful of possibilities, and these may affect the complexity of various types of poset games.  We consider two broad genres of poset representation:
\begin{description}
\item[Explicit] The poset is represented by an explicit data structure, including the set of points and the relations between them.  In this representation, the size of the poset is always comparable to the size of the input.
\item[Succinct (Implicit)] The poset is represented by a Boolean circuit with two $n$-bit inputs.  The inputs to the circuit uniquely represent the points of the poset, and the ($1$-bit) output gives the binary relation between these two inputs.  In this representation, the size of the poset can be exponential in the size of the circuit.
\end{description}
Within each representational genre, we will consider three general approaches to encoding a poset $P$, in order from ``easiest to work with'' to ``hardest to work with'':
\begin{description}
\item[Partial Order (PO)]  $P$ is given as a reflexive, transitive, directed acyclic graph, where there is an edge from $x$ to $y$ iff $x\le y$.
\item[Hasse Diagram (HD)]  $P$ is given as a directed acyclic graph whose reflexive, transitive closure (i.e., reachability relation) is the ordering $\le$.  The graph then gives the Hasse diagram of $P$.
\item[Arbitrary (binary) Relation (AR)]  An arbitrary directed graph (or arbitrary binary relation) is given, whose reflexive, transitive closure is then a pre-order whose induced partial order is $P$.  (Equivalently, $P$ is the set of strongly connected components, and $\le$ is the reachability relation between these components.)
\end{description}
The first two (PO and HD) must involve promises that the input satisfies the corresponding constraint, so problems in these categories are posed as promise problems.  Notice that the PO promise is stronger than the HD promise, which is stronger than the AR (vacuous) promise.  So in either the $\EX$ or $\SU$ cases, the complexity of the corresponding problems increases monotonically as $\PO\rightarrow\HD\rightarrow\AR$.

We will ignore some additional subtleties: In the explicit case, is the graph (or relation) given by an adjacency matrix or an array of edge lists?  In the succinct case, should we be able to represent a poset whose size is not a power of $2$?  For example, should we insist on including a second circuit that tells us whether a given binary string represents a point in the poset?  These questions can generally be finessed, and they do not affect any of the results.

\subsection{The decision problems}

The two genres and three approaches above can be combined to give six versions of the basic decision problem for arbitrary posets: the three explicit problems $\PG(\EX,\PO)$, $\PG(\EX,\HD)$, and $\PG(\EX,\AR)$; and the three succinct problems $\PG(\SU,\PO)$, $\PG(\SU,\HD)$, and $\PG(\SU,\AR)$.  We will define just a couple of these, the others being defined analogously.

\begin{definition}\rm
$\PG(\SU,\HD)$ is the following promise problem:
\begin{verse}
\textbf{Input:} A Boolean circuit $C$ with one output and two inputs of $n$ bits each, for some $n$.\\
\textbf{Promise:} $G$ is acyclic, where $G$ is the digraph on $\two^n$ whose edge relation is computed by $C$.\\
\textbf{Question:} Letting $P$ be the poset given by the reachability relation on $G$, is $P$ an $\exists$-game?
\end{verse}
\end{definition}

%

\begin{definition}\rm
$\PG(\EX,\AR)$ is the following promise problem:
\begin{verse}
\textbf{Input:} A digraph $G$ on $n$ nodes.\\
\textbf{Promise:} None.\\
\textbf{Question:} Letting $P$ be the poset given by the reachability
  relation on the strongly connected components of $G$, is $P$ an $\exists$-game?
\end{verse}
\end{definition}

We also can denote subcategories of poset games the same way.  For example, $\Nim(\EX,\HD)$ is the same as $\PG(\EX,\HD)$, but with the additional promise that the poset is a parallel union of chains; for any $k>0$, $\Nim_k(\EX,\HD)$ is the same as $\Nim(\EX,\HD)$ but with the additional promise that there are at most $k$ chains; $\Nfree(\SU,\PO)$ is the same as $\PG(\SU,\PO)$ with the additional promise that the poset is N-free.

\subsection{The first results}

Here are the first lower bounds known for poset games, given roughly in chronological order.  The first four involve \Nim; the first two of these consider explicit games, and the next two consider succinct games.  None of these results is currently published, and we will give sketches of their proofs here.

\begin{theorem}[Wagner, 2009]
$\Nim_4(\EX,\HD)$ is $\LOGSPACE$-hard under $\AC^0$ reductions.
\end{theorem}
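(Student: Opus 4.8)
Since this is a hardness claim, the plan is to exhibit an $\AC^0$ (first-order) reduction into $\Nim_4(\EX,\HD)$ from a problem already known to be $\LOGSPACE$-complete under $\AC^0$ reductions --- the standard ``pointer-chasing'' problems work here, e.g.\ deterministic $s$--$t$ reachability in digraphs of out-degree at most $1$, or the order problem $\textup{ORD}$ (given a successor relation presenting a line on $[n]$ together with two marked elements, decide which comes first). The shape of the target is dictated by Sprague--Grundy theory: by Theorem~\ref{thm:sg} together with Exercise~\ref{ex:g-numbers}, the $g$-number of a disjoint union of chains $C_{a_1}+\cdots+C_{a_k}$ is $a_1\oplus\cdots\oplus a_k$, and by Proposition~\ref{prop:g-eq-0} this is a $\forall$-game iff that XOR vanishes; in particular $C_a+C_b$ is a $\forall$-game iff $a=b$. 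So it suffices to produce, in $\AC^0$, from a yes/no instance of the source problem a Hasse diagram consisting of at most four disjoint directed paths whose lengths XOR to a nonzero value exactly on yes-instances.

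The governing constraint is that an $\AC^0$ reduction cannot trace any path of its input --- that latent reachability is precisely what separates $\AC^0$ from $\LOGSPACE$ --- so the reduction must be ``oblivious'': it can only rewire edges of the input locally and glue on $\textup{poly}$-size padding chains whose lengths are $\AC^0$-computable from the input size, with all the real computation deferred to the evaluation of the resulting $\Nim$ game. Concretely, I would start from the configuration graph of the source computation and arrange it as a disjoint union of directed chains: a layered configuration graph with a step counter is acyclic and has out-degree $\le 1$, and forcing in-degree $\le 1$ as well --- so that it is a genuine union of chains rather than a branching in-forest --- is exactly where one invokes the reversibility of logspace, $\LOGSPACE=\textup{RevL}$. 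One then attaches a short gadget: a fresh vertex prepended to the chain through the start configuration and a fresh vertex appended to the chain through the accepting configuration, so that these two edits fall on the same chain precisely when the computation accepts. In the accepting case exactly one chain has its length perturbed (by $2$) and in the rejecting case two distinct chains are each perturbed (by $1$); after padding the total number of chains to an even count, the XOR of all the chain lengths then becomes nonzero iff the computation accepts.

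The main obstacle is the width bound. The construction just sketched produces polynomially many chains, so as stated it only yields $\LOGSPACE$-hardness of \emph{unbounded-width} \Nim; squeezing the number of chains down to four is the delicate part, and it is the reason one should reduce not from a generic logspace machine but from a carefully chosen intermediate problem whose instances are already essentially (at most) two directed paths with distinguished vertices, so that the $\Nim$ reduction is little more than a syntactic re-encoding plus the padding gadget. The technical heart of the proof is therefore twofold: verifying that such a restricted path-length/order problem remains $\LOGSPACE$-complete under $\AC^0$ reductions, and checking that the glued-on gadgets and padding chains never violate the promise that the output is a union of at most four chains presented as a Hasse diagram. The remaining ingredients --- that the map is first-order definable, and the XOR bookkeeping witnessing the equivalence ``XOR $\ne 0$ iff yes-instance'' --- are routine.
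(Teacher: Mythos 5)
You have correctly identified the source problem (ORD, the single-directed-path order problem), the target invariant (the XOR of chain lengths via Sprague--Grundy), and the governing constraint that an $\AC^0$ reduction can only make local, oblivious edits to the input. But the gadget you propose does not work, and the two ideas that make the actual reduction go through are missing. First, your prepend/append gadget gives no control over the XOR in the rejecting case: if the two edits land on distinct chains of unknown (and $\AC^0$-uncomputable) lengths $a$ and $b$, the XOR of all chain lengths changes by $a\oplus(a+1)\oplus b\oplus(b+1)$, which vanishes iff $a$ and $b$ have the same number of trailing $1$s in binary --- not something you can arrange. Padding the number of chains to an even count is irrelevant, since the XOR of the lengths is not determined by the parity of the number of chains. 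Second, even before any gadget is applied, the base XOR $\bigoplus_i a_i$ of the chain lengths of the input decomposition is unknown to the reduction, so you cannot certify that it is $0$ on rejecting instances.

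Wagner's construction fixes both problems at once by mirroring: the output poset contains \emph{two} disjoint copies $G$, $G'$ of the input path, so every chain is paired with an identical twin and the base XOR is forced to $0$. The gadget then deletes the successor edges at $x,y,x',y'$ and adds crossing edges $(y,t')$ and $(y',t)$ between the copies, together with two fresh length-$n$ paths attached asymmetrically to the unprimed copy only (one feeding into its initial vertex $v$, one hanging off $x$). A case analysis on whether $y$ is reachable from $x$ shows that in the ``no'' case the four resulting chains still come in two equal-length pairs, so the XOR is $0$, while in the ``yes'' case one pair becomes $\{k,\,2n+k\}$ and the XOR is $(2n+k)\oplus k\ne 0$. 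The bound of four chains falls out of this specific construction rather than from any generic padding argument; and there is no need to invoke reversible logspace or configuration graphs, since the ORD promise already hands you a single simple path.
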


The proof reduces from the promise problem ORD (order between vertices), which is known to be complete for $\LOGSPACE$ via quantifier-free projections \cite{Etessami:L,JKMT:GI}.

\begin{proof}
The promise problem ORD (order between vertices) is known to be complete for $\LOGSPACE$ via quantifier-free projections \cite{Etessami:L,JKMT:GI}:
\begin{verse}
\textbf{Input:} A directed graph $G$ on $n$ nodes (given by a binary edge relation $E(G)$) and two distinct vertices $x$ and $y$ of $G$.\\
\textbf{Promise:} $G$ is a single directed path with no cycles.\\
\textbf{Question:} Is $y$ reachable from $x$ in $G$?
\end{verse}
We may assume that both $x$ and $y$ have successors along the path in $G$, say, $s$ and $t$, respectively; otherwise, the problem is trivial.  We can translate any instance $\tuple{G,x,y}$ of ORD into an instance $P$ of $\Nim_4(\EX,\HD)$ (i.e., a dag consisting of at most four disjoint simple paths) so that $y$ is reachable from $x$ if and only if $P$ (considered a poset game) is an $\exists$-game.  We do this as follows: $P$ contains two disjoint copies of $G$, say, $G$ and $G'$, where we label vertices of $G$ with unprimed letters and the corresponding duplicate vertices in $G'$ with primed letters.  We make the following additional changes to $P$:
\begin{itemize}
\item
Remove the edges $(x,s)$ and $(y,t)$ from $E(G)$, and remove the edges $(x',s')$ and $(y',t')$ from $E(G')$.
\item
Add \emph{crossing edges} $(y,t')$ and $(y',t)$ to $E(P)$.
\item
Add two directed paths $p_1\rightarrow p_2\rightarrow\cdots\rightarrow p_n$ and $q_1\rightarrow q_2\rightarrow\cdots\rightarrow q_n$ to $P$, both of length $n$.
\item
Add \emph{connecting edges} $(p_n,v)$ and $(x,q_1)$ to $E(P)$, where $v$ is the initial vertex along the path of $G$.
\end{itemize}
Let $w$ be the final vertex of $G$.  The two possible scenarios for $P$ are shown in Figure~\ref{fig:wagner}.
\begin{figure}
\begin{center}
\input{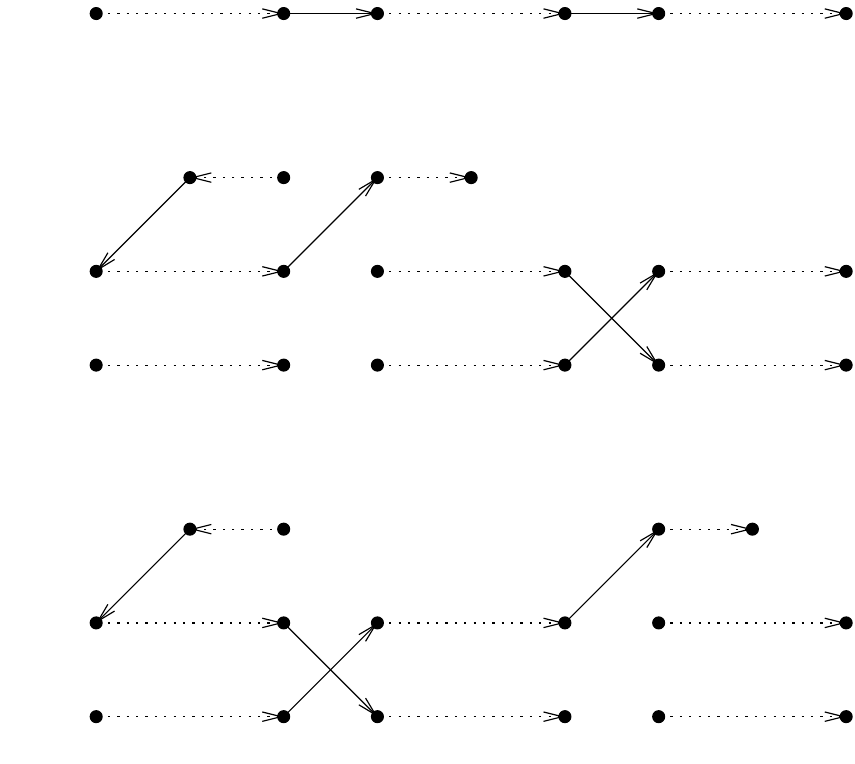_t}
\caption{The construction of $P$ from $G$.  $G$ is shown at the top in the case where $y$ is reachable from $x$.  Shown immediately below is $P$ in this case.  Below that is shown $P$ when $y$ is not reachable from $x$.}\label{fig:wagner}
\end{center}
\end{figure}
If $y$ is reachable from $x$, then we get the \Nim\ game near the top of the figure, whose g-number is of the form $(2n+k)\oplus k$ for some $k$, owing to the two paths on the left (the paths on the right are the same length, so they cancel).  This is nonzero, hence $P$ is an $\exists$-game.  Otherwise, we have the game at the bottom of the figure, and this is clearly a $\forall$-game, consisting of two pairs of paths of equal length.  

The construction of $P$ from $G$ can be done in $\AC^0$, which proves the theorem.
\end{proof}

\begin{theorem}[Thierauf, 2009]
$\Nim_2(\EX,\AR)$ is $\NL$-hard under $AC^0$ reductions.
\end{theorem}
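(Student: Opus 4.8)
The plan is to give an $\AC^0$ reduction from directed $s$-$t$-reachability (STCON) to $\Nim_2(\EX,\AR)$. STCON is complete for $\NL$ under $\AC^0$ reductions, and it remains so when the input digraph $G$ is presented in \emph{layered} form: its vertices are partitioned into sets $L_0=\{s\},L_1,\dots,L_m=\{t\}$ with every edge running from some $L_i$ to $L_{i+1}$ (the standard layered-product construction that produces this form from an arbitrary instance is itself an $\AC^0$ map). Since $\NL=\co\NL$ we are free to arrange the reduction in whichever direction is convenient.

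Two facts drive the construction. First, by the Sprague--Grundy theorem (Theorem~\ref{thm:sg}) a two-stack Nim game $C_a+C_b$ has g-number $a\oplus b$, so it is a $\forall$-game exactly when $a=b$ and an $\exists$-game exactly when $a\neq b$. Hence it suffices to build from $(G,s,t)$ a digraph $H$ whose condensation (the poset of strongly connected components ordered by reachability, which is precisely the poset named by the AR encoding) is a disjoint union of two chains $C_a$ and $C_b$, with $a=b$ iff $t$ is \emph{not} reachable from $s$. Second, the AR representation is essential here (and is why the bound is not known for the HD representation): a directed cycle collapses an entire vertex set to a single poset point, so a long chain with a directed loop spliced into it becomes a strictly shorter chain, and whether the loop closes can be made to depend on reachability in $G$.

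Concretely, take $H$ to be the disjoint union of two lobes. The first lobe is a plain directed path, contributing a chain $C_N$ of some fixed length $N$. The second lobe is built around a second directed path $d_1\to d_2\to\cdots\to d_N$ into which a copy of $G$ is wired: add an arc from some $d_j$ into $s$ and an arc from $t$ back to some $d_i$ with $i<j$. Then an $s$-to-$t$ path in $G$, if one exists, closes a directed cycle through $d_i,\dots,d_j$ and that path, merging a \emph{known} number of vertices into one SCC (the layering forces every $s$-$t$ path to have exactly $m$ edges, so the count is determined); choosing $i$, $j$ and $N$ so that this collapse shortens the second lobe's chain by a fixed nonzero amount makes $a=N=b$ precisely when $t$ is unreachable from $s$ and $a\neq b$ otherwise. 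Thus $H$ is an $\exists$-game iff $t$ is reachable from $s$, and the map $(G,s,t)\mapsto H$ is an $\AC^0$ function, being just two polynomial-length paths together with a bounded number of extra arcs per vertex and per edge of $G$.

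The main obstacle is making the condensation of the second lobe genuinely a \emph{single} chain in \emph{both} cases, i.e., controlling the vertices of $G$ that do not lie on any witnessing $s$-$t$ path. Left to themselves these ``junk'' vertices sit inside the lobe as an arbitrary sub-DAG and destroy the two-stack promise of $\Nim_2(\EX,\AR)$. This is where the layered form of $G$ is exploited: by padding $G$ with dummy vertices and auxiliary arcs one forces every vertex of the modified $G$ to be absorbed, in the condensation, into one of a bounded sequence of ``layer blobs'' that are linearly ordered along the $d$-path regardless of the answer, so that the existence of an $s$-$t$ path controls only \emph{how many consecutive blobs further collapse together}, hence only the length of the resulting chain. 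Verifying that this padding keeps the condensation a disjoint union of at most two chains in every case while still faithfully recording reachability in the chain length is the delicate part of the argument; once that is in place, correctness is immediate from the facts in the second paragraph and the $\AC^0$ bound from the third, which gives the theorem.
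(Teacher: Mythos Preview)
Your approach is quite different from the paper's, and the paper's is far simpler. The paper does not attempt to control the SCC structure of the output digraph at all. It takes two disjoint copies $G$ and $G'$ of the input and adds the two crossing edges $t\to s'$ and $t'\to s$. The whole argument is then symmetry: the involution $v\leftrightarrow v'$ is a fixed-point-free automorphism of the resulting poset game precisely when $t$ is unreachable from $s$ (so the game is a $\forall$-game by Fact~\ref{fact:dual-symmetry}); when $t$ is reachable, playing $s$ wipes out everything reachable from either copy's source and leaves a residual position that again has this fixed-point-free symmetry, so the game is an $\exists$-game. No chain-length bookkeeping, no layering, no padding.

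Your construction, by contrast, has a genuine gap at exactly the step you label ``delicate,'' and it is not a detail that can be filled in routinely. You need the condensation of the second lobe to be a single chain in \emph{both} the reachable and unreachable cases, yet you never exhibit a padding that does this without corrupting reachability. The natural candidate---collapsing each layer $L_k$ into one SCC by threading a cycle through it---destroys the reduction: take $L_0=\{s\}$, $L_1=\{a,b\}$, $L_2=\{c,d\}$, $L_3=\{t\}$ with edges $s\to a$, $a\to c$, $b\to d$, $d\to t$; here $t$ is unreachable from $s$, but once $L_1$ and $L_2$ are each made into SCCs, we get $s\to a\sim b\to d\sim c$ and $d\to t$, so $t$ becomes reachable. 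Any padding that forces the junk vertices onto the main chain necessarily adds new reachability among vertices of $G$, and you give no mechanism that does this while preserving the answer to the original STCON instance. Until that construction is actually specified and verified, the proposal does not prove the theorem.

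It is worth noting that your instinct to worry about the $\Nim_2$ promise is sound: the paper's construction does not in general yield a poset that is a parallel union of two chains either, so as written its argument really establishes $\NL$-hardness of $\PG(\EX,\AR)$ rather than of $\Nim_2(\EX,\AR)$ in the strict promise-problem sense. But the paper's symmetry argument is at least a complete proof of \emph{something}, whereas your chain-length approach is missing its central step.
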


The proof reduces from the reachability problem for directed graphs, which is $\NL$-complete under $\AC^0$-reductions.

\begin{proof}
We reduce from the reachability problem for directed graphs, which is $\NL$-complete under $\AC^0$-reductions:
\begin{verse}
\textbf{Input:} A directed graph $G$ on $n$ nodes (given by a binary edge relation $E(G)$) and two distinct vertices $s$ and $t$ of $G$.\\
\textbf{Question:} Is $t$ reachable from $s$ in $G$?
\end{verse}
Given $G$ as above, we construct a (possibly cyclic) digraph $H$ whose corresponding poset game is an $\exists$-game if and only if $t$ is reachable from $s$ in $G$.  (Recall that a move in a digraph corresponds to removing a vertex and all vertices reachable from it.)  We let $H$ be two disjoint copies of $G$, say, $G$ and $G'$, where $s'$ are $t'$ are the vertices in $G'$ corresponding to $s$ and $t$ in $G$, respectively.  We then add two more edges to $H$: one from $t$ to $s'$ and the other from $t'$ to $s$.  See Figure~\ref{fig:thierauf}.
\begin{figure}
\begin{center}
\input{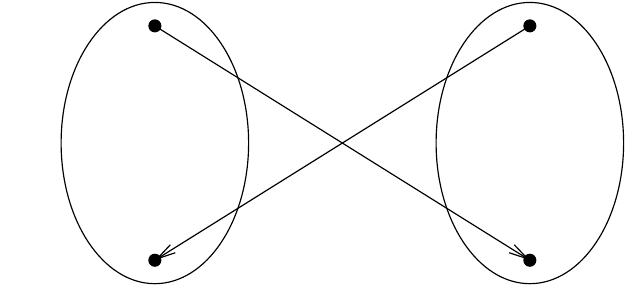_t}
\caption{The graph $H$ constructed from $G$.}\label{fig:thierauf}
\end{center}
\end{figure}
The construction of $H$ from $G$ is clearly $\AC^0$.  If $t$ is reachable from $s$ in $G$, then choosing, say, $s$ removes from $H$ all vertices except those not reachable from either $s$ or $s'$.  This is a winning move, because the remaining graph consists of two disjoint, identical components---one in $G$ and the other in $G'$, and so it is the parallel union of identical subgames, thus a $\forall$-game.

If $t$ is not reachable from $s$ in $G$, then the game $H$ itself consists of two disjoint, identical subgraphs, and so is a $\forall$-game.
\end{proof}

The next result about succinct poset games is straightforward.

\begin{theorem}[F, 2009]
$\Nim_2(\SU,\PO)$ is $\co{\CeqP}$-hard under p-re\-ductions.
\end{theorem}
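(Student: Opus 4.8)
The plan is to give a polynomial-time many-one reduction (a p-reduction) to $\Nim_2(\SU,\PO)$ from a $\co{\CeqP}$-complete problem.  A convenient choice is the problem $L$: given Boolean circuits $\beta_1,\beta_2$ over a common variable set $\two^m$, decide whether $s(\beta_1)\ne s(\beta_2)$, where $s(\beta)$ denotes the number of satisfying assignments of $\beta$.  This is $\co{\CeqP}$-complete: its complement ``$s(\beta_1)=s(\beta_2)$'' lies in $\CeqP$ because $s(\beta_1)-s(\beta_2)$ is a $\mathrm{GapP}$ function vanishing exactly on the yes-instances, and the complement is $\CeqP$-hard by a standard translation of a nondeterministic machine computation into a circuit: from a machine $M$ witnessing $L'\in\CeqP$ (so $x\in L'$ iff $M$ has equally many accepting and rejecting paths on $x$) one builds $\beta_1$ (resp.\ $\beta_2$) on the nondeterministic-choice bits of $M$ that outputs $1$ exactly on accepting (resp.\ rejecting) paths, whence $x\in L'$ iff $s(\beta_1)=s(\beta_2)$.

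Given an instance $(\beta_1,\beta_2)$, I would output, in time polynomial in $|\beta_1|+|\beta_2|$, a Boolean circuit $C$ with two $(m+1)$-bit inputs and one output.  Writing an $(m+1)$-bit input as $\langle i,a\rangle$ with $i\in\two$ and $a\in\two^m$, set $\chi(0,a)\eqdf\beta_1(a)$ and $\chi(1,a)\eqdf 1-\beta_2(a)$; the circuit $C$ is to output $1$ on $(\langle i,a\rangle,\langle j,b\rangle)$ precisely when
\[ \chi(i,a)=\chi(j,b) \quad\text{and}\quad \langle i,a\rangle\le_{\mathrm{lex}}\langle j,b\rangle, \]
with $\le_{\mathrm{lex}}$ the usual lexicographic (numeric) order on $(m+1)$-bit strings.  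Such a $C$ is assembled from a fixed number of copies of the evaluation circuits for $\beta_1$ and $\beta_2$, some selector logic computing $\chi$, an equality test on the two $\chi$-values, and an $(m+1)$-bit comparator, so it is of size polynomial in $|\beta_1|+|\beta_2|$ and constructible in polynomial time; this makes $(\beta_1,\beta_2)\mapsto C$ a p-reduction candidate.

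Let $P$ be the poset on $\two^{m+1}$ whose order relation is the one computed by $C$.  This relation is reflexive and antisymmetric because $\le_{\mathrm{lex}}$ is, and transitive because $\le_{\mathrm{lex}}$ is transitive and ``$\chi(i,a)=\chi(j,b)$'' is an equivalence relation, so the $\PO$ promise holds unconditionally.  Since $C$ never relates a point of $\chi$-value $1$ to one of $\chi$-value $0$, $P$ is the parallel union of the chain on $\{\langle i,a\rangle:\chi(i,a)=1\}$ and the chain on its complement, and counting gives $P=C_{s_1}+C_{s_2}$ with $s_1=s(\beta_1)+2^m-s(\beta_2)$ and $s_2=2^m-s(\beta_1)+s(\beta_2)$; in particular $P$ is a parallel union of at most two chains, so the $\Nim_2$ promise holds as well.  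By Exercise~\ref{ex:g-numbers} we have $\gnum(C_n)=n$, so by the Sprague--Grundy theorem (Theorem~\ref{thm:sg}) $\gnum(P)=s_1\oplus s_2$, and by Proposition~\ref{prop:g-eq-0}, $P$ is a $\forall$-game iff $\gnum(P)=0$ iff $s_1=s_2$ iff $s(\beta_1)=s(\beta_2)$.  Hence $P$ is an $\exists$-game exactly when $(\beta_1,\beta_2)\in L$, completing the reduction.

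I do not foresee a genuinely difficult step; the work is entirely in making the succinct encoding satisfy both promises at once.  The point needing the most care is that the two chains can only be specified \emph{implicitly}: a polynomial-time reduction cannot compute their lengths $s_1,s_2$ (counting satisfying assignments is $\#\PTIME$-hard), so the order must be described locally, as the intersection of the total order $\le_{\mathrm{lex}}$ with the ``same $\chi$-value'' equivalence.  Presenting it in this form is exactly what guarantees, with no per-string case analysis, both that $P$ is an honest partial order on all of $\two^{m+1}$ (no cyclic or reflexivity glitches) and that it is literally a parallel union of \emph{two} chains rather than two chains plus leftover points, so that it is a legitimate instance of $\Nim_2(\SU,\PO)$.
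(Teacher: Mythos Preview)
Your proposal is correct and takes essentially the same approach as the paper: build a succinct partial order on bit strings as the intersection of the lexicographic order with the equivalence relation ``same predicate value,'' yielding two chains whose equal-length case coincides with the $\CeqP$ condition. The only cosmetic difference is that you reduce from a fixed $\co{\CeqP}$-complete problem (inequality of satisfying-assignment counts) whereas the paper starts from an arbitrary $L\in\co{\CeqP}$ via the normal form $|\{y:R(x,y)\}|\ne 2^{p(n)-1}$; the resulting circuit construction is the same.
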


The idea here is that, for any $L\in\co{\CeqP}$ and any input $x$, we produce two NIM stacks, and $x\in L$ if and only if they are of unequal length.

\begin{proof}
If $L$ is a language in $\co{\CeqP}$, then by standard results in complexity theory (see \cite{FFK:gaps} for example), there exists a positive polynomial $p(n)$ and a polynomial-time predicate $R$ such that, for all $n$ and $x\in\two^n$, we have
\[ x\in L \iff \left|\left\{y\in \two^{p(n)} : R(x,y) \right\}\right| \ne 2^{p(n)-1}\;. \]
Then given $x$ of length $n$, we can construct in polynomial time a Boolean circuit $C_x$ that takes two $p(n)$-bit inputs and produces a one-bit output such that
\[ C_x(y,z) = 1 \iff y \le z \myand R(x,y) = R(x,z) \]
for all $y,z\in\two^{p(n)}$.  The circuit $C_x$ computes a partial order relation on $\two^{p(n)}$ which is the parallel union of two chains.  The size of one chain is the number of $y\in\two^{p(n)}$ such that $R(x,y)$ holds, and the sum of the two sizes is $2^n$.  Thus $x\in L$ if and only if the chains are of unequal size, if and only if the resulting two-stack \Nim\ game is an $\exists$-game.
\end{proof}

\begin{theorem}[F, 2009]\label{thm:pspace-nim}
$\Nim_6(\SU,\HD)$ is $\PSPACE$-hard under p-re\-ductions.
\end{theorem}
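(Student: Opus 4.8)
The plan is to reduce from a canonical $\PSPACE$-complete problem --- say, acceptance of a fixed polynomial-space deterministic Turing machine $M$ on an input $x$, or equivalently TQBF. Given $x$ with $\card{x}=n$, the goal is to produce in time $\mathrm{poly}(n)$ a Boolean circuit $C_x$ (on two $m$-bit inputs, $m=\mathrm{poly}(n)$), together with the companion ``is-a-point'' circuit that the succinct model allows, describing an acyclic digraph whose reachability poset $P$ is a parallel union of at most six chains $C_{\ell_1},\dots,C_{\ell_6}$, designed so that the bitwise XOR $\ell_1\oplus\cdots\oplus\ell_6$ is nonzero exactly when $x\in L(M)$. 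By Theorem~\ref{thm:sg} together with Exercise~\ref{ex:g-numbers} (which gives $\gnum(C_\ell)=\ell$), we then have $\gnum(P)=\ell_1\oplus\cdots\oplus\ell_6$, and by Proposition~\ref{prop:g-eq-0} this makes $P$ an $\exists$-game iff $x\in L(M)$; since $M$ and the construction are fixed, this is a p-reduction and the theorem follows. The chains are exponentially long in general; the succinct representation is essential precisely because one small circuit can name the ``next point'' along such a chain.

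I would thread the computation of $M$ on $x$ --- a sequence of configurations of length $N\le 2^{\mathrm{poly}(n)}$ --- along a constant number of these chains by a circuit that advances one step at a time. To keep the resulting digraph legal (acyclic, and with transitive closure a union of chains), I would first replace $M$ by a reversible, clocked machine $M'$ that always halts after a fixed $2^{p(n)}$ steps in one of two canonical configurations $c_{\mathrm{acc}}(x)$ or $c_{\mathrm{rej}}(x)$ according to whether $x\in L(M)$: reversibility makes the one-step map a partial injection, so the digraph's transitive reduction is a union of simple paths (no ``V'' or ``$\Lambda$'' shapes appear in $P$), and the clock forces acyclicity. The answer is then read off arithmetically: arrange that one designated \emph{computation chain} has length $N$ when $x\notin L(M)$ and length $N+1$ when $x\in L(M)$ --- the extra point grafted on top precisely at whichever of $c_{\mathrm{acc}},c_{\mathrm{rej}}$ is reached --- and pair it against a \emph{reference chain} of length exactly $N$. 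Since $N\oplus N=0$ while $N\oplus(N+1)\neq 0$, these two chains contribute a nonzero XOR iff $x\in L(M)$; the remaining chains are used only to pad lengths to agreed constants, to carry the handful of ``degenerate'' points, and, if the point set is not a power of two, to hold leftover singletons in cancelling pairs. A count then shows six chains suffice, and $C_x$ is plainly poly-time computable, since it only recognizes well-formed points and applies one forward or backward step of $M'$'s clocked transition plus a constant amount of tagging.

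The crux --- and where I expect the real obstacle to lie --- is forcing the \emph{entire} point set to decompose into only a constant number of chains while still faithfully carrying an exponentially long computation. A naive threading (taking the points to be all syntactically well-formed ``(configuration, clock)'' pairs, or all locally consistent recursion stacks in a Savitch-style search) produces exponentially many spurious path-components; that is harmless for the $\HD$ promise but destroys the bound of six, and cannot be repaired by a local well-formedness test, since recognizing which points lie on the genuine computation thread is itself $\PSPACE$-hard. Getting around this requires $M'$ to be engineered so that its clocked configuration space is \emph{globally} organized into a constant number of threads --- roughly, so that following the inverse transition from any well-formed configuration reaches one of a fixed few ``roots'' --- and then using the mirroring/cancellation above so the stray threads vanish from the g-number. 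The length bookkeeping that makes the accepting run exactly one point longer than its reference, and keeps the auxiliary chains in cancelling pairs or at fixed lengths, is the other fussy part, and is essentially where the constant ``six'' is pinned down.
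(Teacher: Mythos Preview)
Your diagnosis of the central difficulty is exactly right, but neither of your proposed fixes closes the gap. The ``mirroring/cancellation'' idea addresses the g-number, not the structural constraint: $\Nim_6$ requires the poset itself to be a parallel union of at most six chains, so if your Hasse diagram has exponentially many path components you are simply not in $\Nim_6$, regardless of how their g-numbers XOR. And ``engineer $M'$ so that following the inverse transition from any well-formed configuration reaches one of a fixed few roots'' is the problem restated rather than solved. If points are configurations of $M'$, then the companion membership circuit must decide which configurations lie on the genuine computation thread --- but that is itself a $\PSPACE$-hard question, so no polynomial-size circuit can do it; and reversibility alone gives you a union of paths, not a \emph{bounded} union of paths. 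Nothing in your outline actually forces the full clocked configuration space of a reversible $\PSPACE$ machine into $O(1)$ threads, and ``leftover singletons in cancelling pairs'' cannot absorb exponentially many stray components into four spare chains.

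The paper's route sidesteps the obstacle entirely via the Cai--Furst extension of Barrington's theorem: for any $L\in\PSPACE$ there is a polynomial-time computable $\sigma(x,c)\in S_5$ such that $x\in L$ iff the product $\sigma(x,1)\cdots\sigma(x,2^{p(n)})$ fixes the element~$1$. The poset has points $u_c^k$ for $c=0,\ldots,N=2^{p(n)}$ and $k\in\{1,\ldots,5\}$, plus a sixth reference chain $v_0,\ldots,v_N$ and two extra points $s,t$; the only $u$-edges go from $u_{c-1}^k$ to $u_c^{\sigma_c(k)}$. Because each $\sigma_c$ is a permutation of a \emph{five-element} set, the $u$-vertices automatically decompose into exactly five disjoint paths no matter what the permutations are --- the bound of six is structural, not arranged by cancellation. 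Attaching $s$ below $u_0^1$ and $t$ above $u_N^1$ makes the stack-length multiset depend only on whether the product fixes $1$ (one stack of length $N+3$ and five of length $N+1$, versus two of $N+2$ and four of $N+1$), and Sprague--Grundy finishes. The ingredient you are missing is precisely this Barrington-style encoding, which replaces ``configuration'' by ``element of $\{1,\ldots,5\}$'' and ``one step of $M$'' by ``locally computable permutation,'' so that the chain count is pinned to a constant by construction.
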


The proof uses a result of Cai \& Furst \cite{CF:PSPACE} based on techniques of David Barrington on bounded-width branching programs.  Recall that $S_5$ is the group of permutations of the set $\{1,2,3,4,5\}$.  Their result is essentially as follows:

\begin{theorem}[Cai \& Furst]
For any $\PSPACE$ language $L$, there exists a polynomial $p$ and a polynomial-time computable (actually, log-space computable) function $\sigma$ such that, for all strings $x$ of length $n$ and positive integers $c$ (given in binary), \ $\sigma(x,c)$ is an element of $S_5$, and $x\in L$ if and only if the composition $\sigma(x,1)\sigma(x,2)\sigma(x,2)\cdots\sigma(x,2^{p(n)})$, applied left to right, fixes the element $1$.
\end{theorem}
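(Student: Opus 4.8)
The plan is to derive the statement from two classical ingredients: the characterization of $\PSPACE$ as alternating polynomial time, and Barrington's simulation of bounded-depth AND/OR formulas by products in $S_5$. First I would fix an alternating Turing machine $M$ deciding $L$ in time $d = d(n)$ polynomial in $n = \card{x}$. After the usual normalization (padding so that quantifiers strictly alternate and every computation path has the same length), acceptance of $x$ is the value of a balanced Boolean formula $\Phi_x$ of depth $d$ over fan-in-two $\myand$ and $\myor$ gates: an internal node at depth $i$ is an $\myor$ (existential move) or an $\myand$ (universal move) according to the parity of $i$, the two children correspond to the two choices of the next branch bit, and each of the $2^d$ leaves is indexed by the length-$d$ branch string $c$ and carries the truth value $R(x,c)\in\two$, where $R$ is the polynomial-time predicate checking whether the deterministic configuration reached after making the choices encoded by $c$ is accepting. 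Thus $x\in L$ iff $\Phi_x$ evaluates to true, and each leaf value is computable in polynomial time (indeed log-space) from $x$ and the binary string $c$.

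Next I would apply Barrington's technique to encode the evaluation of $\Phi_x$ as a product in $S_5$. Say that a word $W$ over $S_5$ \emph{$g$-computes} a formula $\psi$, for a fixed $5$-cycle $g$, if $W$ multiplies out to $g$ when $\psi$ is true and to the identity $e$ when $\psi$ is false. The two algebraic moves are standard: negation is handled by post-multiplying by a fixed group element (using that all $5$-cycles are conjugate in $A_5$), and conjunction is handled by the commutator trick --- if $W_1$ $g$-computes $\psi_1$ and $W_2$ $h$-computes $\psi_2$ with $g,h$ chosen so that the commutator $g h g^{-1} h^{-1}$ is again a $5$-cycle, then $W_1 W_2 W_1^{-1} W_2^{-1}$ computes $\psi_1 \myand \psi_2$, since the commutator collapses to $e$ whenever either factor is $e$. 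The existence of such $g,h$ is exactly the non-solvability of $A_5$, and disjunction follows from conjunction and negation by De Morgan. Unrolling this recursion over the depth-$d$ formula $\Phi_x$ produces a word $W_x$ of length $4^{d}=2^{2d}$ that $g_0$-computes $\Phi_x$ for a fixed $5$-cycle $g_0$; padding with copies of $e$ makes the length exactly $2^{p(n)}$ for a suitable polynomial $p$ (essentially $p=2d$).

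It remains to define $\sigma(x,c)$ as the $c$-th letter of $W_x$ and to argue it is computable in polynomial time from $x$ and $c$ given in binary. This is the one place needing care, because Barrington's construction is presented recursively whereas here the index $c$ ranges up to $2^{p(n)}$ and must be decoded directly. I would flatten the recursion by observing that a position $c$ in $W_x$ is specified by a path of length $d$ through the commutator-nesting tree: the base-four digits of $c$ say, at each of the $d$ levels, which of the four blocks ($W_1$, $W_2$, $W_1^{-1}$, or $W_2^{-1}$) the position lies in, and hence which child of the formula is being descended into and whether the current factor is being inverted. Walking this path accumulates a fixed conjugating element of $S_5$ and lands at a unique leaf $c_0$ of $\Phi_x$; the letter $\sigma(x,c)$ is then one of a constant number of fixed permutations, selected by the accumulated conjugation together with the single bit $R(x,c_0)$. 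Since $d$ is polynomial and $R$ is polynomial-time, this decoding runs in polynomial (in fact logarithmic) space. Finally, choosing $g_0$ to move the symbol $1$ and arranging the top-level encoding so that \emph{acceptance} corresponds to the identity (for instance by $g_0$-computing $\neg\Phi_x$, or by composing with one more fixed factor), the full left-to-right product fixes $1$ precisely when $x\in L$, as required.

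The main obstacle is the uniformity argument of the last paragraph: making Barrington's recursive word fully explicit so that its exponentially-distant letters are individually addressable in polynomial time. The algebra (negation, the commutator identity, non-solvability of $A_5$) is routine; the real content is the bookkeeping that turns the base-four expansion of $c$ into both the correct leaf $c_0$ of the alternating computation and the correct conjugation/inversion state, uniformly in $x$. Once that flattening is in place, correctness of the product follows by a straightforward induction on formula depth, matching the $g$-computation invariant at every gate.
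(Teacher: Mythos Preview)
The paper does not give its own proof of this statement: it is quoted as a black-box result of Cai and Furst \cite{CF:PSPACE} (``based on techniques of David Barrington''), and the paper immediately uses it to prove Theorem~\ref{thm:pspace-nim} without supplying any argument for the Cai--Furst theorem itself. So there is nothing in the paper to compare your proposal against.

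That said, your sketch is a faithful reconstruction of how the result is actually obtained. The two ingredients you name---$\PSPACE = \textup{APTIME}$ giving a uniform polynomial-depth $\myand/\myor$ formula, and Barrington's commutator trick in $S_5$ converting such a formula into an $S_5$-word that evaluates to a fixed $5$-cycle or the identity---are exactly what Cai and Furst use. You correctly identify the one nontrivial point beyond Barrington's $\NC^1$ setting: here the word has length $2^{p(n)}$, so one must argue that the $c$th letter is \emph{addressable} in polynomial time from $c$ written in binary, by walking the base-four digits of $c$ down the commutator recursion while tracking the accumulated conjugation/inversion state and the leaf index. That flattening is straightforward bookkeeping and runs in log-space, as you say. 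Your handling of the final ``fixes $1$'' convention (post-compose with a fixed element, or compute $\neg\Phi_x$) is also fine.
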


The idea is that we connect the first five NIM stacks level-by-level via permutations in $S_5$, as well as adding a couple of widgets.  If the product of all the permutions fixes $1$, then we get five NIM stacks of equal length $N$ and one NIM stack of length $N+2$, which is an $\exists$-game by the Sprague-Grundy theorem.  If $1$ is not fixed, then we get four stacks of length $N$ and two of length $N+1$---a $\forall$-game by the same theorem.

\begin{proof}[Proof of Theorem~\ref{thm:pspace-nim}]
Fix $L\in\PSPACE$, and let $p$ and $\sigma$ be as in Cai \& Furst's result above.  For any $x$ of length $n$, we define a directed acyclic graph $G_x$ as follows: $G_x$ has $6\cdot 2^{p(n)}+2$ vertices that come in three types (letting $N = 2^{p(n)}$):
\begin{enumerate}
\item
For $c = 0,1,2,\ldots,N$ and all $k\in\{1,2,3,4,5\}$, \ $u^k_c$ is a vertex of $G_x$.
\item
For $c = 0,1,2,\ldots,N$, \ $v_c$ is a vertex of $G_x$.
\item
$G_x$ has two additional vertices $s$ and $t$.
\end{enumerate}
For convenience, let $\sigma_c$ denote $\sigma(x,c)$.  The graph $G_x$ has three kinds of edges (and no others):
\begin{enumerate}
\item
For $c = 1,2,3,\ldots,N$ and all $k\in\{1,2,3,4,5\}$, \ $(u^k_{c-1},u^{\sigma_c(k)}_c)$ is an edge of $G_x$.
\item
For $c = 1,2,3,\ldots,N$, \ $(v_{c-1},v_c)$ is an edge of $G_x$.
\item
$(s,u^1_0)$ and $(u^1_N,t)$ are edges of $G_x$.
\end{enumerate}
A typical $G_x$ is shown in Figure~\ref{fig:nim-game}.
\begin{figure}
\begin{center}
\input{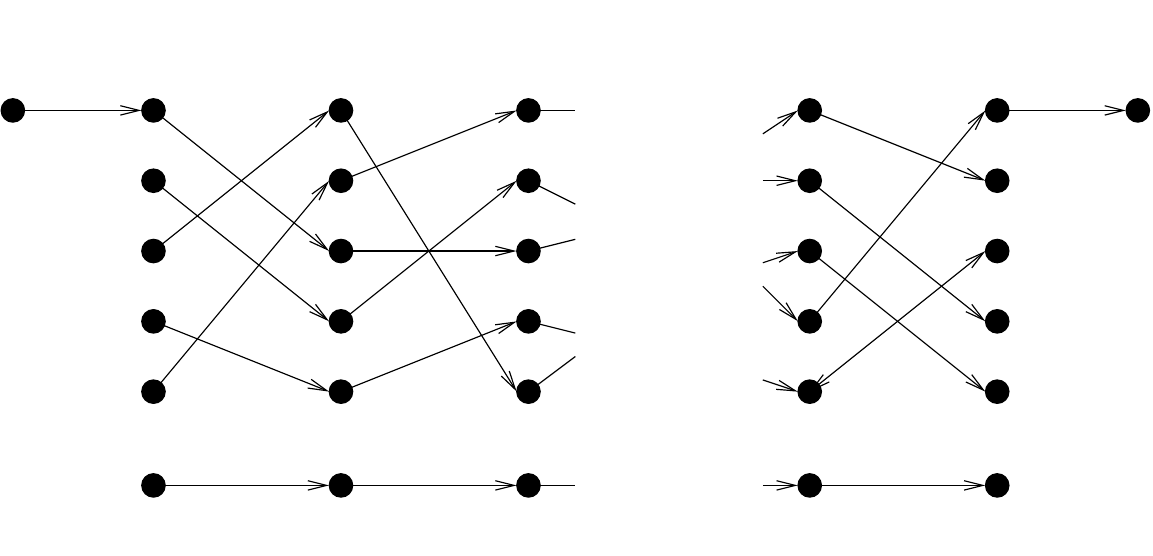_t}
\caption{The graph $G_x$ constructed from $x$.}\label{fig:nim-game}
\end{center}
\end{figure}
The columns of vertices (besides $s$ and $t$) are indexed by $c$ running from $0$ to $N$.  The five rows of $u$-vertices are indexed by $k\in\{1,2,3,4,5\}$.  The $k$'th $u$-vertex in column $c-1$ has one outgoing edge to the $\sigma_c(k)$'th $u$-vertex in column $c$.  Then it is evident that the game $G_x$ consists of six NIM stacks---the first five involving $u$-vertices and the last consisting of the $v$-vertices.  Let $\sigma\in S_5$ be the left-to-right composition $\sigma_1\sigma_2\cdots\sigma_N$.  If $\sigma$ fixes $1$, then $s$ and $t$ lie in the same stack, which thus has length $N+3$, and the other five stacks have length $N+1$.  Otherwise, $s$ and $t$ lie in different stacks, and thus $G_x$ has two stacks of length $N+2$ and four stacks of length $N+1$.  In the former case, $G_x$ is an $\exists$-game and in the latter case, $G_x$ is a $\forall$-game.  This shows that $x\in L$ if and only if $G_x$ is an $\exists$-game.

Since each permutation $\sigma_c$ is computed uniformly in polynomial time, one can easily (time polynomial in $n$) construct a Boolean circuit computing the edge relation on $G_x$ as well as a membership test for $V(G_x)$.  Thus we have a p-reduction from $L$ to $\Nim_6(\SU,\HD)$.
\end{proof}

Although the above results all mention \Nim, the representations we use of a \Nim\ game as a poset are not the natural one.  Therefore, it is better to consider these as lower bounds on N-free poset games, which \emph{are} naturally represented as posets.

The next results regard $N$-free games.  They depend on Adam Kalinich's game outcome-flipping trick.  The trick turns a poset game $A$ into another poset game $\neg A$ with opposite outcome, starting with $A$ and applying series and parallel union operations in a straightforward way.  Here we describe a simplification of the trick due to Daniel Grier:

\noindent Given a poset $A$,
\begin{enumerate}
\item
Let $k$ be any (convenient) natural number such that $2^k \ge |A|$ (that is, $A$ has at most $2^k$ elements).
\item
Let $B \eqdf A/C_{2^k-1}$.
\item
Let $C \eqdf B + C_{2^k}$.
\item
Let $D \eqdf C/C_1$.
\item
Finally, define $\neg A \eqdf D + A$.
\end{enumerate}

Let's check the following

\begin{claim}\label{claim:not-A}
If $\gnum(A) \ne 0$, then $\gnum(\neg A) = 0$.  If $\gnum(A) = 0$, then $\gnum(\neg A) = 2^{k+1}$.
\end{claim}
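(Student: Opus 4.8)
The plan is to compute $\gnum(\neg A)$ directly, tracking the g-number through each of the five construction steps using only four ingredients already available: Fact~\ref{fact:series-gnum} (series union with a chain $C_m$ adds $m$ to the g-number), Theorem~\ref{thm:sg} (the Sprague--Grundy theorem, so parallel union XORs g-numbers), the identity $\gnum(C_n) = n$, and the bound $\gnum(A) \le \card A \le 2^k$ from Exercise~\ref{ex:g-numbers}. Write $a \eqdf \gnum(A)$, so $0 \le a \le 2^k$; I would then split into the cases $a = 0$ and $a \ge 1$.

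First I would apply Fact~\ref{fact:series-gnum} to step~2 to get $\gnum(B) = a + 2^k - 1$. The key point is that this makes the $2^k$-bit of $\gnum(B)$ a faithful indicator of whether $a = 0$: if $a = 0$ then $\gnum(B) = 2^k - 1$, which has bits $0,\dots,k-1$ set and bit $k$ clear; if $a \ge 1$ then $2^k \le \gnum(B) \le 2^{k+1} - 1$, so bit $k$ is set and all higher bits are clear. Next, Theorem~\ref{thm:sg} gives $\gnum(C) = \gnum(B) \oplus 2^k$ for step~3; XOR-ing with $2^k$ therefore acts as $+2^k$ when $a = 0$ (yielding $\gnum(C) = 2^{k+1} - 1$) and as $-2^k$ when $a \ge 1$ (yielding $\gnum(C) = a - 1$). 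Step~4 is another application of Fact~\ref{fact:series-gnum}: $\gnum(D) = \gnum(C) + 1$, so $\gnum(D) = 2^{k+1}$ if $a = 0$ and $\gnum(D) = a$ if $a \ge 1$. Finally step~5 and Theorem~\ref{thm:sg} give $\gnum(\neg A) = \gnum(D) \oplus a$, which is $2^{k+1} \oplus 0 = 2^{k+1}$ when $a = 0$ and $a \oplus a = 0$ when $a \ge 1$ --- exactly the two cases of the claim.

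The only step that requires any care --- and hence the ``main obstacle,'' such as it is --- is step~3: one must verify that $\gnum(B) \oplus 2^k$ really does equal $\gnum(B) + 2^k$ or $\gnum(B) - 2^k$ in the respective cases, which is where the specific constant $2^k - 1$ chosen in step~2 and the hypothesis $\card A \le 2^k$ (keeping $a + 2^k - 1$ strictly below $2^{k+1}$) are used. Everything else is a mechanical substitution. I would present the whole argument as the short two-case computation sketched above, with a one-line binary-representation remark justifying the XOR-with-$2^k$ in each case.
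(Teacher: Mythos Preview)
Your proposal is correct and essentially identical to the paper's own proof: both compute $\gnum$ step by step through $B$, $C$, $D$, $\neg A$ using Fact~\ref{fact:series-gnum} for the series unions, Theorem~\ref{thm:sg} for the parallel unions, and the bound $\gnum(A)\le|A|\le 2^k$ to pin down the $2^k$-bit of $\gnum(B)$ so that XOR-ing with $2^k$ behaves as the appropriate $\pm 2^k$. Your identification of step~3 as the only nontrivial point matches the paper's emphasis exactly.
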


\begin{proof}
Recall that $\gnum(P) \le |P|$ for any poset $P$, and thus $\gnum(A) \le 2^k$.  By Fact~\ref{fact:series-gnum}, $\gnum(B) = \gnum(A) + 2^k - 1$, so if $\gnum(A) = 0$, then $\gnum(B) < 2^k$, and otherwise, $2^k \le \gnum(B) < 2^{k+1}$, which implies the $(k+1)$st least significant bit position of $\gnum(B)$ is $1$.  By Theorem~\ref{thm:sg}, $\gnum(C) = \gnum(B) \oplus \gnum(C_{2^k}) = \gnum(B) \oplus 2^k$, which is just $\gnum(B)$ with its $(k+1)$st bit flipped.  So if $\gnum(A) = 0$, then clearly, $\gnum(C) = \gnum(B) + 2^k = \gnum(A) + 2^{k+1} - 1 = 2^{k+1} - 1$, and otherwise, $\gnum(C) = \gnum(B) - 2^k = \gnum(A) - 1$.  Next, we have $\gnum(D) = \gnum(C) + 1$, and so $\gnum(D) = 2^{k+1}$ if $\gnum(A) = 0$, and $\gnum(D) = \gnum(A)$ otherwise.  Finally, this gives
\[ \gnum(\neg A) = \gnum(D) \oplus \gnum(A) = \left\{ \begin{array}{ll}
2^{k+1} & \mbox{if $\gnum(A) = 0$,} \\
0 & \mbox{if $\gnum(A) \ne 0$,}
\end{array} \right. \]
and we are done.
\end{proof}

Observe that the size of $\neg A$ is linearly bounded in $|A|$.  In fact, $|\neg A| \le 6|A|$ if $A\ne\emptyset$.

\begin{theorem}[Kalinich~\cite{Kalinich:flip}]\label{thm:Kalinich}
$\Nfree(\EX,\PO)$ is $\NC^1$-hard under $\AC^0$ reductions.
\end{theorem}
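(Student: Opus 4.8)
The plan is to reduce a known $\NC^1$-complete problem to $\Nfree(\EX,\PO)$, using the series union as an OR-gate (Exercise~\ref{ex:OR-gate}) and the outcome-flipping construction preceding Claim~\ref{claim:not-A} as a NOT-gate. Since $\AC^0$ reductions compose, it suffices to $\AC^0$-reduce a single $\NC^1$-complete problem, and a convenient one is the balanced Boolean Formula Value Problem: given a fan-in-two formula $\p$ of depth $O(\log n)$ over $\{\wedge,\vee,\neg\}$ whose leaves are the constants $0$ and $1$, decide whether $\p$ evaluates to true. (This is complete for $\NC^1$ under $\AC^0$ reductions by the theorem of Buss.) By pushing negations toward the leaves, or by first rewriting $\p$ using only the connective NOR, one may even assume a single universal connective; I will instead handle $\vee$ and $\neg$ directly and $\wedge$ via De Morgan's law, which is slightly more transparent.

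First I would build, by structural recursion on $\p$, a poset game $P_\p$ maintaining the invariant: the subformula at a node is true iff the corresponding poset is an $\exists$-game, equivalently has nonzero $\gnum$. The base cases are the leaf $1\mapsto C_1$ (with $\gnum=1$) and the leaf $0\mapsto A_2=C_1+C_1$ (with $\gnum=0$). A $\vee$-node with child games $A,B$ is sent to $A/B$: by Exercise~\ref{ex:OR-gate} this is an $\exists$-game iff $A$ or $B$ is. A $\neg$-node with child game $A$ is sent to $\neg A$ as constructed before Claim~\ref{claim:not-A}; by that claim $\gnum(\neg A)=0$ exactly when $\gnum(A)\ne 0$, so $\neg A$ is an $\exists$-game iff $A$ is a $\forall$-game, as required. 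An $\wedge$-node is handled as $A\wedge B=\neg(\neg A\vee\neg B)$. Every step applies only $+$ and $/$ to chains and to previously built posets; since the class of N-free (series–parallel) posets contains $C_1$ and is closed under $+$ and $/$, the game $P_\p$ is N-free. For the size bound, the only size-increasing step is $\neg$, with $|\neg A|\le 6|A|$ (as noted after Claim~\ref{claim:not-A}), so at a node of height $h$ the game has size at most $c^h$ for an absolute constant $c$; as $h=O(\log n)$, this is $n^{O(1)}$.

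Next I would make the reduction $\AC^0$-uniform. The key device is to \emph{not} choose the parameter $k$ in the $\neg$-gadget minimally, but to fix, for every node at height $h$, the same value $k_h=\lceil h\log_2 c\rceil$, so that $2^{k_h}\ge c^h\ge|A|$ as Claim~\ref{claim:not-A} requires; then the entire shape of the construction depends only on the shape of $\p$ and on heights, not on the actual sizes of intermediate games. I would describe each vertex of $P_\p$ by a short address: a path from the root of $\p$ (length $O(\log n)$), the $O(1)$-bit name of its position inside the relevant gadget, and, when it lies in one of the chains $C_{2^{k_h}}$, $C_{2^{k_h}-1}$, $C_1$, an index of $O(\log n)$ bits. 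The $\PO$-relation on two addresses is then computed by finding their least common ancestor in the construction tree (the longest common prefix of the addresses), and reading off whether that ancestor is a $/$-node with the first vertex on its upper side and the second on its lower side, or is a chain with indices ordered appropriately. Longest-common-prefix, the $O(1)$ local gadget bookkeeping, and integer comparison are all in $\AC^0$, so $\p\mapsto P_\p$ is an $\AC^0$ reduction, and the theorem follows.

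The main obstacle, and where I would spend the most care, is precisely this uniformity bookkeeping: writing down the vertex-addressing scheme and the comparability test explicitly across the constant-size but fiddly internal structure of the $\neg$- and $\wedge$-gadgets, and checking that the level-uniform choice $k_h$ never violates the hypothesis $2^{k_h}\ge|A|$ of Claim~\ref{claim:not-A}. By contrast, the game-theoretic content is immediate: OR via $/$, NOT via $\neg$, and the true-$\leftrightarrow$-$\exists$-game invariant follow directly from Exercise~\ref{ex:OR-gate} and Claim~\ref{claim:not-A}, while the polynomial size bound and N-freeness are short inductions. Essentially everything nontrivial lies in certifying that the produced family of posets is generated by a constant-depth, polynomial-size circuit family.
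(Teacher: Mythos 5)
Your proposal is correct and follows essentially the same route as the paper's proof sketch: reduce an $\NC^1$-complete evaluation problem to $\Nfree(\EX,\PO)$ by building the poset bottom-up, simulating OR with series union (Exercise~\ref{ex:OR-gate}) and NOT with the Kalinich/Grier flip of Claim~\ref{claim:not-A}, and invoking the $|\neg A|\le 6|A|$ bound to keep the poset polynomial-size over $O(\log n)$ depth. The only differences are cosmetic (formula value versus circuit value for $\NC^1$, $A_2$ versus $\emptyset$ for a false leaf) plus your added detail on the $\AC^0$-uniform addressing, which the paper simply asserts.
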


\begin{proof}[Proof sketch]
We reduce from the Circuit Value problem for $\NC^1$ circuits with a single output.  Given an $\NC$ circuit $C$ with a single output and whose inputs are constant Boolean values, we produce a poset game $P$ so that $P$ is an $\exists$-game if and only if $C=1$.  We can assume WLOG that all gates in $C$ are either (binary) OR-gates or NOT-gates.  Starting with the input nodes, we associate a poset $P_n$ with every node $n$ in $C$ from bottom up so that the outcome of $P_n$ matches the Boolean value at node $n$.  $P$ is then the poset associated with the output node of $C$.  The association is as follows:
\begin{itemize}
\item
If $n$ is an input node, we set $P_n \eqdf \emptyset$ if $n=0$; otherwise, if $n=1$, we set $P_n \eqdf C_1$.
\item
If $n$ is an OR-gate taking nodes $\ell$ and $r$ as inputs, then we set $P_n \eqdf P_\ell/P_r$.  (Recall Exercise~\ref{ex:OR-gate}.)
\item
If $n$ is a NOT-gate taking node $c$ as input, we set $P_n \eqdf \neg P_c$.
\end{itemize}
This transformation from $C$ to $P$ can be done in (uniform) $\AC^0$, producing a poset of polynomial size, provided $C$ has $O(\log n)$ depth.
\end{proof}

The next theorem is not published elsewhere.

\begin{theorem}[F, 2011]\label{thm:succinct-po}
$\Nfree(\SU,\PO)$ is $\PP$-hard under p-re\-ductions.
\end{theorem}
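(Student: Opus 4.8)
The plan is to reduce from \textsc{Maj-CircuitSat}, the $\PP$-complete problem: given a Boolean circuit $C$ on $m$ inputs, decide whether $a\eqdf\card{\{y\in\two^m: C(y)=1\}} > 2^{m-1}$. Put $k\eqdf 2^{m-1}+1$, so the question becomes ``is $a\ge k$?'', a comparison of integers with no tie to worry about. The observation driving the reduction is that, whereas the explicit-input construction behind Theorem~\ref{thm:Kalinich} can only afford to simulate a Boolean circuit of \emph{logarithmic} depth (one OR- or NOT-gate at a time, the NOT-gadget blowing up the poset size by a constant factor each time), the \emph{succinct} setting gives us an exponentially larger poset to work with, hence room to simulate a circuit of \emph{polynomial} depth and exponential size --- and such circuits can count, i.e., can evaluate threshold functions on $2^m$ bits.

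Concretely, I would first fix, once and for all, an ordinary Boolean circuit $\mathcal C$ over the basis $\{\textup{OR},\textup{AND},\textup{NOT}\}$ that, given the $2^m$ bits $C(y_1),\dots,C(y_{2^m})$, outputs $1$ iff $\sum_i C(y_i)\ge k$: a balanced binary tree of ripple-carry adders produces the $(m{+}1)$-bit sum, followed by an $O(m)$-gate comparison against the constant $k$. This $\mathcal C$ has size $2^{O(m)}$ and depth $\textup{poly}(m)$, and --- being a completely generic counter --- it is maximally uniform: there is a $\textup{poly}(m)$-time algorithm that, given the address of a gate of $\mathcal C$ (a string of $\textup{poly}(m)$ bits), returns its type and the addresses of its inputs, and given a leaf address $i$ returns $C(i)$ by evaluating the given circuit $C$.

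Next I would convert $\mathcal C$ into an N-free poset game $P$ exactly as in the proof sketch of Theorem~\ref{thm:Kalinich}, working up from the leaves: a $0$-leaf becomes $\emptyset$ and a $1$-leaf becomes $C_1$; an OR-gate applied to sub-posets $P_\ell,P_r$ becomes the series union $P_\ell/P_r$ (Exercise~\ref{ex:OR-gate}); a NOT-gate applied to $P_c$ becomes $\neg P_c$ via the Kalinich--Grier gadget, whose behavior is given by Claim~\ref{claim:not-A}; and an AND-gate is handled by De Morgan's law, $\neg((\neg P_\ell)/(\neg P_r))$. Since $P_\ell/P_r$ is an $\exists$-game iff $P_\ell$ or $P_r$ is (Exercise~\ref{ex:OR-gate}) and $\neg P_c$ is an $\exists$-game iff $P_c$ is a $\forall$-game (Claim~\ref{claim:not-A}), an easy induction up $\mathcal C$ shows that the sub-poset attached to each gate is an $\exists$-game iff that gate evaluates to $1$; so the poset $P$ attached to the output gate is an $\exists$-game iff $a\ge k$, i.e., iff $C$ is a yes-instance of \textsc{Maj-CircuitSat}. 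Each of $+$, $/$, and the auxiliary chain operations used inside $\neg(\cdot)$ keeps a poset series-parallel, so $P$ is N-free.

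The bulk of the work --- and the step I expect to be the real obstacle --- is verifying that $P$ admits a $\textup{poly}(m)$-size description as an $\Nfree(\SU,\PO)$ instance, i.e., a polynomial-size circuit computing the \emph{full} partial order (not merely the Hasse diagram) on the points of $P$. Because $\card{\neg A}\le 6\card A$ (as noted just after Claim~\ref{claim:not-A}) and OR only adds sizes, the total size of $P$ is at most $(\text{\# leaves of }\mathcal C)\cdot 6^{\textup{depth}(\mathcal C)} = 2^{\textup{poly}(m)}$; in particular each auxiliary-chain parameter $t$ (the one with $2^t\ge\card A$ in the NOT-gadget) is $\textup{poly}(m)$, and points of $P$ can be named by addresses of length $\textup{poly}(m)$. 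Such an address is a bounded-length record listing, level by level down the construction tree, which child --- or which auxiliary chain $C_{2^t-1}$, $C_{2^t}$, $C_1$, and which point inside it --- one descends into, ending at a leaf of $\mathcal C$ (where $C(i)$ is read off). The partial-order relation between two addresses is then computed in $\textup{poly}(m)$ time by scanning both in lockstep from the top: at each level the construction operation tells us whether the two points lie in two parallel summands (incomparable), in the two parts of a series union (their order is the series order, and if they lie in the same part we recurse), or in a single chain (compare positions). This yields the required polynomial-size order circuit; together with N-freeness and the size bound it is a legitimate $\Nfree(\SU,\PO)$ instance, and the whole construction is carried out in time polynomial in $\card C$, so we obtain a p-reduction from \textsc{Maj-CircuitSat} to $\Nfree(\SU,\PO)$.
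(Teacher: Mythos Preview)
Your approach is correct but takes a genuinely different route from the paper's.

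The paper does \emph{not} simulate a Boolean threshold circuit.  Instead it introduces a one-shot generalization of the NOT-gadget,
\[
\threshold{A}{t} \eqdf \frac{(A/C_{2^k - t}) + C_{2^k}}{C_t} + A,
\]
satisfying $\gnum(\threshold{A}{t})=0$ iff $\gnum(A)\ge t$ (so $\neg A=\threshold{A}{1}$).  It then observes that the count $k=\card{\{y:B_x(y)=1\}}$ can be written \emph{directly} as (almost) the g-number of a very simple succinct poset $P=C_k+A_{2^q-k}$: put $y\le z$ iff $y\le z$ lexicographically and $B_x(y)=B_x(z)=1$.  One application of $\neg\threshold{\,\cdot\,}{2^{q-1}}$ then decides $k\ge 2^{q-1}$.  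The whole construction is a handful of series/parallel unions and one comparison; no gate-by-gate simulation, no depth bookkeeping.  This $\threshold{\cdot}{\cdot}$ gadget is also what drives the remark later in the paper that the g-number of a succinct poset reduces to outcome queries via binary search---a consequence not visible from your construction.

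Your brute-force simulation works, but a few details in your write-up need tightening.  First, your size bound ``$(\#\text{leaves of }\mathcal C)\cdot 6^{\text{depth}(\mathcal C)}$'' tacitly treats $\mathcal C$ as a \emph{formula}; the relevant leaf count is the formula size $2^{\textup{poly}(m)}$, not~$2^m$ (this only helps you, but say it).  Second, the parameter $t$ with $2^t\ge\card A$ inside each $\neg(\cdot)$ must be chosen from an \emph{a priori} upper bound on $\card A$ computable from the gate's position alone---not from the true $\card A$, which depends on the unknown leaf values $C(y)$.  Third, with leaves $\emptyset$ vs.\ $C_1$ the poset's point set varies with the $C(y)$'s, so a membership-test circuit (which the paper explicitly allows to be ``finessed'') or a fixed-size leaf encoding (e.g., $A_2$ for~$0$ and $C_2$ for~$1$) is needed to produce a clean $(\SU,\PO)$ instance.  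None of these is fatal.
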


To prove this theorem, we first need to generalize the Kalinich/Grier construction a bit.

\begin{definition}\label{def:threshold}\rm
For any poset $A$ and any integer $t>0$, define
\[ \threshold{A}{t} \eqdf \frac{(A/C_{2^k - t}) + C_{2^k}}{C_t} + A\;, \]
where $k$ is any convenient natural number (the least, say) such that $2^k > \max(|A|-t,t-1)$.
\end{definition}

Note that $\neg A = \threshold{A}{1}$.  A proof virtually identical to that of Claim~\ref{claim:not-A} shows that
\begin{equation}\label{eqn:threshold}
\gnum(\threshold{A}{t}) = \left\{ \begin{array}{ll}
2^{k+1} & \mbox{if $\gnum(A) < t$,} \\
0 & \mbox{if $\gnum(A) \ge t$.}
\end{array} \right.
\end{equation}

We then use the $\threshold{\cdot}{\cdot}$ operator to polynomially reduce any $\PP$ language to $\Nfree(\SU,\PO)$.  The next fact is routine and needed for the proof of Theorem~\ref{thm:succinct-po}.

\begin{fact}
Given as input a value of $t$ and the succinct representation of a poset $A$, one can build a succinct representation of $\threshold{A}{t}$ in polynomial time.
\end{fact}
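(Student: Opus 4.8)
The plan is to work entirely at the level of succinct (circuit) representations and to exploit the fact that $\threshold{A}{t}$ is built from $A$ by a \emph{fixed, bounded} number of poset operations: three chain constructions, two series unions, and two parallel unions (see Definition~\ref{def:threshold}). If each of these operations can be performed on succinct PO representations with only a polynomial additive blow-up in circuit size, then composing a constant number of them keeps everything polynomial, and the whole construction is clearly producible in polynomial time. So the proof reduces to (a) exhibiting succinct PO representations of the chains involved and (b) giving succinct ``gadgets'' that implement $+$ and $/$ on arbitrary succinctly represented posets.

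First I would fix the parameter $k$. The input circuit for $A$ has two $n$-bit inputs, so $|A|\le 2^n$, and $t$ is given in binary, so $\max(|A|-t,\,t-1)<2^{\max(n,\lceil\log_2 t\rceil)+1}$. Hence one may simply take $k=\max(n,\lceil\log_2 t\rceil)+1$, which is polynomial in the input length and needs no knowledge of the exact value of $|A|$ --- any sufficiently large, polynomially bounded $k$ makes equation~(\ref{eqn:threshold}) hold, by the proof of Claim~\ref{claim:not-A}. With $k$ in hand, each chain $C_m$ for $m\in\{2^k-t,\ 2^k,\ t\}$ has a trivial succinct PO representation: encode its points by the integers $0,1,\dots,m-1$ written in binary with $k$ bits, and let the relation circuit on inputs $u,v\in\two^k$ output $1$ iff $u\le v$ as integers and $u,v<m$. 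Integer comparison and the range check are computed by circuits of size $\mathrm{poly}(k)$, which is polynomial in the input length.

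Next I would describe the two combining gadgets. Suppose $P$ and $Q$ are succinctly represented by relation circuits $C_P,C_Q$ on $n_P$- and $n_Q$-bit inputs (together with membership circuits for ill-formed strings --- this is one of the subtleties the paper notes ``can be finessed''). Put $N=1+\max(n_P,n_Q)$ and encode a point of the combined poset as $bw\in\two^N$, where the tag $b$ selects $P$ or $Q$ and $w$ is the (suitably padded) underlying point. For the parallel union $P+Q$, the new relation circuit on $(bu),(b'v)$ returns $0$ if $b\ne b'$, returns $C_P(u,v)$ if $b=b'$ codes $P$, and returns $C_Q(u,v)$ if $b=b'$ codes $Q$. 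For the series union $P/Q$ (all of $P$ above all of $Q$, per Definition~\ref{def:series-parallel}), tag the top part $P$ with $1$ and the bottom part $Q$ with $0$; the circuit behaves as before when $b=b'$, outputs $1$ when $(b,b')$ is the ``$Q$-point below $P$-point'' pattern, and outputs $0$ for the reverse pattern. In each case the new circuit is $C_P$ together with $C_Q$ plus $O(N)$ extra gates, and the membership circuit is updated analogously; so the circuit size grows by a fixed polynomial (indeed linear) term and the input width grows by one bit.

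Putting it together, $\threshold{A}{t}$ is obtained from the circuit for $A$ by: building the three chains above; one series union to form $A/C_{2^k-t}$; one parallel union with $C_{2^k}$; one series union over $C_t$; and one parallel union with a second copy of the given circuit for $A$. That is a bounded number of gadget applications, each contributing a polynomial increase, so the final relation circuit (and companion membership circuit) has size polynomial in the input length and is constructed in polynomial time. I do not expect a genuine obstacle: the entire argument is bit-encoding bookkeeping. The only points needing a little care are choosing $k$ polynomially bounded without computing $|A|$ exactly, and gracefully discarding input strings that code no legitimate point --- precisely the ``finessing'' alluded to in the discussion of succinct representations --- and I would remark in passing that the same construction works verbatim in the $\SU,\HD$ and $\SU,\AR$ settings, and that it preserves N-freeness since series--parallel posets are exactly the N-free ones.
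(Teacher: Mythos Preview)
Your argument is correct. The paper does not actually prove this Fact; it merely labels it ``routine'' and moves on to use it in the proof of Theorem~\ref{thm:succinct-po}. What you have written is exactly the routine verification the authors omitted: fix a polynomially bounded $k$ satisfying the inequality in Definition~\ref{def:threshold} (noting, as you do, that the minimal $k$ is not needed---any larger value still makes equation~(\ref{eqn:threshold}) go through), build the three chains with comparison circuits, and implement $+$ and $/$ by a one-bit tag on the input encodings. There is nothing to compare approaches against; your write-up simply spells out what the paper assumes the reader can supply.

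One small caveat on your closing remark: the claim that the construction works ``verbatim'' in the $\SU,\HD$ setting deserves a moment's thought, since a Hasse-diagram circuit for $P/Q$ must produce edges only from \emph{maximal} points of $Q$ to \emph{minimal} points of $P$, not from all of $Q$ to all of $P$. This is still easy here because one of the two operands in every series union is a chain (whose extremal points are trivially identifiable), but it is not quite the same circuit as in the $\PO$ case.
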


\begin{proof}[Proof of Theorem~\ref{thm:succinct-po}]
By standard results in complexity, for any $L\in\PP$, there is a polynomial $p$ and a polynomial-time function $x\mapsto B_x$ mapping inputs to Boolean circuits such that, for all $x$, (i) $B_x$ has $q \eqdf p(|x|)$ many input nodes; and (ii) $x\in L$ if and only if $B_x(y) = 1$ for at least $2^{q-1}$ many inputs $y$.  We can assume WLOG that $q\ge 2$.  Given $B_x$, we can in polynomial time construct a circuit $D_x$ with two input registers of $q$ bits each, such that for all $y,z\in\two^q$, \ $D_x(y,z) = 1$ if and only if either: (a) $y=z$, or (b) $y < z$ and $B_x(y) = B_x(z) = 1$.  Suppose $|\{y : B_x(y)=1\}| = k$.  Then $D_x$ is the succinct $\PO$ representation of the poset $P \eqdf C_k + A_{2^q - k}$, consisting of the parallel union of a chain of length $k$ with an antichain of length $2^q - k$.  Using Theorem~\ref{thm:sg}, we get that $\gnum(P) = \gnum(C_k) \oplus \gnum(A_{2^q - k}) = k \oplus (k \bmod 2)$, the latter quantity being either $k$ or $k-1$, whichever is even.  Now let $T \eqdf \neg\threshold{P}{2^{q-1}}$.  Then $T$ is an $\exists$-game if and only if $\gnum(\threshold{P}{2^{q-1}}) = 0$, if and only if $\gnum(P) \ge 2^{q-1}$, if and only if $k \ge 2^{q-1}$ (note that $2^{q-1}$ is even, because $g \ge 2$), if and only if $x\in L$.  Since $T$ is clearly $N$-free, and a circuit for $T$ can be constructed from $x$ in polynomial time, this shows that $L \le_m^p \Nfree(\SU,\PO)$.
\end{proof}

\subsection{A note on the complexity of the g-number}

Of course, computing the g-number of an impartial game is at least as hard as computing its outcome, the latter just being a test of whether the g-number is zero.  Is the reverse true, i.e., can we polynomial-time reduce computing the g-number to computing the outcome?  For explicitly represented poset games, this is certainly true.  Given an oracle $S$ returning the outcome of any poset game, we get the g-number of a given poset game $G$ as follows: query $S$ with the games $G, G+C_1, G+C_2, \ldots, G + C_n$, where $n$ is the number of options of $G$ (recall that that $C_i$ is a NIM stack of size $i$).  By the Sprague-Grundy theorem (Theorem~\ref{thm:sg}), all of these are $\exists$-games except $G+C_{\gnum(G)}$, which is a $\forall$-game.

What about succinctly represented games?  The approach above can't work, at least for poset games, because the poset has exponential size.  Surprisingly, we can still reduce the g-number to the outcome for succinct poset games in polynomial time, using the threshold construction of Definition~\ref{def:threshold} combined with binary search.  Given a succinctly represented poset $P$ of size $\le 2^n$, first query $S$ with $\threshold{P}{2^{n-1}}$.  If $S$ says that this is an $\exists$-game, then we have $\gnum(P) < 2^{n-1}$; otherwise, $\gnum(P) \ge 2^{n-1}$.  Next, query $S$ with $\threshold{P}{2^{n-2}}$ in the former case and $\threshold{P}{3\cdot 2^{n-2}}$ in the latter case, and so on.  Note that in this reduction, the queries are adaptive, whereas they are nonadaptive for explicitly represented games.

\subsection{$\PSPACE$-completeness}

In this section we sketch the proofs of two recent $\PSPACE$-completeness results for poset game.  The first, by Daniel Grier, is that the outcome problem for general explicit (impartial) poset games is $\PSPACE$-complete \cite{Grier:poset-games}.  The second is a similar result about the complexity of black-white poset games \cite{FGMST:black-white}.

\begin{theorem}[Grier~\cite{Grier:poset-games}]\label{thm:poset-pspace-complete}
Deciding the outcome of an arbitrary finite poset game is $\PSPACE$-complete.
\end{theorem}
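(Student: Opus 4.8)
Membership in $\PSPACE$ is the easy direction, and I would dispatch it using the general observation at the start of Section~\ref{sec:ub}. A poset game on a poset $P$ satisfies all three of the required conditions: every position is an induced subposet of $P$ (namely $P$ with a down-set deleted), hence has size at most $\card P$; every play has length at most $\card P$, since each move deletes at least one point; and the ``option of'' relation on positions is computable in polynomial time directly from the Hasse diagram. Hence the outcome can be computed by the obvious recursive evaluation of the game tree in polynomial space.

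For hardness, the plan is to exhibit a polynomial-time many-one reduction from $\NK$, which is $\PSPACE$-complete by Schaefer~\cite{Schaefer:PSPACE-complete} (and remains so on bipartite graphs, a fact one may want to exploit). Given an undirected graph $G$, I would construct a poset $P = P(G)$ with only three levels (i.e., every chain has at most three points) so that $P$ is an $\exists$-game if and only if the first player wins $\NK$ on $G$. The intended correspondence is that a ``clean'' line of play in $P$ simulates a play of $\NK$: each vertex $v$ of $G$ is represented by a small gadget occupying the three levels, the gadgets of adjacent vertices share elements so that ``selecting $v$'' in $P$ deletes the relevant portions of the gadgets of $v$'s neighbors, and a handful of extra control elements on the top and bottom levels are added so that (a) the Grundy/parity bookkeeping comes out right and (b) any move that is \emph{not} of the intended ``select a vertex'' form can be answered immediately by the opponent, by a pairing / strategy-stealing argument in the spirit of Fact~\ref{fact:dual-symmetry}. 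I would then prove, by transferring strategies in both directions, that $o(P(G))$ records exactly the $\NK$ outcome on $G$, and observe that $P(G)$ and its Hasse diagram are clearly computable from $G$ in polynomial time.

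The main obstacle is the mismatch between the two notions of ``removal.'' In $\NK$, playing a vertex deletes it together with \emph{all} of its neighbors, a symmetric relationship; but in a poset game, playing a point deletes only points lying \emph{above} it, and in particular two points on the same level can never delete one another. So a vertex of $G$ cannot be simulated by a single poset element, which is what forces the multi-element, three-level gadgets together with auxiliary ``cleanup'' elements whose sole purpose is to let a player punish an opponent who makes an off-script move---for instance one that deletes part but not all of a neighbor's gadget, or that plays a now-useless leftover point merely to flip parity. Designing the gadgets so that \emph{no} such deviation is ever profitable, and then verifying this across all the cases in the two-directional strategy transfer, is the technical heart of the argument; everything else is routine.
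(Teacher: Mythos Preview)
Your high-level plan---reduce from $\NK$, build a three-level poset, let ``honest'' play in the middle level simulate vertex selection, and punish deviations by a pairing/cleanup response---is exactly the paper's approach. But you stop precisely where the real content is, and the one structural hint you offer (per-vertex gadgets spanning all three levels, with adjacent vertices' gadgets sharing elements) points away from how Grier's construction actually works.

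In the paper's reduction the middle antichain $B$ is literally $V$: one point per vertex, no multi-point gadget. It is the \emph{edges} that populate the outer levels: each edge $e=\{v_1,v_2\}$ contributes a maximal point $c_e$ above exactly $v_1$ and $v_2$, and a minimal point $a_e$ below every vertex \emph{except} $v_1$ and $v_2$. (One first pads $G$ with one or two small disjoint cliques so that $\card E$ is odd and every vertex misses some edge.) Honest play consists of taking $B$-points, tracking a $\NK$ play. If the opponent deviates---plays a $c_e$, plays an $a_e$, or plays a $B$-point adjacent in $G$ to one already taken---the $\NK$-winner responds with a suitably chosen $a_{e'}$ or $c_{e'}$, and the near-global upward reach of the $a$-points collapses what remains to a residue whose parity is controlled by the odd-$\card E$ normalization. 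The device that resolves the symmetric-removal-vs.-upward-removal mismatch you correctly flagged is precisely this \emph{complementary} incidence of $a_e$: it lies below almost the entire middle level, so playing it wipes out nearly everything at once. That is a global relation, not a local per-vertex widget, and it is the idea your sketch is missing.
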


\begin{proof}
Membership in $\PSPACE$ is clear.  For $\PSPACE$-hardness, we reduce from $\NK$.  Let $G=(V,E)$ (a simple undirected graph) be an arbitrary instance of $\NK$.  By altering the graph slightly if necessary without changing the outcome of the game, we can assume that $|E|$ is odd and that for every $v\in V$ there exists $e\in E$ not incident with $v$.  We can do this by adding two disjoint cliques to $G$---either two $K_2$'s or a $K_2$ and a $K_4$, whichever of these options results in an odd number of edges.  We then construct the following three-level poset $P$ from $G$:
\begin{itemize}
\item
The points of $P$ are grouped into three disjoint antichains, $A$, $B$, and $C$, with $A$ being the set of minimal points, $C$ the maximal points, and $B$ the points intermediate between $A$ and $C$.
\item
For each edge $e\in E$ there correspond unique points $c_e \in C$ and $a_e\in A$, and vice versa.
\item
We let $B \eqdf V$.
\item
For each edge $e = \{v_1,v_2\}$ and $b\in B$, we have $b < c_e$ iff $b = v_1$ or $b=v_2$, and $a_e < b$ iff this is not the case, i.e., iff $b \ne v_1$ and $b \ne v_2$.  This is illustrated in Figure~\ref{fig:grier}.
\end{itemize}
\begin{figure}
\begin{center}
\input{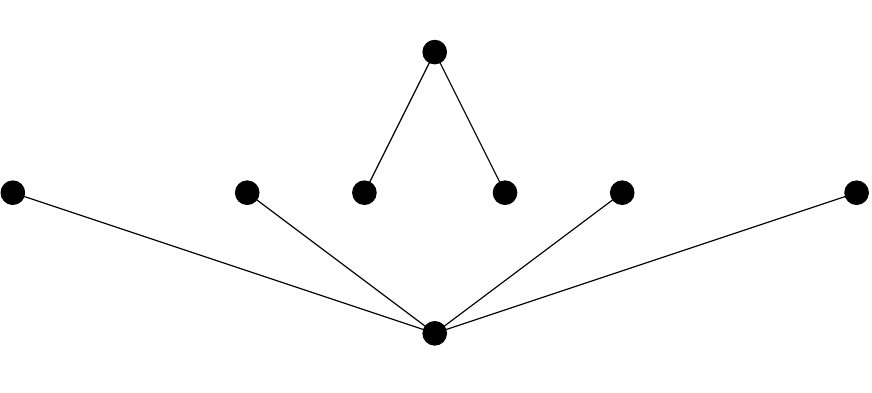_t}
\caption{The $<$ relations in $P$ obtained from the edge $e = \{v_1,v_2\}$ in $G$.}\label{fig:grier}
\end{center}
\end{figure}
This construction can clearly be done in polynomial time, given $G$.

Now we show the outcomes are the same for the two games:  The winning player in the game $G$---Left, say, who may play first or second---can also win in the game $P$ by playing the $B$-points corresponding to the vertices she plays to win in $G$, for as long as Right does the same.  When Right first deviates from this type of play (and he must, because he loses the game $G$), Left can respond as follows:
\begin{itemize}
\item
If Right plays some $v\in B$ adjacent (in $G$) to some other $u\in B$ already played, then Left plays $a_{\{u,v\}}$, resulting in an empty poset.
\item
If Right plays $c_e\in A$ for some $e \in E$, then Left plays $a_e$, leaving an antichain of size $2$.
\item
If Right plays $a_e\in A$ for some $e = \{u,v\}\in E$, then
\begin{itemize}
\item
if either $u$ or $v$ has already been played, then Left plays the other vertex, leaving only an even number of points in $P$, all of them in $A$, and
\item
if neither $u$ nor $v$ has been played, then Left plays $c_e$, leaving $u,v\in B$ and an even number of points in $A$.
\end{itemize}
In the latter case, if Right then plays either $u$ or $v$, then Left plays the other vertex.  Otherwise, if Right plays some $a_{e'}$, then this removes at least one of $u$ and $v$, say, $u$.  Then Left plays some $a_{e''}$ where $e''$ is not incident to $v$, thus removing $v$ (if it still remains) and leaving an even number of points in $P$, all of them in $A$.
\end{itemize}
Thus the winner of $G$ is the same as the winner of $P$.
\end{proof}

Finally, we turn to the complexity of black-white poset games.  The next theorem is the first $\PSPACE$-hardness result for a numeric game.

\begin{theorem}\label{thm:bw-pspace-complete}
Determining the outcome of a black-white poset game is $\PSPACE$-complete.
\end{theorem}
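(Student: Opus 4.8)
\emph{Membership in \PSPACE.}  Every position of a black-white poset game on a poset $P$ is an induced subposet of $P$, describable in space $O(\card P)$, and every play lasts at most $\card P$ moves.  Hence a recursive minimax evaluation of the game DAG runs in polynomial space and computes the outcome; since black-white poset games are numeric (Section~\ref{sec:numeric}), that outcome is always one of $\cL$, $\cR$, $\cP$.

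\emph{\PSPACE-hardness.}  The plan is to reduce from a \PSPACE-complete game of the sort already discussed---$\NK$ (Schaefer~\cite{Schaefer:PSPACE-complete}), its bipartite ``black-white'' version, or a positive-formula game such as $G_\textup{pos}(\mbox{POS CNF})$---by constructing, in polynomial time, a black-white poset $P$ whose coloring forces the play to simulate a given instance $I$.  I would begin from the three-level poset used in the proof of Theorem~\ref{thm:poset-pspace-complete} and add colors: each middle-level point carries the color of the move of $I$ that it simulates (so that legal moves in $P$ alternate in color exactly as turns alternate in $I$), while the ``response'' points on the bottom and top levels are \emph{duplicated} into a black copy and a white copy, so that either player can punish a deviation by the other.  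Attached to this is a ``tallying'' sub-poset built from NIM stacks and antichains---whose black-white values are computed by the simplicity rule (cf.\ Figure~\ref{fig:sample-bw-games})---tuned so that the numeric value $v(P)$ is strictly positive exactly when the simulated play is a win for the player who moves first (whom we identify with Black), and strictly negative otherwise.

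\emph{Correctness.}  The argument will use the numeric-game facts of Section~\ref{sec:numeric}.  The central lemma is that every ``off-script'' option in $P$---a premature move in the tallying gadget, or a move not corresponding to a legal move of $I$---is dominated, in the sense of the dominating rule, by an on-script option, and so may be ignored by an optimal player of either color.  Granting this, an optimal play of $P$ projects onto an optimal play of $I$ and conversely, whence the tallying gadget gives $v(P) > 0$ (equivalently, $o(P) = \cL$) iff $I$ is a yes-instance; polynomial-time computability of $P$ from $I$ completes the reduction.

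\emph{The main obstacle}, which carries essentially all of the work, is the construction, and it differs from the impartial case in two ways.  First, Grier's reduction for Theorem~\ref{thm:poset-pspace-complete} is permitted to be asymmetric---it fixes the winner of $\NK$ and lets that player mimic and respond---whereas here every point has a single fixed color, so the punishment gadgets must be symmetrized and shown to work whichever color wins.  Second, and harder, a black-white poset game is numeric, so it does not suffice to arrange a Boolean ``someone wins'' outcome: one must control the full numeric value of the end position so that its \emph{sign} is exactly the truth value of $I$, which is where the tallying gadget and most of the arithmetic bookkeeping live.  Showing that the colored gadgets simultaneously (i) force the intended line of play for both colors, (ii) respect the rigid coloring and the turn structure of $I$, and (iii) yield the correct final value is the crux.
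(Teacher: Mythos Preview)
Your write-up is an outline, not a proof, and you say as much: the ``main obstacle \ldots\ carries essentially all of the work.''  That is exactly right, and the concrete plan you sketch for that obstacle does not obviously go through.  Starting from the poset of Theorem~\ref{thm:poset-pspace-complete} and coloring the $B$-level while \emph{duplicating} each $a_e$ and $c_e$ into a black copy and a white copy changes the game in ways Grier's argument does not survive.  In the impartial proof, a single exchange like ``Right plays $c_e$, Left replies $a_e$'' empties the board or leaves a trivial residue; with both colors of $a_e$ and $c_e$ present, that exchange leaves the other-colored copies behind, and the resulting position is no longer one whose value you control.  Likewise, your ``off-script moves are dominated'' lemma is asserted but not argued, and in a numeric game dominance is a statement about \emph{values}, so you cannot invoke it until you already know the values of the subpositions---which is the thing you are trying to establish.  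The tallying gadget is left unspecified.

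The paper's reduction is architecturally quite different from what you propose.  It reduces from TQBF rather than \NK, and it does \emph{not} color Grier's three-level poset; instead it builds a bespoke two-level black-white poset with a ``stack'' per variable.  Each stack has a large pool of \emph{waiting nodes} of one color sitting above eight \emph{choice nodes} of the other color, together with \emph{anti-cheat nodes}; the sizes $\card{W_i}=(2n+2-i)M$ are chosen so that a player who skips ahead or stalls is punished by the sheer count of opponent waiting nodes still standing.  A separate clause section (clause nodes, dummy, interrupt) records satisfaction, and a balance section fixes the baseline numeric value at~$\tfrac12$.  The correctness argument is a two-phase case analysis (what each player does while variables are being set, then an explicit value computation of the residual game $J$), not a dominance argument.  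If you want to pursue your route, the first thing to produce is a replacement for the waiting-node mechanism: some colored structure that \emph{forces} the turn order of the source game without relying on impartiality, and whose residual value you can compute exactly.
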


\begin{proof}[Proof sketch]
Membership in $\PSPACE$ is straightforward.  For hardness, we reduce from TQBF\@.  We present the reduction in detail and briefly describe optimal strategies for the winning players, but we do not show correctness.  See \cite{FGMST:black-white} for a full proof.

Suppose we are given a fully-quantified boolean formula $\p$ of the form $\exists x_1 \forall x_2 \exists x_3 \cdots \exists x_{2n-1} \forall x_{2n} \exists x_{2n+1} f(x_1, x_2, \ldots, x_{2n+1})$, where $f = c_1 \wedge c_2 \wedge \cdots \wedge c_m$ is in cnf with clauses $c_1, \ldots, c_m$.  We define a two-level black-white poset (game) $X$ based on $\p$ as follows:
\begin{itemize}
\item $X$ is divided into sections. There is a section (called a \emph{stack}) for each variable, a section for the clauses (the \emph{clause section}), and a section for fine-tuning the balance of the game (\emph{balance section}). 
\item The $i$th stack consists of a set of incomparable \emph{waiting nodes} $W_i$ above (i.e., greater than) a set of incomparable \emph{choice nodes} $C_i$.  We also have a pair of \emph{anti-cheat nodes}, $\alpha_i$ and $\beta_i$, on all stacks except the last stack. For odd $i$, the choice nodes are white, the waiting nodes are black, and the anti-cheat nodes are black. The colors are reversed for even $i$. 
\item The set of choice nodes $C_i$, consists of eight nodes corresponding to all configurations of three bits (i.e., $000, 001, \ldots, 111$), which we call the \emph{left bit}, \emph{assignment bit} and \emph{right bit} respectively.
\item The number of waiting nodes is $\card{W_i} = (2n+2-i)M$, where $M$ is the number of non-waiting nodes in the entire game.  It is important that $\card{W_i} \geq \card{W_{i+1}} + M$. 
\item The anti-cheat node $\alpha_i$ is above nodes in $C_i$ with right bit $0$ and nodes in $C_{i+1}$ with left bit $0$. Similarly, $\beta_i$ is above nodes in $C_i$ with right bit $1$ and nodes in $C_{i+1}$ with left bit $1$. 
\item The \emph{clause section} contains a black \emph{clause node} $b_j$ for each clause $c_j$, in addition to a black \emph{dummy node}. The clause nodes and dummy node are all above a single white \emph{interrupt node}.  The clause node $b_j$ is above a choice node $z$ in $C_i$ if the assignment bit of $z$ is $1$ and $x_i$ appears positively in $c_j$, or if the assignment bit of $z$ is $0$ and $x_i$ appears negatively in $c_j$. 
\item The balance section or \emph{balance game} is incomparable with the rest of the nodes. The game consists of eight black nodes below a white node, which is designed to have numerical value $-7\tfrac{1}{2}$.  All nodes in this section are called \emph{balance nodes}. 
\end{itemize}
The number of nodes is polynomial in $m$ and $n$, so the poset can be efficiently constructed from $\p$.

\begin{figure}
\begin{center}
\input{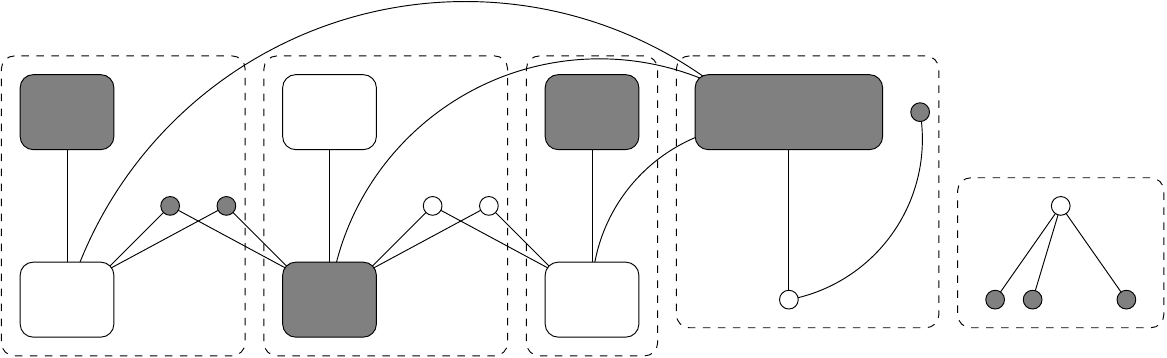_t}
\caption{An example game with three variables ($n = 1$). Circles represent individual nodes, blobs represent sets of nodes, and $\chi$ is the set of clause nodes.  An edge indicates that some node in the lower level is less than some node in the upper level.   The dotted lines divide the nodes into sections (stacks, clause section and balance section).}\label{fig:bw-game}
\end{center}
\end{figure}
A sample construction is shown in Figure~\ref{fig:bw-game}.  The idea is that players take turns playing choice nodes, starting with White, and the assignment bits of the nodes they play constitute an assignment of the variables, $x_1, \ldots, x_{2n+1}$.  The assignment destroys satisfied clause nodes, and it turns out that Black can win if there remains at least one clause node.  The waiting nodes and anti-cheat nodes exist to ensure players take nodes in the correct order.  The interrupt node and dummy node control how much of an advantage a clause node is worth (after the initial assignment), and the balance node ensures the clause node advantage can decide whether White or Black wins the game.  One can show that White (i.e., Right) can force a win when playing first if and only if the formula is true. 

Suppose that White and Black agree to play choice nodes in order, thus producing a truth assignment $a_1,a_2,\ldots$ via the assignment bits.  The other bits are arbitrary, but players would do well to choose each left bit to preserve the remaining anti-cheat node in the previous stack, starting with the second move (so Black preserves a black anti-cheat node in stack~$1$, White an anti-cheat node in stack~$2$, etc.).  This continues until White plays a choice node in $C_{2n+1}$.  At this point, all the variables have been assigned, but there are still points in $X$; we assume the players continue under optimal play. 

Assuming both players stick to the agreement, one can show that White wins (under optimal play) if and only if $\p$ is true.  The rest of the proof in \cite{FGMST:black-white} shows that either player can win if the other player violates the agreement (``cheats'').  Here, we only describe here what to do when your opponent cheats.

We think of the game as having two phases.  The first phase ends when the players have taken at least one node from each $C_i$.  The second phase begins when the first phase ends, and lasts until the end of the game. If the players stick to the agreement as described above, then the last move in the first phase coincides with White setting the truth value $a_{2n+1}$ by playing in $C_{2n+1}$.

\subsubsection{Phase one strategy}

In phase one, our strategy for White is the same as our strategy for Black: play fair (no cheating!) until our opponent cheats.  If our opponent cheats then reply according to the following rules, and continue to reply according to these rules for future moves.  For the following rules, stack~$i$ is the leftmost stack containing waiting nodes of our color (i.e., we are waiting for our opponent to play in stack~$i$).
\begin{itemize}
\item If the opponent moves in $C_j$, then
\begin{itemize}
\item if $j = 2n+1$, then take a waiting node in $W_i$, else
\item if it is their first move in $C_j$, reply in $C_{j+1}$.  Choose a node that saves one of your anti-cheat nodes and destroys your opponent's anti-cheat nodes where possible.  The assignment bit of your reply will not matter.
\item if it is not their first move in $C_j$, take a waiting node in $W_i$. 
\end{itemize}
\item If the opponent takes a waiting node in $W_{j+1}$ then take a node in $W_j$. 
\item If the opponent takes an anti-cheat node, a clause node, the dummy node, the interrupt node, or a balance node then take a waiting node in $W_i$. 
\end{itemize}
Observe that we take a waiting node in $W_j$ if the opponent takes a non-waiting node (this can happen at most $M$ times) or takes a waiting node in $W_{j+1}$.  By construction, $\card{W_j} \geq M + \card{W_{j+1}}$, so we cannot run out of waiting nodes.  Similarly, we only take a node in $C_{j+1}$ when the opponent takes their first node from $C_j$, so we have all eight nodes to choose from when we play in $C_{j+1}$.  In other words, the strategy never asks us to take a node that isn't there; the reply moves are always feasible.

\subsubsection{Phase two strategy}

Let $H$ be the black-white poset game at the start of phase two, and let $k$ be the number of surviving clause nodes in $H$.  Assuming no cheating in phase one, each player took exactly one choice node from each stack in phase one, and since there are more white $C_i$'s, Black has the first move in phase two.  The waiting nodes in $W_i$ are gone because some node in $C_i$ is missing for all $i$.  Similarly, there is at most one anti-cheat node in each stack, since at least one was destroyed by the missing choice nodes on either side.  

Our description of phase two consists of a series of facts:
\begin{itemize}
\item A player can always avoid destroying their own anti-cheat nodes in $H$, and therefore we may assume it is impossible for a player to destroy their own anti-cheat node. This gives us a new, equivalent game $H' \approx H$, where in $H'$ the anti-cheat node in stack~$i$ is incomparable with all the choice nodes in stack~$i+1$, for $i=1,\ldots,2n$. 
\item It is optimal (in $H'$) for White to take the interrupt node after Black's first move, as long as the dummy node is intact. 
\item It is optimal for Black to take a clause node on his first move in $H'$, if one exists. 
\end{itemize}
It follows that the clause nodes are gone by Black's second move in $H'$.  Let $J$ be $H'$ with its clause section removed.  Then every section (i.e., each stack and the balance section) in $J$ is incomparable with the rest of $J$.  This means we can write $J$ as the sum of much simpler games:
\[ J = J_1 + J_2 + \cdots + J_{2n} + J_{2n+1} + B\;, \]
where $J_i$ is the $i$'th stack component of $J$ and $B$ is the balance nodes.

$J_i$ has numerical value $\pm 7$ without an anti-cheat node, and $\pm 6\tfrac{1}{2}$ with an anti-cheat node, where the sign is $(-1)^i$.  Note that the last stack, $i = 2n+1$, does not contain an anti-cheat node, and so its value is $-7$.  The balance section $B$ has value $7\frac{1}{2}$ by construction (see Exercise~\ref{ex:sample-bw-games}), so if all the anti-cheat nodes survive,
\[ v(J) = \sum_{i=1}^{2n+1} v(J_i) + v(B) = 6{\textstyle\frac{1}{2}} \sum_{i=1}^{2n} (-1)^i - 7 + 7{\textstyle\frac{1}{2}} = \frac{1}{2}\;. \]
We call this the \emph{baseline} value.

If $\p$ is true (and Black does not cheat), then White manages to clear away all the clause nodes in phase one.  So then $H' = J + C$, where $C$ is just the interrupt node and dummy node.  Since $v(C) = -\frac{1}{2}$, we get $v(H') = 0$, which is a win for White (because Black plays first in $H'$).  If Black cheats, one can show that she does so at the cost of one of her anti-cheat nodes, which again reduces $v(H')$ to $0$, a win for White.

If $\p$ is false (and White does not cheat), then White cannot clear all the clause nodes in phase one.  Black then plays a clause node to start phase two, after which White plays the interrupt node.  The remaining game is $J$, with no clause section and all anti-cheat nodes, whose value is $\frac{1}{2}$, a win for Black.  If White tries to cheat, then he may be able to destroy all clause nodes, but at the expense of at least one white anti-cheat node.  The clause section subtracts $\frac{1}{2}$, but losing an anti-cheat node adds $\frac{1}{2}$, bringing us back to the baseline $\frac{1}{2}$, a win for Black.
\end{proof}
 
\section{Open questions}
\label{sec:open}

Are there interesting games whose complexity is complete for a subclass of $\PSPACE$?  The natural black-white version of \GenCol\ is complete for the class $\PTIME^{\NP[\log]}$ (that is, the class of decision problems computable in polynomial time with $O(\log n)$ many oracle queries to an $\NP$ language), but the game itself and the reasons for its complexity are not so interesting.  In this version, each uncolored node is reserved (``tinted'') for being colored one or the other color, e.g., some node $u$ can only be colored black, while some other node $v$ can only be colored white, and so on for all the nodes.  Then the outcome of this game depends only on which subgraph (the black-tinted nodes or the white-tinted nodes) contains a bigger independent set.  Given two graphs $G_1$ and $G_2$, the problem of determining whether $G_1$ has a bigger independent set than $G_2$ is known to be complete for $\PTIME^{\NP[\log]}$ \cite{ST:independent-sets}.

Fix a natural number $k>2$.  For poset games of bounded width $k$, defined in Section~\ref{subsubsec:bwpg}, is there an algorithm running in time $o(n^k)$?

Grier's proof that the poset game decision problem is $\PSPACE$-complete (Theorem~\ref{thm:poset-pspace-complete}) constructs posets having three levels, that is, whose maximum chain length is three.  What about two-level games in general?  Those having a single maximum or a single minimum element are easily solved.  What is the complexity of those with more than one minimum and more than one maximum?  Certain subfamilies of two-level posets have g-numbers that show regular patterns and are easily computed, or example, games where each element is above or below at most two elements, as well as parity-uniform games (see Definition~\ref{def:parity-uniform} and Theorem~\ref{thm:parity-uniform}) \cite{FGKT:two-level}.  Despite this, we conjecture that the class of all two-level poset games is $\PSPACE$-complete, but are nowhere near a proof.  Are there larger subfamilies of the two-level poset games that are in $\PTIME$?

A more open-ended goal is to apply the many results and techniques of combinatorial game theory, as we did in Theorem~\ref{thm:bw-pspace-complete}, to more families of games.

Finally, we mention a long-standing open problem about a specific infinite poset game: What is the outcome of the game $\nums^3 - \{(0,0,0)\}$, where $(x_1,x_2,x_3) \le (y_1,y_2,y_3)$ iff $x_i \le y_i$ for all $i\in\{1,2,3\}$?

\bibliography{/Users/steve/Dropbox/research/bib/master}

\end{document}